\definecolor{darkred}{rgb}{0.8,0.1,0.1}
\theoremstyle{plain}
\newtheorem{theo}{Theorem}[section]
\newtheorem{lem}[theo]{Lemma}
\newtheorem{propo}[theo]{Proposition}
\newtheorem{cor}[theo]{Corollary}
\theoremstyle{definition}
\newtheorem{defi}[theo]{Definition}
\newtheorem{ex}[theo]{Example}
\newtheorem{rem}[theo]{Remark}
\numberwithin{equation}{section}
\def\nn{\nonumber}
\def\bbR{\mathbb{R}}
\def\bbC{\mathbb{C}}
\def\bbN{\mathbb{N}}
\def\bbZ{\mathbb{Z}}
\def\bbT{\mathbb{T}}
\def\ii{{\,{\rm i}\,}}
\def\e{{\,\rm e}\,}
\def\Hom{\mathrm{Hom}}
\def\Ext{\mathrm{Ext}}
\def\Imm{\mathrm{Im}}
\def\Ker{\mathrm{Ker}}
\def\id{\mathrm{id}}
\def\dd{\mathrm{d}}
\def\1{\mathbbm{1}}
\def\ozero{\mathbf{0}}
\def\Curv{\mathrm{curv}}
\def\Char{\mathrm{char}}
\newcommand{\ip}[2]{\left\langle #1,#2 \right\rangle}
\newcommand{\Z}[3]{Z_{#1}(#2 ; #3)}
\newcommand{\C}[3]{C_{#1}(#2 ; #3)}
\newcommand{\Ho}[3]{H_{#1}(#2 ; #3)}
\newcommand{\Co}[3]{H^{#1}(#2 ; #3)}
\newcommand{\DCo}[3]{\widehat{H}^{#1}(#2 ; #3)}
\def\sk{\vspace{2mm}}
\title{%
Differential cohomology and \\ locally covariant quantum field theory
}
\author{%
Christian Becker$^{1,2,a}$, Alexander Schenkel$^{3,4,b}$ \ \ and \ \ Richard J.\ Szabo$^{3,c}$\vspace{4mm}\\
{\small $^1$ Institut f\"ur Mathematik, Universit\"at Potsdam,}\\
{\small Karl-Liebknecht-Str.\ 24--25, 14476 Potsdam, Germany.}\vspace{2mm}\\
{\small $^2$ Institut f\"ur Geometrie, Technische Universit\"at Dresden,}\\
{\small Zellescher Weg 12--14, 01069 Dresden, Germany.}\vspace{2mm}\\
{\small $^3$ Department of Mathematics, Heriot-Watt University,}\\
{\small Colin Maclaurin Building, Riccarton, Edinburgh EH14 4AS, United Kingdom.}\vspace{0.5mm}\\
{\small Maxwell Institute for Mathematical Sciences, Edinburgh, United Kingdom.}\vspace{0.5mm}\\
{\small The Higgs Centre for Theoretical Physics, Edinburgh, United Kingdom.}\vspace{2mm}\\
{\small $^4$ School of Mathematical Sciences, University of Nottingham,}\\
{\small University Park, Nottingham NG7 2RD, United Kingdom.}\vspace{3mm}\\
{\footnotesize \texttt{Email:} $^a$ \texttt{becker@math.uni-potsdam.de} ~,~~$^b$  \texttt{aschenkel83@gmail.com} ~,~~$^c$ \texttt{R.J.Szabo@hw.ac.uk} }
 }
\date{November 2016}
\begin{document}

\maketitle

\begin{abstract}
We study differential cohomology on categories of globally hyperbolic Lorentzian manifolds.
The Lorentzian metric allows us to define a natural transformation
whose kernel generalizes Maxwell's equations and fits into a restriction of the
fundamental exact sequences of differential cohomology. We consider
smooth Pontryagin duals of differential cohomology groups, which are subgroups of the character groups.
We prove that these groups fit into smooth duals of the fundamental exact sequences of differential cohomology
and equip them with a natural presymplectic structure derived from a generalized Maxwell Lagrangian.
The resulting presymplectic Abelian groups are quantized using the $\mathrm{CCR}$-functor,
which yields a covariant functor from our categories of globally hyperbolic Lorentzian manifolds 
to the category of $C^\ast$-algebras. We prove that this functor satisfies the causality
and time-slice axioms of locally covariant quantum field theory, but that it violates the locality axiom. 
We show that this violation is precisely due to the fact that our functor has topological subfunctors
describing the Pontryagin duals of certain singular cohomology groups.
As a byproduct, we develop a Fr{\'e}chet-Lie group structure on differential cohomology groups. 
\end{abstract}

\paragraph*{Report no.:} EMPG--14--11
\paragraph*{Keywords:}
algebraic quantum field theory, generalized Abelian gauge theory, differential cohomology
\paragraph*{MSC 2010:} 	81T05, 81T13, 53C08, 55N20

%%%%%%%%%%%%%%%%%%%%%%%%%%%%%%%%%%%%%%%%%%%%%%%%%%%%%%%
%%%%%%%%%%%%%%%%%%%%%%%%%%%%%%%%%%%%%%%%%%%%%%%%%%%%%%%
\newpage

\setcounter{tocdepth}{1}

{\baselineskip=12pt
\tableofcontents
}

\bigskip

\section{\label{sec:intro}Introduction and summary}

In \cite{CheegerSimons}, Cheeger and Simons develop the theory of differential characters,
which can be understood as a differential refinement of singular cohomology $\Co{k}{\,\cdot\,}{\bbZ}$.
For a smooth manifold $M$, a differential character is a group homomorphism $h: \Z{k-1}{M}{\bbZ}\to\bbT$ from smooth singular
$k{-}1$-cycles to the circle group $\bbT=U(1)$, which evaluated on smooth singular 
$k{-}1$-boundaries is given by integrating a differential form, the curvature $\Curv(h)$ of $h$.
This uniquely defines the curvature map $\Curv : \DCo{k}{M}{\bbZ}\to \Omega^k_\bbZ(M)$, which  
is a natural group epimorphism from the group of differential characters to the group of $k$-forms with integral periods.
To each differential character one can assign its characteristic class via a second
natural group epimorphism $\Char : \DCo{k}{M}{\bbZ}\to
\Co{k}{M}{\bbZ}$, which is why one calls differential characters
a differential refinement of $\Co{k}{M}{\bbZ}$.
In addition to their characteristic class and curvature, differential characters carry further information
that is described by two natural group monomorphisms $\iota: \Omega^{k-1}(M)/\Omega^{k-1}_\bbZ(M)\to \DCo{k}{M}{\bbZ}$
and $\kappa : \Co{k-1}{M}{\bbT}\to \DCo{k}{M}{\bbZ}$, which map, respectively, to the kernel of 
$\Char$ and $\Curv$.  The group of differential characters $\DCo{k}{M}{\bbZ}$ together with these
group homomorphisms fits into a natural commutative diagram of exact sequences, see e.g.\ (\ref{eqn:csdiagrm}) in the main text.
It was recognized later in \cite{SimonsSullivan,BaerBecker} that this
diagram uniquely fixes
(up to a unique natural isomorphism) the functors $\DCo{k}{\,\cdot\,}{\bbZ}$. It is therefore natural to abstract these considerations
and to define a differential cohomology theory as a contravariant functor from the category of smooth manifolds to the category
of Abelian groups that fits (via four natural transformations) into
the diagram (\ref{eqn:csdiagrm}).
\sk

Differential cohomology finds its physical applications in field theory and
string theory as an efficient way to describe 
the gauge orbit spaces of generalized or higher Abelian gauge theories. The degree $k=2$ differential cohomology group
$\DCo{2}{M}{\bbZ}$ describes isomorphism classes of pairs $(P,\nabla)$ consisting of a $\bbT$-bundle $P\to M$ 
and a connection $\nabla$ on $P$. Physically this is exactly the gauge orbit space of Maxwell's theory of electromagnetism.
The characteristic class map $\Char : \DCo{2}{M}{\bbZ}\to \Co{2}{M}{\bbZ}$ assigns the first Chern class of $P$
and the curvature map $\Curv : \DCo{2}{M}{\bbZ}\to\Omega^2_\bbZ(M)$ 
assigns (up to a prefactor) the curvature of $\nabla$. The topological trivialization 
$\iota: \Omega^{1}(M)/\Omega^{1}_\bbZ(M)\to \DCo{2}{M}{\bbZ}$ identifies gauge equivalence classes 
of connections on the trivial $\bbT$-bundle $P=M\times \bbT$ and
the map $\kappa : \Co{1}{M}{\bbT}\to \DCo{2}{M}{\bbZ}$ is the inclusion of equivalence classes of
flat bundle-connection pairs $(P,\nabla)$. In degree $k=1$, the differential cohomology group
$\DCo{1}{M}{\bbZ}$ describes $\bbT$-valued smooth functions $C^\infty(M,\bbT)$, where $\Char(h)$ gives the
``winding number'' of the map $h:M\to\bbT$ around the circle and
$\Curv(h) =\frac{1}{2\pi \ii} \, \dd\log h $; this field theory
 is called the  $\sigma$-model on $M$ with target space $\bbT$. In degree $k\geq 3$, the differential cohomology groups
 $\DCo{k}{M}{\bbZ}$ describe isomorphism classes of $k{-}2$-gerbes with connection,
 which are models of relevance in string theory; see e.g.\ \cite{Szabo:2012hc} for a general introduction.
 \sk
 
 The goal of this paper is to understand the classical and quantum field theory 
 described by a differential cohomology theory. Earlier approaches 
  to this subject \cite{Freed:2006ya,Freed:2006yc} have focused on the Hamiltonian approach, which 
 required the underlying Lorentzian manifold $M$ to be ultrastatic, i.e.\ that 
 the Lorentzian metric $g$ on $M=\bbR\times \Sigma_M$ is of the form $g=-\dd t\otimes \dd t + h$, 
 where $h$ is a complete Riemannian metric on $\Sigma_M$ that is independent of time $t$. 
 Here we shall instead work in the framework of locally covariant quantum field theory
 \cite{Brunetti:2001dx,Fewster:2011pe}, which allows us to treat generic globally hyperbolic Lorentzian
 manifolds $M$ without this restriction. In addition, our construction of the quantum field theory
 is functorial in the sense that we shall obtain a covariant functor 
 $\widehat{\mathfrak{A}}^k(\,\cdot\,) : \mathsf{Loc}^m\to C^\ast\mathsf{Alg}$ from a suitable category 
 of $m$-dimensional globally hyperbolic Lorentzian manifolds to the category 
 of $C^\ast$-algebras, which describes  quantized observable 
 algebras of a degree $k$ differential cohomology theory.
 This means that in addition to obtaining for each globally hyperbolic Lorentzian manifold $M$ a
 $C^\ast$-algebra of observables $\widehat{\mathfrak{A}}^k(M)$, we get $C^\ast$-algebra morphisms 
 $\widehat{\mathfrak{A}}^k(f) : \widehat{\mathfrak{A}}^k(M)\to \widehat{\mathfrak{A}}^k(N)$ whenever
 there is a causal isometric embedding $f:M\to N$. This in particular provides a mapping of observables
 from certain subregions of $M$ to $M$ itself, which is known to encode essential physical characteristics of the quantum field 
 theory since the work of Haag and Kastler \cite{Haag:1963dh}.
 \sk
 
 Let us outline the content of this paper: In Section \ref{sec:background} we give a short introduction to
 differential cohomology, focusing both on the abstract approach and the explicit model of Cheeger-Simons differential
 characters. In Section \ref{sec:MWmaps} we restrict any (abstract) degree $k$ differential cohomology theory 
 to a suitable category of $m$-dimensional globally hyperbolic Lorentzian manifolds $\mathsf{Loc}^m$, introduce generalized Maxwell
 maps and study their solution subgroups (generalizing Maxwell's equations in degree $k=2$). The solution subgroups
 are shown to fit into a fundamental commutative
 diagram of exact sequences. We also prove that local generalized Maxwell solutions (i.e.\ solutions given solely in a suitable region
  containing a Cauchy surface) uniquely extend to global ones.
 In Section \ref{sec:duality} we study the character groups of the differential cohomology groups.
 Inspired by \cite{HLZ} we introduce the concept of smooth Pontryagin duals,
 which are certain subgroups of the character groups, and prove that
 they fit into a commutative diagram of
 fundamental exact sequences. We further show that the 
 smooth Pontryagin duals separate points of the differential cohomology groups and that they are given by a covariant
 functor from $\mathsf{Loc}^m$ to the category of Abelian groups.
 In Section \ref{sec:presymp} we equip the smooth Pontryagin duals with a natural presymplectic structure,
 which we derive from a generalized Maxwell Lagrangian by adapting Peierls' construction \cite{Peierls:1952cb}.
 This then leads to a covariant functor $\widehat{\mathfrak{G}}^k(\,\cdot\,)$ from $\mathsf{Loc}^m$
  to the category of presymplectic Abelian groups, which describes the classical field theory 
  associated to a differential cohomology theory. The generalized Maxwell equations
 are encoded by taking a quotient of this functor by the vanishing subgroups of the solution subgroups.
 Due to the fundamental commutative diagram of exact sequences for the smooth Pontryagin duals,
 we observe immediately that the functor $\widehat{\mathfrak{G}}^k(\,\cdot\,)$ has two subfunctors,
 one of which is $\Co{k}{\,\cdot\,}{\bbZ}^\star$, the Pontryagin dual of $\bbZ$-valued singular cohomology, 
 and hence is purely topological. The second subfunctor describes ``curvature observables'' and we show that it has 
 a further subfunctor $\Co{m-k}{\,\cdot\,}{\bbR}^\star$. This gives a more direct and natural
 perspective on the locally covariant topological quantum fields described in \cite{Benini:2013tra} for connections
 on a fixed $\bbT$-bundle.
In Section \ref{sec:quantization} we carry out the canonical quantization of our field theory
by using the $\mathrm{CCR}$-functor for presymplectic Abelian groups developed in
 \cite{Manuceau:1973yn} and also in \cite[Appendix A]{Benini:2013ita}.
 This yields a covariant functor $\widehat{\mathfrak{A}}^k(\,\cdot\,) : \mathsf{Loc}^m\to C^\ast\mathsf{Alg}$
 to the category of $C^\ast$-algebras. We prove that $\widehat{\mathfrak{A}}^k(\,\cdot\,) $ satisfies the causality axiom
 and the time-slice axiom, which have been proposed in \cite{Brunetti:2001dx} to single out 
 physically reasonable models for quantum field theory from all covariant functors $\mathsf{Loc}^m\to C^\ast\mathsf{Alg}$.
 The locality axiom, demanding that $\widehat{\mathfrak{A}}^k(f)$ is injective for any $\mathsf{Loc}^m$-morphism $f$,
 is in general not satisfied (except in the special case $(m,k)=(2,1)$). We prove that for a $\mathsf{Loc}^m$-morphism 
 $f:M\to N$ the morphism $\widehat{\mathfrak{A}}^k(f)$ is injective if and only if 
 the morphism $\Co{m-k}{M}{\bbR}^\star\oplus \Co{k}{M}{\bbZ}^\star \to  \Co{m-k}{N}{\bbR}^\star\oplus \Co{k}{N}{\bbZ}^\star$
 in the topological subtheories is injective, which is in general not the case. This provides
 a precise connection between the violation of the locality axiom and the presence of topological subtheories,
 which generalizes the results obtained in \cite{Benini:2013ita} for gauge theories of connections on fixed $\bbT$-bundles.
 In Appendix \ref{app:frechet} we develop a Fr{\'e}chet-Lie group structure on differential cohomology groups,
 which is required to make precise our construction of the presymplectic structure.

% % % % % % % % % % % % % % % % % % % % % % % % % % % %
% % % % % % % % % % % % % % % % % % % % % % % % % % % %

\section{\label{sec:background}Differential cohomology}

In this section we summarize some background material on (ordinary)
differential cohomology that will be used in this paper.
In order to fix notation we shall first give a condensed summary of singular 
homology and cohomology. We shall then briefly review the Cheeger-Simons differential characters defined in \cite{CheegerSimons}.
The group of differential characters is a particular model of (ordinary) differential cohomology.
Even though our results in the ensuing sections are formulated in a model independent way,
it is helpful to have the explicit model of differential characters in mind.
\sk

\subsection{Singular homology and cohomology}

Let $M$ be a smooth manifold. We denote by $\C{k}{M}{\bbZ}$ the free Abelian group
of {\it smooth singular $k$-chains} in $M$. There exist boundary maps
$\partial_k :  \C{k}{M}{\bbZ}\to \C{k-1}{M}{\bbZ}$, which are homomorphisms of Abelian groups
satisfying $\partial_{k-1}\circ \partial_{k} = 0$. The subgroup $\Z{k}{M}{\bbZ} := \Ker\, \partial_k$
is called the group of {\it smooth singular $k$-cycles} and it has the obvious
subgroup $B_{k}(M;\bbZ):= \Imm \, \partial_{k+1}$ of {\it smooth singular $k$-boundaries}.
The $k$-th {\it smooth singular homology group} of $M$ is defined as the quotient
\begin{flalign}
\Ho{k}{M}{\bbZ} := \frac{\Z{k}{M}{\bbZ}}{B_k(M;\bbZ)} =  \frac{\Ker\, \partial_k}{\Imm\, \partial_{k+1}}~.
\end{flalign}
Notice that $\Ho{k}{\,\cdot\,}{\bbZ}: \mathsf{Man}\to\mathsf{Ab}$ is a covariant functor from 
the category of smooth manifolds to the category of Abelian groups; for a $\mathsf{Man}$-morphism
$f:M\to N$ (i.e.\ a smooth map) the $\mathsf{Ab}$-morphism $\Ho{k}{f}{\bbZ}:\Ho{k}{M}{\bbZ}\to \Ho{k}{N}{\bbZ}$ 
is given by push-forward of smooth $k$-simplices.
In the following we shall often drop the adjective {\it smooth singular} and simply use the words
$k$-chain, $k$-cycle and $k$-boundary. Furthermore, we shall drop the label $_k$ on the boundary maps
$\partial_k$ whenever there is no risk of confusion.
\sk

Given any Abelian group $G$, the Abelian group of {\it $G$-valued $k$-cochains} is defined by
\begin{flalign}
C^{k}(M;G):= \Hom\big(\C{k}{M}{\bbZ},G\big)~,
\end{flalign} 
where $\Hom$ denotes the group homomorphisms.
The boundary maps $\partial_k$ dualize 
to the coboundary maps $\delta^k : C^{k}(M;G) \to C^{k+1}(M;G)\,,~\phi\mapsto \phi \circ \partial_{k+1}$,
which are homomorphisms of Abelian groups and satisfy $\delta^{k+1}\circ\delta^k =0$. 
Elements in $Z^{k}(M;G):=\Ker\,\delta^k$ are called {\it  $G$-valued $k$-cocycles} and elements
in $B^k(M;G):= \Imm\,\delta^{k-1}$ are called {\it $G$-valued $k$-coboundaries}.
The (smooth singular) {\it cohomology group with coefficients in $G$} is defined by
\begin{flalign}
\Co{k}{M}{G} :=\frac{Z^k(M;G)}{B^k(M;G)} =  \frac{\Ker\,\delta^k}{\Imm\,\delta^{k-1}}~.
\end{flalign}
 Notice that $\Co{k}{\,\cdot\,}{G}:\mathsf{Man}\to \mathsf{Ab}$ is a contravariant functor.
\sk

The cohomology group $\Co{k}{M}{G}$  
is in general not isomorphic to $\Hom(\Ho{k}{M}{\bbZ},G)$.
The obvious group  homomorphism $\Co{k}{M}{G} \to \Hom(\Ho{k}{M}{\bbZ},G)$ 
is in general only surjective but not injective. Its kernel 
is described by the {\it universal coefficient theorem for cohomology} (see e.g.\ \cite[Theorem 3.2]{Hatcher}),
which states that there is an exact sequence
\begin{flalign}\label{eqn:universalcoefficient}
\xymatrix{
0 \ \ar[r] & \ \Ext\big(\Ho{k-1}{M}{\bbZ},G \big) \ \ar[r] & \
\Co{k}{M}{G} \ \ar[r] & \ \Hom\big(\Ho{k}{M}{\bbZ},G \big) \ \ar[r] &
\ 0~.
}
\end{flalign}
In this paper the group $G$ will be either $\bbZ$, $\bbR$ or $\bbT = U(1)$ (the circle group).
As $\bbR$ and $\bbT$ are divisible groups, we have $\Ext(\,\cdot\,, \bbR) = \Ext(\,\cdot\, , \bbT) =0$.
Thus $\Co{k}{M}{\bbR} \simeq \Hom(\Ho{k}{M}{\bbZ},\bbR)$ and $\Co{k}{M}{\bbT} \simeq \Hom(\Ho{k}{M}{\bbZ},\bbT)$.
However $\Ext(\,\cdot\,, \bbZ)\neq 0 $ and hence in general $\Co{k}{M}{\bbZ} \not\simeq \Hom(\Ho{k}{M}{\bbZ},\bbZ)$.
Following the notations in \cite{HLZ}, we denote the image of the Abelian group
 $\Ext(\Ho{k-1}{M}{\bbZ},\bbZ) $ under the group homomorphism in (\ref{eqn:universalcoefficient})
by $H^{k}_\mathrm{tor}(M;\bbZ)\subseteq \Co{k}{M}{\bbZ}$ and call it the {\it torsion subgroup}.
We further denote by $H^{k}_\mathrm{free}(M;\bbZ) := H^{k}(M;\bbZ)/H^{k}_\mathrm{tor}(M;\bbZ)$
 the associated {\it free $k$-th cohomology group with coefficients in $\bbZ$}. By  (\ref{eqn:universalcoefficient}),
the Abelian group $H^{k}_\mathrm{free}(M;\bbZ)$ is isomorphic to $\Hom(\Ho{k}{M}{\bbZ},\bbZ)$
and, by using the inclusion $\bbZ\hookrightarrow\bbR$, we can regard $H^{k}_\mathrm{free}(M;\bbZ)$
as a lattice in $H^{k}(M;\bbR)$.

\subsection{Differential characters}

Let $M$ be a smooth manifold and denote by $\Omega^k(M)$
the $\bbR$-vector space of smooth $k$-forms on $M$.
\begin{defi}
The Abelian group of degree $k$ {\bf differential characters}\footnote{
We use the conventions in \cite{BaerBecker} for the degree $k$ of a differential character,
which is shifted by $+1$ compared to the original definition \cite{CheegerSimons}.
} on $M$, with $1\leq k\in \bbN$, is defined by
\begin{flalign}
\DCo{k}{M}{\bbZ} :=
 \Big\{ h\in \Hom\big(\Z{k-1}{M}{\bbZ},\bbT\big) \ : \ h\circ \partial \in\Omega^k(M)\Big\}~.
\end{flalign}
By the notation $h\circ \partial \in\Omega^k(M)$ we mean that there exists $\omega_h\in\Omega^k(M)$ such
that
\begin{flalign}\label{eqn:charcoboundary}
h(\partial c) = \exp\Big(2\pi \ii\int_c\, \omega_h\Big)
\end{flalign}
for all $c\in \C{k}{M}{\bbZ}$. We further define $\DCo{k}{M}{\bbZ} := \Co{k}{M}{\bbZ}$ for all $0\geq k\in\bbZ$.
\end{defi}

The Abelian group structure on $\DCo{k}{M}{\bbZ}$ is defined pointwise.  As it will simplify
the notations throughout this paper, we shall use an additive notation for the group structure
on $\DCo{k}{M}{\bbZ}$, even though this seems counterintuitive from the perspective 
of differential characters. Explicitly, we define the group operation
$+$ on $\DCo{k}{M}{\bbZ}$ by $(h + l)(z) := h(z)\,l(z)$
for all $h,l\in \DCo{k}{M}{\bbZ}$ and $z\in\Z{k-1}{M}{\bbZ} $. 
The unit element $\ozero\in \DCo{k}{M}{\bbZ}$ is the constant homomorphism $\ozero(z)= 1\in \bbT$ 
and the inverse $-h$ of $h\in \DCo{k}{M}{\bbZ}$ is defined by $(-h)(z)
:= \big( h(z)\big)^{-1}$ for all $z\in\Z{k-1}{M}{\bbZ}$.
\sk

There are various interesting group homomorphisms with the Abelian group $\DCo{k}{M}{\bbZ}$ as target or source.
The first one is obtained by observing that the form $\omega_h\in\Omega^k(M)$ in (\ref{eqn:charcoboundary}) is 
uniquely determined for any $h\in \DCo{k}{M}{\bbZ} $. Furthermore, $\omega_h$ is closed, i.e.\ $\dd \omega_h=0$ with
$\dd$ being the exterior differential, and it has integral periods, i.e.\
$\int_z\, \omega_h\in\bbZ$ for all $z\in \Z{k}{M}{\bbZ}$. We denote the Abelian group of closed $k$-forms with integral periods by
$\Omega^k_\bbZ(M)\subseteq \Omega_\dd^k(M) \subseteq \Omega^k(M)$,
where $\Omega_\dd^k(M)$ is the subspace
of closed $k$-forms. Hence we have found a group homomorphism
\begin{flalign}\label{eqn:curv}
\Curv \, : \, \DCo{k}{M}{\bbZ} \ \longrightarrow \ \Omega^k_\bbZ(M)
\  , \qquad h \ \longmapsto \ \Curv(h) = \omega_h~,
\end{flalign}
which we call the {\it curvature}.
\sk

We can also associate to each differential character its characteristic class, which is an element in $\Co{k}{M}{\bbZ}$.
There exists a group homomorphism 
\begin{flalign}\label{eqn:char}
\Char \, : \, \DCo{k}{M}{\bbZ} \ \longrightarrow \ \Co{k}{M}{\bbZ}
\end{flalign}
called the {\it characteristic class}, which is constructed as follows:
Since $\Z{k-1}{M}{\bbZ}$ is a free $\bbZ$-module, any $h\in\DCo{k}{M}{\bbZ}$ has a real lift
$\tilde h \in \Hom(\Z{k-1}{M}{\bbZ},\bbR)$ such that $h(z) = \exp (2\pi\ii \tilde h(z) )$ for all
$z\in \Z{k-1}{M}{\bbZ}$. We define a real valued $k$-cochain by
$\mu^{\tilde h}: \C{k}{M}{\bbZ} \to \bbR\,,~c\mapsto \int_c\, \Curv(h) - \tilde h(\partial c)$.
It can be easily checked that $\mu^{\tilde h}$ is a $k$-cocycle, i.e.\ $\delta \mu^{\tilde h} =0$,
and that it takes values in $\bbZ$. We define the class
$\Char(h) := [\mu^{\tilde h}] \in \Co{k}{M}{\bbZ}$ and note that it is independent of the
choice of lift $\tilde h$ of $h$.
\sk

It can be shown that the curvature and characteristic class maps are surjective, however, 
in general they are not injective \cite{CheegerSimons}. This means that differential characters have further properties
besides their curvature and characteristic class. In order to characterize these properties
we shall define two further homomorphisms of Abelian groups with $\DCo{k}{M}{\bbZ}$ as target:
Firstly, the {\it topological trivialization} is the group homomorphism
\begin{flalign}
\iota \, : \, \frac{\Omega^{k-1}(M)}{\Omega^{k-1}_\bbZ(M) } \
\longrightarrow \ \DCo{k}{M}{\bbZ}
\end{flalign}
defined by
$\iota([\eta])(z) := \exp(2\pi\ii\int_z\, \eta)$ for all $[\eta]\in \Omega^{k-1}(M)/ \Omega^{k-1}_\bbZ(M)$ and $z\in \Z{k-1}{M}{\bbZ}$. This expression is well-defined since by definition
$\int_z\, \eta\in\bbZ$ for all $\eta\in \Omega^{k-1}_\bbZ(M)$ and $z\in \Z{k-1}{M}{\bbZ}$.
Secondly, the {\it inclusion of flat classes} is the group homomorphism
\begin{flalign}
\kappa \, : \, \Co{k-1}{M}{\bbT} \ \longrightarrow \ \DCo{k}{M}{\bbZ}
\end{flalign}
defined by
$\kappa(u)(z) := \ip{u}{[z]}$ for all $u\in  \Co{k-1}{M}{\bbT}$ and $z\in \Z{k-1}{M}{\bbZ}$, where $[z]\in \Ho{k-1}{M}{\bbZ}$ is the homology class of $z$ 
and $\ip{\,\cdot\,}{\,\cdot\,}$ is the pairing induced by the isomorphism
$\Co{k-1}{M}{\bbT}\simeq \Hom(\Ho{k-1}{M}{\bbZ},\bbT)$ given by the universal coefficient theorem
(\ref{eqn:universalcoefficient}). (Recall that $\bbT$ is divisible.)
\sk

As shown in \cite{CheegerSimons,BaerBecker}, the various group homomorphisms defined above fit into
a commutative diagram of short exact sequences.
\begin{theo}\label{theo:csdiagram}
The following diagram of homomorphisms of Abelian groups commutes and its rows and columns are exact sequences:
\begin{flalign}\label{eqn:csdiagrm}
\xymatrix{
 & 0 \ar[d]& 0\ar[d] & 0\ar[d] & \\
 0 \ \ar[r] & \
 \frac{\Co{k-1}{M}{\bbR}}{H^{k-1}_\mathrm{free}(M;\bbZ)} \ \ar[d]
 \ar[r] & \ \frac{\Omega^{k-1}(M)}{\Omega^{k-1}_\bbZ(M)} \
 \ar[d]^-{\iota} \ar[r]^-{\dd} & \ \dd \Omega^{k-1}(M) \ \ar[r]\ar[d]
 & \ 0\\
 0 \ \ar[r] & \ \Co{k-1}{M}{\bbT} \ \ar[d] \ar[r]^-{\kappa} & \
 \DCo{k}{M}{\bbZ} \ \ar[d]^-{\Char} \ar[r]^-{\Curv} & \ar[d] \
 \Omega^k_\bbZ(M) \ \ar[r] & \ 0\\
\ 0 \ \ar[r] & \ H^{k}_\mathrm{tor}(M;\bbZ) \ \ar[d]\ar[r] & \
\Co{k}{M}{\bbZ} \ \ar[d] \ar[r] & \ H^k_\mathrm{free}(M;\bbZ) \
\ar[d]\ar[r] & \ 0\\
 & 0 & 0 & 0 &
}
\end{flalign}
\end{theo}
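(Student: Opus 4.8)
The plan is to establish the commutative diagram (\ref{eqn:csdiagrm}) one structural ingredient at a time, treating the three rows first, then the three columns, and finally checking commutativity of every square. Many of the building blocks are in fact already available from the constructions carried out above: the maps $\iota$, $\kappa$, $\Char$ and $\Curv$ have been defined, and we know $\Char$ and $\Curv$ are surjective with $\iota$ and $\kappa$ mapping into the respective kernels. So the content of the theorem is precisely to upgrade these statements to exactness and to identify the subquotients appearing in the remaining entries.

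First I would treat the middle row. Surjectivity of $\Curv$ was asserted above, so the real work is to show $\Ker\,\Curv = \Imm\,\kappa$ and that $\kappa$ is injective. For the inclusion $\Imm\,\kappa\subseteq\Ker\,\Curv$: if $u\in\Co{k-1}{M}{\bbT}$ then $\kappa(u)$ evaluated on $\partial c$ depends only on the homology class $[\partial c]=0$, hence is trivial, so $\omega_{\kappa(u)}=0$. Conversely, if $\Curv(h)=0$ then by (\ref{eqn:charcoboundary}) $h(\partial c)=1$ for all $c$, so $h$ descends to a homomorphism $\Ho{k-1}{M}{\bbZ}\to\bbT$, which under the universal coefficient isomorphism $\Co{k-1}{M}{\bbT}\simeq\Hom(\Ho{k-1}{M}{\bbZ},\bbT)$ is exactly $\kappa(u)$ for a unique $u$; this simultaneously gives injectivity of $\kappa$. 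The bottom row is the free–torsion decomposition of $\Co{k}{M}{\bbZ}$ that was set up explicitly above, so there is nothing to prove there beyond quoting it. The top row: exactness at $\Omega^{k-1}(M)/\Omega^{k-1}_\bbZ(M)$ under $\dd$ requires identifying $\Ker\,\dd$ with $\Omega^{k-1}_\dd(M)/\Omega^{k-1}_\bbZ(M)$, and then the de Rham theorem identifies $\Omega^{k-1}_\dd(M)/\Omega^{k-1}_\bbZ(M)\simeq \Co{k-1}{M}{\bbR}/H^{k-1}_\mathrm{free}(M;\bbZ)$ since $\Omega^{k-1}_\bbZ(M)$ maps onto the lattice $H^{k-1}_\mathrm{free}(M;\bbZ)$ and the exact forms are exactly the kernel of the period map; surjectivity of $\dd$ onto $\dd\Omega^{k-1}(M)$ is a tautology.

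Next I would do the columns. The right column $0\to\dd\Omega^{k-1}(M)\to\Omega^k_\bbZ(M)\to H^k_\mathrm{free}(M;\bbZ)\to 0$ is again de Rham theory: a closed integral $k$-form has a well-defined de Rham class in $H^k(M;\bbR)$ lying in the lattice $H^k_\mathrm{free}(M;\bbZ)$, the map is onto that lattice, and its kernel is the exact forms. The left column is the universal coefficient / Bockstein comparison between $\Co{k-1}{M}{\bbR}/H^{k-1}_\mathrm{free}$, $\Co{k-1}{M}{\bbT}$ and $H^k_\mathrm{tor}(M;\bbZ)$, obtained from the long exact cohomology sequence of the coefficient sequence $0\to\bbZ\to\bbR\to\bbT\to 0$: the connecting (Bockstein) map $\Co{k-1}{M}{\bbT}\to\Co{k}{M}{\bbZ}$ has image precisely $H^k_\mathrm{tor}(M;\bbZ)$ (the kernel of $\Co{k}{M}{\bbZ}\to\Co{k}{M}{\bbR}$), and its kernel is the image of $\Co{k-1}{M}{\bbR}$, which is $\Co{k-1}{M}{\bbR}/H^{k-1}_\mathrm{free}(M;\bbZ)$ after dividing by the image of $\Co{k-1}{M}{\bbZ}$. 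The middle column $0\to\Omega^{k-1}(M)/\Omega^{k-1}_\bbZ(M)\xrightarrow{\iota}\DCo{k}{M}{\bbZ}\xrightarrow{\Char}\Co{k}{M}{\bbZ}\to 0$ is the main diagram-specific column: injectivity of $\iota$ follows because $\iota([\eta])=\ozero$ forces $\int_z\eta\in\bbZ$ for all cycles $z$, i.e.\ $\eta\in\Omega^{k-1}_\bbZ(M)$; $\Imm\,\iota\subseteq\Ker\,\Char$ by a direct computation of $\mu^{\widetilde{\iota([\eta])}}$ using the lift $z\mapsto\int_z\eta$, which gives a coboundary; the reverse inclusion $\Ker\,\Char\subseteq\Imm\,\iota$ is where one uses that a character with trivial characteristic class admits, after adjusting by a $k$-coboundary, an integral lift, whence it is of the form $\iota([\eta])$ with $\dd\eta=\Curv(h)$; surjectivity of $\Char$ is quoted from above.

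Finally, commutativity of the nine squares is a routine check on the level of cochains: the two top squares commute because $\dd\circ\iota$ computes the curvature of $\iota([\eta])$, which is $\dd\eta$, matching the composite through $\dd\Omega^{k-1}(M)$; the two bottom squares commute because $\Char\circ\kappa$ is the Bockstein of $u$ landing in $H^k_\mathrm{tor}$, and the free parts are compatible with $\Curv$ via de Rham; the remaining squares are definitional. The main obstacle I anticipate is the inclusion $\Ker\,\Char\subseteq\Imm\,\iota$ in the middle column — establishing that vanishing of the characteristic class, which is an a priori cohomological condition on a cocycle $\mu^{\widetilde h}$, can be promoted to the statement that $h$ itself differs from a differential form character by nothing at all; this uses the freeness of $\Z{k-1}{M}{\bbZ}$ to choose the lift $\widetilde h$ compatibly with the coboundary realizing $[\mu^{\widetilde h}]=0$, and is the one place where one genuinely manipulates the Cheeger–Simons model rather than invoking de Rham or universal coefficients. (Of course, since the theorem is quoted from \cite{CheegerSimons,BaerBecker}, one could alternatively simply cite it; the above is the structure of a self-contained argument.)
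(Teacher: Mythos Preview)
The paper does not actually prove this theorem: it states the result and cites \cite{CheegerSimons,BaerBecker} for the proof. Your proposal therefore goes well beyond what the paper itself does, and you correctly acknowledge this at the end.

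As a self-contained sketch your outline is sound and matches the standard argument found in the cited references. The identifications for the top row and right column are straightforward de~Rham theory, the bottom row is the universal coefficient decomposition already set up in the paper, the left column is exactly the Bockstein sequence for $0\to\bbZ\to\bbR\to\bbT\to 0$, and your treatment of the middle row via the universal coefficient isomorphism $\Co{k-1}{M}{\bbT}\simeq\Hom(\Ho{k-1}{M}{\bbZ},\bbT)$ is correct. You also correctly isolate the one step that requires genuine work with the Cheeger--Simons model, namely $\Ker\,\Char\subseteq\Imm\,\iota$ in the middle column, and your indication of how to handle it (choose a real lift $\tilde h$ so that the integral cocycle $\mu^{\tilde h}$ is an actual coboundary $\delta\nu$, then $\tilde h - \nu\circ\partial$ integrates a $(k{-}1)$-form) is the right mechanism. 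So: your proposal is correct, and more detailed than the paper's own treatment, which is simply a citation.
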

\sk

The Abelian group of differential characters $\DCo{k}{M}{\bbZ}$, as well as all other Abelian groups 
appearing in the diagram (\ref{eqn:csdiagrm}), are given by contravariant functors from the 
category of smooth manifolds $\mathsf{Man}$ to the category of Abelian groups $\mathsf{Ab}$. 
The morphisms appearing in the diagram (\ref{eqn:csdiagrm}) are natural transformations.

\begin{ex}
The Abelian groups of differential characters $\DCo{k}{M}{\bbZ}$ can be interpreted as gauge orbit spaces 
of (higher) Abelian gauge theories, see e.g.\ \cite[Examples 5.6--5.8]{BaerBecker} for mathematical 
details and \cite{Szabo:2012hc} for a discussion of physical applications.
\begin{enumerate}
\item In degree $k=1$, the differential characters $\DCo{1}{M}{\bbZ}$
  describe smooth $\bbT$-valued functions on $M$,
i.e.\ $\DCo{1}{M}{\bbZ}\simeq C^\infty(M,\bbT)$. The characteristic class in this case is the ``winding number''
of a smooth map $h\in C^\infty(M,\bbT)$  around the circle, while the curvature is $\Curv(h)=\frac{1}{2\pi\ii}\,\dd\log h$.
Physically the group $\DCo{1}{M}{\bbZ}$ describes the $\sigma$-model
on $M$ with target space the circle $\bbT$.

\item In degree $k=2$, the differential characters $\DCo{2}{M}{\bbZ}$ describe isomorphism classes
of $\bbT$-bundles with connections $(P,\nabla)$ on $M$.
The holonomy map associates to any one-cycle $z\in\Z{1}{M}{\bbZ}$ a group element $h(z)\in\bbT$.
This defines a differential character $h\in \DCo{2}{M}{\bbZ}$, whose
curvature is $\Curv(h) = -\frac{1}{2\pi \ii} \, F_\nabla$ 
and whose characteristic class is the first Chern class of $P$.
The topological trivialization $\iota:\Omega^{1}(M)/ \Omega^{1}_\bbZ(M) \to \DCo{2}{M}{\bbZ}$ assigns to
gauge equivalence classes of connections on the trivial $\bbT$-bundle their holonomy maps.
The inclusion of flat classes $\kappa: \Co{1}{M}{\bbT}\to \DCo{2}{M}{\bbZ}$ assigns to isomorphism classes of
flat bundle-connection pairs $(P,\nabla)$ their holonomy
maps. Physically the group $\DCo{2}{M}{\bbZ}$ describes the
 ordinary Maxwell theory of electromagnetism.

\item In degree $k\geq 3$, the differential characters $\DCo{k}{M}{\bbZ}$ describe
isomorphism classes of $k{-}2$-gerbes with connections, see e.g.\
\cite{Hitchin} for the case of usual gerbes, i.e.\ $k=3$. 
These models are examples of higher Abelian gauge theories where the
curvature is given by a $k$-form, and they physically arise in string theory, see e.g.\ \cite{Szabo:2012hc}.
\end{enumerate}
\end{ex}

\subsection{Differential cohomology theories}

The functor describing Cheeger-Simons differential characters is a specific 
model of what is called a differential cohomology theory. There are also other explicit models
 for differential cohomology, as for example those obtained in smooth Deligne cohomology (see e.g. \cite{Brylinski}), the 
de Rham-Federer approach \cite{HLZ} making use of de Rham currents (i.e.\ distributional differential forms),
 and the seminal Hopkins-Singer model \cite{HS} which is based on differential cocycles and 
 the homotopy theory of differential function spaces. 
 These models also fit into the commutative diagram of exact sequences in (\ref{eqn:csdiagrm}). 
 The extent to which (\ref{eqn:csdiagrm}) determines the functors $\DCo{k}{\,\cdot\,}{\bbZ}$
has been addressed in \cite{SimonsSullivan,BaerBecker} and it turns
out that they are uniquely determined (up to a unique natural isomorphism).
This motivates the following
\begin{defi}[\cite{BaerBecker}]
A {\bf differential cohomology theory} is a contravariant functor $\widetilde{H}^\ast(\,\cdot\,,\bbZ): \mathsf{Man} \to \mathsf{Ab}^\bbZ$
from the category of smooth manifolds to the category of $\bbZ$-graded Abelian groups, together with four natural transformations
\begin{itemize}
\item $\widetilde{\Curv} : \widetilde{H}^\ast(\,\cdot\,;\bbZ) \Rightarrow \Omega^\ast_\bbZ(\,\cdot\,)$ (called {\it curvature})
\item $\widetilde{\Char} : \widetilde{H}^\ast(\,\cdot\,;\bbZ) \Rightarrow H^\ast(\,\cdot\,;\bbZ) $ (called {\it characteristic class})
\item $\widetilde{\iota} : \Omega^{\ast-1}(\,\cdot\,)/ \Omega^{\ast-1}_\bbZ(\,\cdot\,) \Rightarrow \widetilde{H}^\ast(\,\cdot\,;\bbZ)  $
(called {\it topological trivialization})
\item $\widetilde{\kappa} : H^{\ast-1}(\,\cdot\,;\bbT) \Rightarrow \widetilde{H}^\ast(\,\cdot\,;\bbZ)$ (called {\it inclusion of flat classes})
\end{itemize}
such that for any smooth manifold $M$ the following diagram commutes and has exact rows and columns:
\begin{flalign}\label{eqn:absdiagrm}
\xymatrix{
 & 0 \ar[d]& 0\ar[d] & 0\ar[d] & \\
 0 \ \ar[r] & \
 \frac{\Co{\ast-1}{M}{\bbR}}{H^{\ast-1}_\mathrm{free}(M;\bbZ)} \
 \ar[d] \ar[r] & \ \frac{\Omega^{\ast-1}(M)}{\Omega^{\ast-1}_\bbZ(M)}
 \ \ar[d]^-{\widetilde{\iota}} \ar[r]^-{\dd} & \ \dd
 \Omega^{\ast-1}(M) \ \ar[r]\ar[d] & \ 0\\
 0 \ \ar[r] & \ \Co{\ast-1}{M}{\bbT} \ \ar[d]
 \ar[r]^-{\widetilde{\kappa}} & \ \widetilde{H}^{\ast}(M;\bbZ) \
 \ar[d]^-{\widetilde{\Char}} \ar[r]^-{\widetilde{\Curv}} & \ar[d] \
 \Omega^\ast_\bbZ(M) \ \ar[r] & \ 0\\
 0 \ar[r] & \ H^\ast_\mathrm{tor}(M;\bbZ) \ \ar[d]\ar[r] & \
 \Co{\ast}{M}{\bbZ} \ \ar[d] \ar[r] & \ H^\ast_\mathrm{free}(M;\bbZ) \
 \ar[d]\ar[r] & \ 0\\
 & 0 & 0 & 0 &
}
\end{flalign}
\end{defi}

\begin{theo}[{\cite[Theorems 5.11 and 5.14]{BaerBecker}}]~\\
For any differential cohomology theory
 $(\widetilde{H}^\ast(\,\cdot\,;\bbZ),\widetilde{\Curv},\widetilde{\Char},\widetilde{\iota},\widetilde{\kappa})$ 
 there exists a unique natural isomorphism $\Xi : \widetilde{H}^\ast(\,\cdot\,;\bbZ) \Rightarrow \DCo{\ast}{\,\cdot\,}{\bbZ}$
 to differential characters such that
 \begin{flalign}
 \Xi\circ \widetilde{\iota} = \iota \  , \qquad \Xi\circ
 \widetilde{\kappa} = \kappa \  , \qquad \Curv\circ \Xi =
 \widetilde{\Curv} \  , \qquad \Char\circ \Xi = \widetilde{\Char}~.
 \end{flalign}
\end{theo}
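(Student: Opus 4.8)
\emph{Proof plan.} The plan is to separate the statement into \emph{existence} and \emph{uniqueness} of a natural transformation $\Xi:\widetilde{H}^\ast(\,\cdot\,;\bbZ)\Rightarrow\DCo{\ast}{\,\cdot\,}{\bbZ}$ intertwining $(\widetilde\Curv,\widetilde\Char,\widetilde\iota,\widetilde\kappa)$ with $(\Curv,\Char,\iota,\kappa)$, and to observe that any such $\Xi$ is then \emph{automatically} a natural isomorphism: the conditions $\Xi\circ\widetilde\iota=\iota$ and $\Char\circ\Xi=\widetilde\Char$ say precisely that, for each $M$, $\Xi_M$ is a morphism of the short exact middle columns of (\ref{eqn:absdiagrm}) and (\ref{eqn:csdiagrm}) restricting to the identity on $\Omega^{\ast-1}(M)/\Omega^{\ast-1}_\bbZ(M)$ and on $\Co{\ast}{M}{\bbZ}$, so the five lemma forces $\Xi_M$ to be bijective. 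Thus the real work is the uniqueness and the existence of a \emph{compatible} $\Xi$.

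\emph{Uniqueness.} If $\Xi,\Xi'$ are both compatible, then $\Delta:=\Xi-\Xi'$ is a natural transformation of $\mathsf{Ab}$-valued functors with $\Curv\circ\Delta=0$, $\Char\circ\Delta=0$, $\Delta\circ\widetilde\iota=0$, $\Delta\circ\widetilde\kappa=0$. From the exact rows and columns of (\ref{eqn:csdiagrm}) one identifies $\Ker\,\Curv\cap\Ker\,\Char$ inside $\DCo{\ast}{\,\cdot\,}{\bbZ}$ with $\iota\big(\Omega^{\ast-1}_\dd(\,\cdot\,)/\Omega^{\ast-1}_\bbZ(\,\cdot\,)\big)\cong\Co{\ast-1}{\,\cdot\,}{\bbR}/H^{\ast-1}_\mathrm{free}(\,\cdot\,;\bbZ)=:J^{\ast-1}(\,\cdot\,)$, a homotopy-invariant functor; hence $\Delta$ takes values in $J^{\ast-1}(\,\cdot\,)$ and annihilates $\Imm\,\widetilde\iota+\Imm\,\widetilde\kappa$. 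Since $\Ker\,\widetilde\Char=\Imm\,\widetilde\iota$ and $\widetilde\Char(\Imm\,\widetilde\kappa)=H^\ast_\mathrm{tor}(\,\cdot\,;\bbZ)$, the map $\widetilde\Char$ descends to a natural isomorphism $\widetilde{H}^\ast(\,\cdot\,;\bbZ)/(\Imm\,\widetilde\iota+\Imm\,\widetilde\kappa)\xrightarrow{\ \sim\ }H^\ast_\mathrm{free}(\,\cdot\,;\bbZ)$, and $\Delta$ factors through a natural transformation $\bar\Delta:H^\ast_\mathrm{free}(\,\cdot\,;\bbZ)\Rightarrow J^{\ast-1}(\,\cdot\,)$ of homotopy-invariant functors. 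For $k\ge 2$ I would show $\bar\Delta=0$ by a model-space argument: given $u\in H^k_\mathrm{free}(M;\bbZ)$ on an $m$-manifold $M$, lift $u$ to $\Co{k}{M}{\bbZ}$, classify the lift by a map $M\to K(\bbZ,k)$, and by cellular approximation factor it as $M\xrightarrow{\,g\,}P\hookrightarrow K(\bbZ,k)$ with $P$ a smooth compact manifold homotopy equivalent to a finite skeleton of $K(\bbZ,k)$ of dimension $>k-1$; such a $P$ is $(k{-}1)$-connected, so $\Co{k-1}{P}{\bbR}=0$ and $J^{k-1}(P)=0$, whence $\bar\Delta_M(u)=g^\ast\big(\bar\Delta_P(u_P)\big)=0$. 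The case $k=1$, where $\DCo{1}{\,\cdot\,}{\bbZ}\cong C^\infty(\,\cdot\,,\bbT)$, follows by a direct variant, using that every class in $\Co{1}{\,\cdot\,}{\bbZ}$ is pulled back from $S^1=K(\bbZ,1)$ and that $\bar\Delta$ is additive. Hence $\Delta=0$, i.e.\ $\Xi=\Xi'$.

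\emph{Existence.} First one establishes that every differential cohomology theory $\widetilde{H}^\ast(\,\cdot\,;\bbZ)$ satisfies Mayer--Vietoris; this follows from the exact sequences of (\ref{eqn:absdiagrm}) together with Mayer--Vietoris for $\Omega^\ast_\bbZ(\,\cdot\,)$, $\Omega^{\ast-1}(\,\cdot\,)/\Omega^{\ast-1}_\bbZ(\,\cdot\,)$, $\Co{\ast}{\,\cdot\,}{\bbZ}$ and $\Co{\ast-1}{\,\cdot\,}{\bbT}$ by the long-exact-sequence/five-lemma method. Consequently, for a good open cover $\{U_\alpha\}$ of $M$, the group $\widetilde{H}^k(M;\bbZ)$ is computed by the \v{C}ech double complex with entries $\widetilde{H}^{k-j}(U_{\alpha_0\cdots\alpha_j};\bbZ)$, and on these contractible pieces the diagram (\ref{eqn:absdiagrm}) degenerates — the terms $\Co{\ast-1}{\,\cdot\,}{\bbT}$, $\Co{\ast}{\,\cdot\,}{\bbZ}$ vanish and $\Omega^{\ast-1}_\bbZ=\Omega^{\ast-1}_\dd$ — so that $\widetilde\iota$ identifies $\widetilde{H}^q$ with $\Omega^{q-1}(\,\cdot\,)/\Omega^{q-1}_\bbZ(\,\cdot\,)$ for $q\ge 1$ (and $\widetilde H^0=\Co{0}{\,\cdot\,}{\bbZ}$), and likewise for $\DCo{q}{\,\cdot\,}{\bbZ}$. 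A class $x$ is therefore represented by a \v{C}ech hypercocycle made of local potentials (differential forms) on the overlaps together with a $\bbZ$-valued \v{C}ech $k$-cocycle representing $\widetilde\Char(x)$ — that is, precisely by a \v{C}ech--Deligne cocycle for the smooth Deligne complex; transporting it under the canonical isomorphism between smooth Deligne cohomology and differential characters defines $\Xi(x)\in\DCo{k}{M}{\bbZ}$. Independence of the choices and of the cover (pass to a common refinement, where the degeneration argument re-applies), naturality, and the compatibilities $\Xi\circ\widetilde\iota=\iota$, $\Curv\circ\Xi=\widetilde\Curv$, $\Char\circ\Xi=\widetilde\Char$, $\Xi\circ\widetilde\kappa=\kappa$ are then checked directly on \v{C}ech--Deligne data; $\Xi$ is a natural isomorphism by the five-lemma observation above.

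\emph{Main obstacle.} The crux in both halves is that $\DCo{\ast}{\,\cdot\,}{\bbZ}$ is \emph{not} a sheaf, so agreement on a good cover does not entail global agreement (already $\DCo{2}{S^1}{\bbZ}\neq 0$, while $\DCo{2}{\,\cdot\,}{\bbZ}$ vanishes on every interval). The residual freedom that must be controlled therefore lives in the ``Jacobian-torus'' functor $J^{\ast-1}(\,\cdot\,)\cong\Hom\big(\Ho{\ast-1}{\,\cdot\,}{\bbZ},\bbT\big)/\Hom\big(\Ho{\ast-1}{\,\cdot\,}{\bbZ},\bbZ\big)$ — in the uniqueness half as the vanishing of $\bar\Delta$, in the existence half as the requirement that the \v{C}ech hypercocycle of $x$ be recovered canonically across refinements. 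It is exactly at this point that one genuinely uses that the outer terms of (\ref{eqn:absdiagrm}) are the \emph{standard} functors $\Co{\ast-1}{\,\cdot\,}{\bbR}/H^{\ast-1}_\mathrm{free}$, $\Co{\ast}{\,\cdot\,}{\bbZ}$ and $\Omega^{\ast-1}(\,\cdot\,)/\Omega^{\ast-1}_\bbZ(\,\cdot\,)$, rather than abstract functors with the same formal behaviour; the passage to finite-dimensional smooth approximations of the highly connected spaces $K(\bbZ,\ast)$ is the device that turns that hypothesis into the rigidity needed.
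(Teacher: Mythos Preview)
The paper does \emph{not} prove this theorem: it is quoted verbatim from \cite[Theorems 5.11 and 5.14]{BaerBecker} and no argument is given in the present text. There is therefore nothing in the paper to compare your proposal against line by line. What I can do is comment on how your outline relates to the argument in the cited source.

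Your strategy matches the one in \cite{BaerBecker} (and in \cite{SimonsSullivan}) in its essential architecture: reduce uniqueness to the vanishing of a natural transformation $\bar\Delta:H^\ast_{\mathrm{free}}(\,\cdot\,;\bbZ)\Rightarrow H^{\ast-1}(\,\cdot\,;\bbR)/H^{\ast-1}_{\mathrm{free}}(\,\cdot\,;\bbZ)$, kill $\bar\Delta$ by pulling back from highly connected model spaces (finite-dimensional approximations to $K(\bbZ,k)$), and infer that any compatible $\Xi$ is an isomorphism by the five lemma on the middle column. Two remarks on details. First, your $k=1$ case needs the additional step you allude to but do not write: for $P=S^1$ one has $J^0(S^1)\cong\bbT\neq 0$, so the connectivity argument fails; one concludes instead by naturality under a degree-$n$ self-map $f:S^1\to S^1$, which acts by $n$ on $H^1$ and by the identity on $J^0$, forcing $n\,\bar\Delta_{S^1}(u)=\bar\Delta_{S^1}(u)$ for all $n$, hence $\bar\Delta_{S^1}=0$. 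Second, for existence \cite{BaerBecker} does not pass through Mayer--Vietoris and \v Cech--Deligne hypercohomology as you do; it constructs $\Xi$ more directly from the axiomatic data (essentially by choosing topological trivializations fibrewise over $\Char$ and correcting by flat classes), which avoids having to verify Mayer--Vietoris for an abstract $\widetilde H^\ast$. Your route is legitimate, but the claim that every differential cohomology theory satisfies Mayer--Vietoris ``by the long-exact-sequence/five-lemma method'' is itself a nontrivial lemma that you would have to supply, and the well-definedness of the resulting $\Xi$ under change of cover is precisely where the ``Jacobian torus'' ambiguity you identify reappears; you should expect to invoke the same model-space rigidity argument there as in the uniqueness half.
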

\begin{rem}
In order to simplify the notation we shall denote in the following any differential co\-ho\-mo\-lo\-gy
theory by $( \DCo{\ast}{\,\cdot\,}{\bbZ},\Curv,\Char,\iota,\kappa)$.
\end{rem}

% % % % % % % % % % % % % % % % % % % % % % % % % % % %
% % % % % % % % % % % % % % % % % % % % % % % % % % % %

\section{\label{sec:MWmaps}Generalized Maxwell maps}

Our main interest lies in understanding the classical and quantum field theory described
by a differential cohomology theory $\DCo{\ast}{\,\cdot\,}{\bbZ}: \mathsf{Man} \to \mathsf{Ab}^\bbZ$.
For a clearer presentation we shall fix $1\leq k\in \bbZ$ and study the differential cohomology groups of degree $k$,
i.e.\ the contravariant functor $\DCo{k}{\,\cdot\,}{\bbZ} : \mathsf{Man} \to \mathsf{Ab}$.
Furthermore, in order to formulate relativistic field equations which generalize Maxwell's equations in degree $k=2$,
we shall restrict the category of smooth manifolds to a suitable category of globally hyperbolic spacetimes.
A natural choice, see e.g.\ \cite{Brunetti:2001dx,Fewster:2011pe,Bar:2011iu,Bar:2007zz},  is the following
\begin{defi}
The category $\mathsf{Loc}^m$ consists of the following objects and morphisms:
\begin{itemize}
\item The objects $M$ in $\mathsf{Loc}^m$ are 
oriented and time-oriented globally hyperbolic Lorentzian manifolds, which are of dimension $m\geq 2$ and
of finite-type.\footnote{
A manifold is of finite-type if it has a finite good cover, i.e.\ a finite cover by contractible 
open subsets such that all (multiple) overlaps are also contractible. This condition is not part
of the original definition in \cite{Brunetti:2001dx,Fewster:2011pe}, however it is very useful for
studying gauge theories as it  makes available Poincar{\'e} duality. 
See also \cite{Benini:2013ita,Benini:2013tra} for similar issues.
} (For ease of notation we shall always suppress the orientation, time-orientation and Lorentzian metric.)
\item The morphisms $f: M\to N$ in $\mathsf{Loc}^m$ are orientation and 
time-orientation preserving isometric embeddings, such that the image $f[M]\subseteq N$ is causally compatible and open.
\end{itemize}
\end{defi}
\begin{rem}
The curvature $\Curv : \DCo{k}{M}{\bbZ}\to \Omega^k_\bbZ(M)$ is only non-trivial if the
degree $k$ is less than or equal to the dimension $m$ of $M$.
Hence when restricting the contravariant functor $\DCo{k}{\,\cdot\,}{\bbZ}$ to the category $\mathsf{Loc}^m$
we shall always assume that $k\leq m$.
\end{rem}

When working on the category $\mathsf{Loc}^m$ we have available a further natural transformation given by the 
{\it codifferential} $\delta : \Omega^\ast(\,\cdot\,)\Rightarrow \Omega^{\ast-1}(\,\cdot\,)$.\footnote{We have denoted the codifferential by the same symbol as the coboundary maps in singular cohomology. It should be
clear from the context to which of these maps the symbol $\delta$ refers to.}
Our conventions for the codifferential $\delta$ are as follows: Denoting by $\ast$ the Hodge operator, we define $\delta$ on $p$-forms by
\begin{flalign}
\delta \,:\, \Omega^p(M) \ \longrightarrow \ \Omega^{p-1}(M) \ , \qquad \omega \ 
\longmapsto \ (-1)^{m\, (p+1)} \ast \dd \ast \omega \ .
\end{flalign}
For any two forms $\omega,\omega^\prime\in \Omega^p(M)$ with compactly
overlapping support we have a natural indefinite inner product defined by
\begin{flalign}\label{eqn:pairing}
\ip{\omega}{\omega^\prime\,} := \int_M \, \omega\wedge \ast \, \omega^\prime .
\end{flalign}
Then the codifferential $\delta$ is the formal adjoint of the differential $\dd$ with respect to this inner product, 
i.e.\ $\ip{\delta \omega}{\omega^\prime\,} = \ip{\omega}{\dd\omega^\prime\,}$ for all $\omega\in \Omega^p(M)$ 
and $\omega^\prime\in\Omega^{p-1}(M)$ with compactly overlapping support.
\begin{defi}
The {\bf (generalized) Maxwell map} is the natural transformation
\begin{flalign}
\mathrm{MW}:=\delta\circ \Curv \,:\, \DCo{k}{\,\cdot\,}{\bbZ} \
\Longrightarrow \ \Omega^{k-1}(\,\cdot\,)~.
\end{flalign} 
For any object $M$ in $\mathsf{Loc}^m$, the {\bf solution subgroup} in $\DCo{k}{M}{\bbZ}$ is defined as
the kernel of the Maxwell map, 
\begin{flalign}\label{eqn:solutionsubgroup}
\widehat{\mathfrak{Sol}}{}^k(M):= \left\{h\in\DCo{k}{M}{\bbZ} \ : \ \mathrm{MW}(h) = \delta\big(\Curv(h)\big)=0 \right\}~.
\end{flalign}
\end{defi}
\begin{lem}
$\widehat{\mathfrak{Sol}}{}^k(\,\cdot\,) : \mathsf{Loc}^m \to \mathsf{Ab}$ is a subfunctor of 
$\DCo{k}{\,\cdot\,}{\bbZ}: \mathsf{Loc}^m \to \mathsf{Ab}$.
\end{lem}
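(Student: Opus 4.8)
The plan is to verify that $\widehat{\mathfrak{Sol}}{}^k$ assigns to each object a subgroup of $\DCo{k}{M}{\bbZ}$ and that these subgroups are preserved by the action of morphisms; functoriality (identities, composition) is then inherited for free from $\DCo{k}{\,\cdot\,}{\bbZ}$.

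First I would check the subgroup property. For a fixed object $M$ in $\mathsf{Loc}^m$, the Maxwell map $\mathrm{MW}_M = \delta\circ\Curv_M : \DCo{k}{M}{\bbZ}\to\Omega^{k-1}(M)$ is a composite of group homomorphisms: $\Curv_M$ is a homomorphism of Abelian groups by (\ref{eqn:curv}), and $\delta$ is $\bbR$-linear, hence additive. Thus $\mathrm{MW}_M$ is a homomorphism of Abelian groups, and $\widehat{\mathfrak{Sol}}{}^k(M)=\Ker\,\mathrm{MW}_M$ is a subgroup of $\DCo{k}{M}{\bbZ}$.

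Next I would check invariance under morphisms. Let $f:M\to N$ be a $\mathsf{Loc}^m$-morphism and write $\DCo{k}{f}{\bbZ}:\DCo{k}{N}{\bbZ}\to\DCo{k}{M}{\bbZ}$ for the induced pullback. Since $\mathrm{MW}=\delta\circ\Curv$ is a natural transformation $\DCo{k}{\,\cdot\,}{\bbZ}\Rightarrow\Omega^{k-1}(\,\cdot\,)$ — being the composite of the two natural transformations $\Curv$ and $\delta$, the latter natural because $f$ is an orientation-preserving isometric embedding and hence commutes with the Hodge operator and the exterior differential — the square
\begin{flalign}\nn
\xymatrix{
\DCo{k}{N}{\bbZ} \ar[r]^-{\mathrm{MW}_N} \ar[d]_-{\DCo{k}{f}{\bbZ}} & \Omega^{k-1}(N) \ar[d]^-{f^\ast}\\
\DCo{k}{M}{\bbZ} \ar[r]^-{\mathrm{MW}_M} & \Omega^{k-1}(M)
}
\end{flalign}
commutes. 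Hence for $h\in\widehat{\mathfrak{Sol}}{}^k(N)$ we get $\mathrm{MW}_M\big(\DCo{k}{f}{\bbZ}(h)\big)=f^\ast\big(\mathrm{MW}_N(h)\big)=f^\ast(0)=0$, so $\DCo{k}{f}{\bbZ}(h)\in\widehat{\mathfrak{Sol}}{}^k(M)$. Therefore $\DCo{k}{f}{\bbZ}$ restricts to a homomorphism $\widehat{\mathfrak{Sol}}{}^k(f):\widehat{\mathfrak{Sol}}{}^k(N)\to\widehat{\mathfrak{Sol}}{}^k(M)$, and since $\DCo{k}{\,\cdot\,}{\bbZ}$ is a functor, so is $\widehat{\mathfrak{Sol}}{}^k(\,\cdot\,)$, with the inclusions $\widehat{\mathfrak{Sol}}{}^k(M)\hookrightarrow\DCo{k}{M}{\bbZ}$ forming a natural monomorphism, i.e.\ exhibiting it as a subfunctor.

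There is no real obstacle here; the only point that deserves a line of justification is the naturality of the codifferential $\delta$ with respect to $\mathsf{Loc}^m$-morphisms, which follows because such morphisms are isometric and orientation-preserving, so they intertwine the Hodge operators on $M$ and $f[M]\subseteq N$ and commute with $\dd$; everything else is formal nonsense about kernels of natural transformations being subfunctors.
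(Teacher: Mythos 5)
Your proof is correct and follows essentially the same route as the paper: kernel of a group homomorphism gives the subgroup, and naturality of $\mathrm{MW}=\delta\circ\Curv$ gives preservation under morphisms. The only added detail is spelling out why $\delta$ is natural on $\mathsf{Loc}^m$, which the paper records just before stating the lemma.
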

\begin{proof}
Let $M$ be any object in $\mathsf{Loc}^m$. Then clearly $\widehat{\mathfrak{Sol}}{}^k(M)$
is a subgroup of $\DCo{k}{M}{\bbZ}$, since $\mathrm{MW}$ is a homomorphism of Abelian groups.
Let now $f: M\to N$ be any $\mathsf{Loc}^m$-morphism.  We have to show that
$\DCo{k}{f}{\bbZ} : \DCo{k}{N}{\bbZ}\to \DCo{k}{M}{\bbZ}$ restricts to an $\mathsf{Ab}$-morphism
$\widehat{\mathfrak{Sol}}{}^k(N)\to \widehat{\mathfrak{Sol}}{}^k(M)$.
This follows from naturality of $\mathrm{MW}$: for any $h\in \widehat{\mathfrak{Sol}}{}^k(N)$
we have $\mathrm{MW}\big(\DCo{k}{f}{\bbZ}(h)\big) = \Omega^{k-1}(f)\big(\mathrm{MW}(h)\big) = 0$, hence
$\DCo{k}{f}{\bbZ}(h)\in \widehat{\mathfrak{Sol}}{}^k(M)$.
\end{proof}
\begin{rem}
For any $\mathsf{Loc}^m$-morphism $f: M\to N$ we shall denote the
restriction of $\DCo{k}{f}{\bbZ}$ to $\widehat{\mathfrak{Sol}}{}^k(N)$ by
$\widehat{\mathfrak{Sol}}{}^k(f) : \widehat{\mathfrak{Sol}}{}^k(N)\to \widehat{\mathfrak{Sol}}{}^k(M)$.
\end{rem}

The next goal is to restrict the diagram (\ref{eqn:absdiagrm}) to the solution subgroup 
$\widehat{\mathfrak{Sol}}{}^k(M)\subseteq \DCo{k}{M}{\bbZ}$. 
Let us denote by $\Omega^k_{\bbZ,\,\delta}(M)$ the Abelian group of closed and coclosed $k$-forms
with integral periods. From the definition of the solution subgroups (\ref{eqn:solutionsubgroup})
it is clear that the middle horizontal sequence in  (\ref{eqn:absdiagrm}) restricts to the exact sequence
\begin{flalign}
\xymatrix{
0 \ \ar[r] & \ \Co{k-1}{M}{\bbT} \ \ar[r]^{ \ \ \kappa} & \
\widehat{\mathfrak{Sol}}{}^k(M) \ \ar[r]^-{\Curv} & \
\Omega^k_{\bbZ,\,\delta}(M) \ \ar[r] & \ 0 \ .
}
\end{flalign}
In order to restrict the complete diagram  (\ref{eqn:absdiagrm}) to the solution subgroups
we need the following
\begin{lem}
The inverse image of $\widehat{\mathfrak{Sol}}{}^k(M) $ under the topological trivialization $\iota$
is given by
\begin{flalign}
\mathfrak{Sol}^k(M) := \iota^{-1}\big(\,
\widehat{\mathfrak{Sol}}{}^k(M)\, \big) =
\Big\{[\eta]\in\frac{\Omega^{k-1}(M)}{\Omega^{k-1}_\bbZ(M)} \ : \ 
\delta\dd \eta =0\Big\}~.
\end{flalign}
\end{lem}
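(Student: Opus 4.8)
The plan is to unravel the definitions on both sides and verify a chain of equalities. First I would recall that, by definition, $\iota^{-1}(\widehat{\mathfrak{Sol}}{}^k(M))$ consists of those classes $[\eta]\in \Omega^{k-1}(M)/\Omega^{k-1}_\bbZ(M)$ whose image $\iota([\eta])$ lies in the kernel of $\mathrm{MW}=\delta\circ\Curv$, i.e. those with $\delta\big(\Curv(\iota([\eta]))\big)=0$. So the whole statement reduces to computing $\Curv(\iota([\eta]))$. For this I would invoke the commutative diagram (\ref{eqn:absdiagrm}): the top row reads $\Curv\circ\iota=\dd$ (more precisely, the composite $\Omega^{k-1}(M)/\Omega^{k-1}_\bbZ(M)\xrightarrow{\iota}\DCo{k}{M}{\bbZ}\xrightarrow{\Curv}\Omega^k_\bbZ(M)$ agrees with $\dd:\Omega^{k-1}(M)/\Omega^{k-1}_\bbZ(M)\to \dd\Omega^{k-1}(M)$ followed by the inclusion $\dd\Omega^{k-1}(M)\hookrightarrow\Omega^k_\bbZ(M)$). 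Hence $\Curv(\iota([\eta]))=\dd\eta$, which is well-defined on the quotient since any representative of $[\eta]$ differs by a closed form in $\Omega^{k-1}_\bbZ(M)$.

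With this identification in hand, the condition $\delta\big(\Curv(\iota([\eta]))\big)=0$ becomes simply $\delta\dd\eta=0$, which is exactly the defining condition of the set $\mathfrak{Sol}^k(M)$ on the right-hand side. One should note that $\delta\dd\eta$ is independent of the choice of representative $\eta$ of $[\eta]$: if $\eta'=\eta+\beta$ with $\beta\in\Omega^{k-1}_\bbZ(M)$ closed, then $\dd\eta'=\dd\eta$, so a fortiori $\delta\dd\eta'=\delta\dd\eta$; thus the right-hand side is genuinely a well-defined subset (in fact subgroup) of the quotient. Putting the two computations together gives the chain
\begin{flalign*}
\iota^{-1}\big(\widehat{\mathfrak{Sol}}{}^k(M)\big)
= \Big\{[\eta] \ : \ \delta\big(\Curv(\iota([\eta]))\big)=0\Big\}
= \Big\{[\eta] \ : \ \delta\dd\eta =0\Big\}
= \mathfrak{Sol}^k(M)~,
\end{flalign*}
which is the assertion.

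I do not anticipate a genuine obstacle here: the only thing to be careful about is bookkeeping with the quotient $\Omega^{k-1}(M)/\Omega^{k-1}_\bbZ(M)$ — namely that every formula used ($\Curv\circ\iota=\dd$ and $\delta\dd\eta$) descends to the quotient — and that the identification $\Curv\circ\iota=\dd$ is read off correctly from the top row of (\ref{eqn:absdiagrm}), where the target of $\dd$ is $\dd\Omega^{k-1}(M)$ and one composes with the vertical inclusion into $\Omega^k_\bbZ(M)$. Since $\iota$ was already shown to be a group monomorphism and $\mathrm{MW}$ a group homomorphism, the resulting set is automatically a subgroup, consistent with the notation $\mathfrak{Sol}^k(M)$.
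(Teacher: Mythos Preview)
Your proof is correct and follows essentially the same approach as the paper: both arguments reduce the statement to the commutative upper-right square of (\ref{eqn:absdiagrm}), which gives $\Curv\circ\iota=\dd$, so that membership in $\widehat{\mathfrak{Sol}}{}^k(M)$ becomes the condition $\delta\dd\eta=0$. The paper's version is terser, while you spell out the well-definedness on the quotient more carefully; there is no substantive difference.
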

\begin{proof}
This follows immediately from the commutative square in the upper right corner of the diagram (\ref{eqn:absdiagrm}): the equivalence class $[\eta]\in \Omega^{k-1}(M)/\Omega^{k-1}_\bbZ(M)$ maps under $\iota$ to $\widehat{\mathfrak{Sol}}{}^k(M)$
if and only if $\dd \eta$ is coclosed.
\end{proof}
Denoting by $(\dd\Omega^{k-1})_\delta(M)$ the space of exact $k$-forms which are also coclosed,
we obtain
\begin{theo}\label{theo:soldiagram}
The following diagram commutes and has exact rows and columns:
\begin{flalign}\label{eqn:soldiagrm}
\xymatrix{
 & 0 \ar[d]& 0\ar[d] & 0\ar[d] & \\
 0 \ \ar[r] & \
 \frac{\Co{k-1}{M}{\bbR}}{H^{k-1}_\mathrm{free}(M;\bbZ)} \ \ar[d]
 \ar[r] & \ \mathfrak{Sol}^k(M) \ \ar[d]^-{\iota} \ar[r]^-{\dd} & \
 (\dd \Omega^{k-1})_\delta (M) \ \ar[r]\ar[d] & \ 0\\
 0 \ \ar[r] & \ \Co{k-1}{M}{\bbT} \ \ar[d] \ar[r]^-{\kappa} & \
 \widehat{\mathfrak{Sol}}{}^k(M) \ \ar[d]^-{\Char} \ar[r]^-{\Curv} &
 \ar[d] \ \Omega^k_{\bbZ,\,\delta}(M) \ \ar[r] & \ 0\\
 0 \ \ar[r] & \ H^k_\mathrm{tor}(M;\bbZ) \ \ar[d]\ar[r] & \
 \Co{k}{M}{\bbZ} \ \ar[d] \ar[r] & \ H^k_\mathrm{free}(M;\bbZ) \
 \ar[d]\ar[r] & \ 0\\
 & 0 & 0 & 0 &
}
\end{flalign}
\end{theo}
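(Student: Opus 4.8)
The plan is to deduce the diagram (\ref{eqn:soldiagrm}) from the abstract diagram (\ref{eqn:absdiagrm}) by restriction, so that commutativity is automatic and the bulk of the exactness is inherited; this isolates one genuinely new analytic input. First I would record which parts are immediate. The bottom row of (\ref{eqn:soldiagrm}) is literally the bottom row of (\ref{eqn:absdiagrm}) (the universal coefficient sequence for $\Co{k}{M}{\bbZ}$), and the left column is unchanged as well. The middle row is exact by the lemma preceding the theorem. In the middle column, injectivity of $\iota$ on $\mathfrak{Sol}^k(M)$ is inherited from injectivity of $\iota$ in (\ref{eqn:absdiagrm}), and exactness at $\widehat{\mathfrak{Sol}}{}^k(M)$ follows by combining the second lemma, which gives $\Ker\big(\Char|_{\widehat{\mathfrak{Sol}}{}^k(M)}\big)=\iota\big(\mathfrak{Sol}^k(M)\big)$, with exactness of (\ref{eqn:absdiagrm}) at $\DCo{k}{M}{\bbZ}$. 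For the top row: the inclusion $\Co{k-1}{M}{\bbR}/H^{k-1}_\mathrm{free}(M;\bbZ)\hookrightarrow\mathfrak{Sol}^k(M)$ holds because a closed $(k{-}1)$-form $\eta$ trivially satisfies $\delta\dd\eta=0$; the map $\dd$ lands in $(\dd\Omega^{k-1})_\delta(M)$ by the definition of $\mathfrak{Sol}^k(M)$, its kernel is the subgroup of classes of closed forms, and it is onto because any coclosed exact form equals $\dd\eta$ with $\delta\dd\eta=0$. The same trivialities give injectivity of $(\dd\Omega^{k-1})_\delta(M)\hookrightarrow\Omega^k_{\bbZ,\delta}(M)$ and exactness of the right column at $\Omega^k_{\bbZ,\delta}(M)$, the latter being the restriction of $0\to\dd\Omega^{k-1}(M)\to\Omega^k_\bbZ(M)\to H^k_\mathrm{free}(M;\bbZ)\to0$.

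Two surjectivity statements then remain, and both reduce to a single Hodge-theoretic fact: on a globally hyperbolic $M$, \emph{every integral de Rham class admits a closed and coclosed representative with integral periods}, equivalently the bottom map of the right column is onto. To prove this I would start from any $\omega\in\Omega^k_\bbZ(M)$ representing the given class and seek $\eta\in\Omega^{k-1}(M)$ with $\delta(\omega+\dd\eta)=0$, i.e.\ $\delta\dd\eta=-\delta\omega$. Writing $\eta=\delta\gamma$ with $\gamma\in\Omega^k(M)$ and using $\delta\dd\delta\gamma=\delta(\dd\delta+\delta\dd)\gamma=\delta\Box\gamma$, where $\Box:=\dd\delta+\delta\dd$, it suffices to solve $\Box\gamma=-\omega$. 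Since $\Box$ is normally hyperbolic and $M$ globally hyperbolic, the Cauchy problem for $\Box$ with vanishing data on a Cauchy surface and arbitrary smooth inhomogeneity has a global smooth solution $\gamma$ (one solves on time slabs $(a,b)\times\Sigma$ and patches by uniqueness; note that $\omega$ need not have compact support, so the mere existence of retarded/advanced Green's operators would not suffice). Then $\omega':=\omega+\dd\eta$ is closed, coclosed, has the same — hence integral — periods as $\omega$, and represents the same class, which proves the claim since every class in $H^k_\mathrm{free}(M;\bbZ)$ is represented by some $\omega\in\Omega^k_\bbZ(M)$. This is the \emph{main obstacle}; everything else is formal.

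Finally, surjectivity of $\Char:\widehat{\mathfrak{Sol}}{}^k(M)\to\Co{k}{M}{\bbZ}$ follows by a diagram chase now that the right column is exact. Given $c\in\Co{k}{M}{\bbZ}$, let $\bar c\in H^k_\mathrm{free}(M;\bbZ)$ be its image, lift $\bar c$ to $\omega\in\Omega^k_{\bbZ,\delta}(M)$ along the right column, and then lift $\omega$ to $h\in\widehat{\mathfrak{Sol}}{}^k(M)$ with $\Curv(h)=\omega$ along the exact middle row. By commutativity $\Char(h)$ and $c$ have the same image in $H^k_\mathrm{free}(M;\bbZ)$, so $\Char(h)-c\in H^k_\mathrm{tor}(M;\bbZ)$ by exactness of the bottom row; since $\Co{k-1}{M}{\bbT}\to H^k_\mathrm{tor}(M;\bbZ)$ is onto (left column), write $\Char(h)-c=\Char\big(\kappa(u)\big)$ for a suitable $u\in\Co{k-1}{M}{\bbT}$, whence $\Char\big(h-\kappa(u)\big)=c$ with $h-\kappa(u)\in\widehat{\mathfrak{Sol}}{}^k(M)$. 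Commutativity of (\ref{eqn:soldiagrm}) is inherited from (\ref{eqn:absdiagrm}), all of whose maps are restricted or corestricted versions of maps there, which completes the plan.
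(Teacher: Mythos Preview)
Your proof is correct and equivalent in content to the paper's, but organized along a different axis. The paper singles out surjectivity of $\Char:\widehat{\mathfrak{Sol}}{}^k(M)\to H^k(M;\bbZ)$ as the only nontrivial step and attacks it directly: given $u\in H^k(M;\bbZ)$, lift to any $h\in\DCo{k}{M}{\bbZ}$ with $\Char(h)=u$ and then correct $h$ to a solution by solving $\delta\dd\eta=-\delta\Curv(h)$, which is solvable because the inhomogeneity is coexact (the paper cites \cite{Sanders:2012sf}). You instead isolate the analytic core as surjectivity of the right column---every class in $H^k_{\mathrm{free}}(M;\bbZ)$ admits a closed and coclosed representative with integral periods---prove it via the Cauchy problem for $\Box\gamma=-\omega$, and then recover $\Char$-surjectivity by a chase through the other (already exact) rows and columns. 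The two arguments are interchangeable: once $\Char$ is onto, the right column is onto by commutativity, and conversely your chase shows the reverse implication. Your route has the virtue of cleanly separating the PDE input (a Lorentzian Hodge-type statement) from the purely homological bookkeeping, and you are right to flag that solving $\Box\gamma=-\omega$ for a non-compactly supported $\omega$ needs the well-posedness of the Cauchy problem rather than just Green's operators; the paper's route is a touch shorter because it never names the right-column surjectivity as a separate goal. One minor wording issue: the exactness of the middle row is stated in the text just \emph{before} the lemma you cite, not in that lemma itself.
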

\begin{proof}
The only nontrivial step is to show that $\Char :  \widehat{\mathfrak{Sol}}{}^k(M) \to \Co{k}{M}{\bbZ}$
is surjective. Let $u\in \Co{k}{M}{\bbZ}$ be any cohomology class. By the middle vertical exact sequence in (\ref{eqn:absdiagrm})
there exists $h\in \DCo{k}{M}{\bbZ}$ such that $\Char(h) = u$. Note that $h$ is not necessarily an
element in $\widehat{\mathfrak{Sol}}{}^k(M)$, i.e.\ in general $0\neq \mathrm{MW}(h)\in\Omega^{k-1}(M)$. 
Let us now take $[\eta]\in \Omega^{k-1}(M)/\Omega^{k-1}_\bbZ(M)$ and note that
the characteristic class of  $h^\prime := h + \iota\big([\eta]\big)\in\DCo{k}{M}{\bbZ}$ 
is again $u$ as $\iota$ maps to the kernel of $\Char$. We now show that $[\eta]$ can be chosen such that
$\mathrm{MW}(h^\prime\, ) =0$, which completes the proof. By posing
$\mathrm{MW}(h^\prime\, ) =0$ as a condition we obtain the partial differential equation 
\begin{flalign}\label{eqn:inhompde}
0= \mathrm{MW}(h) + \mathrm{MW}\big(\iota([\eta])\big) =
\mathrm{MW}(h) + \delta \dd\eta~,
\end{flalign}
where $\eta\in\Omega^{k-1}(M)$ is any representative of the class $[\eta]$. 
As the inhomogeneity $\mathrm{MW}(h)=\delta\big(\Curv(h)\big)$ is
coexact,
there always exists a solution $\eta$ to the equation (\ref{eqn:inhompde}), see e.g.\ \cite[Section 2.3]{Sanders:2012sf}.
\end{proof}
\begin{rem}
In the context of compact {\it Riemannian} manifolds, a result similar to Theorem \ref{theo:soldiagram} 
is proven in \cite{Green:2008ub}. They consider harmonic differential characters 
on a compact Riemannian manifold, i.e.\ differential characters with harmonic curvature forms, 
and prove that these fit into exact sequences similar to the ones in (\ref{eqn:soldiagrm}). However,
the proof  in \cite{Green:2008ub} relies on the theory of elliptic
partial differential equations and therefore differs from our proof of 
Theorem \ref{theo:soldiagram}, which uses the theory of hyperbolic
partial differential equations. In particular,
the results of \cite{Green:2008ub} do not imply our results.
\end{rem}
\sk

We say that a $\mathsf{Loc}^m$-morphism $f: M\to N$ is a {\it Cauchy morphism} if
its image $f[M]$ contains a Cauchy surface of $N$. 
The following statement proves that local solutions
 to the generalized Maxwell equation 
 (or more precisely solutions local in time)
 uniquely extend to global solutions, i.e.\ that $\mathrm{MW}$ imposes a deterministic
 dynamical law on $\DCo{k}{M}{\bbZ}$.
\begin{theo}\label{theo:soltimeslice}
If $f:M\to N$ is a Cauchy morphism, then
 $\widehat{\mathfrak{Sol}}{}^k(f): \widehat{\mathfrak{Sol}}{}^k(N)\to \widehat{\mathfrak{Sol}}{}^k(M)$
 is an $\mathsf{Ab}$-isomorphism.
\end{theo}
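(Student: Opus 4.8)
The plan is to exploit the commutative diagram of Theorem \ref{theo:soldiagram} and reduce the claim to well-known facts about the hyperbolic nature of the generalized Maxwell operator together with the time-slice property of ordinary de Rham and singular cohomology. First I would recall that a Cauchy morphism $f\colon M\to N$ induces isomorphisms on all the ``topological'' building blocks of the diagram: since $f[M]$ is an open causally compatible neighbourhood of a Cauchy surface of $N$, both $M$ and $N$ are homotopy equivalent to that Cauchy surface (globally hyperbolic spacetimes are diffeomorphic to $\mathbb{R}\times\Sigma$), so $\Co{k-1}{f}{\bbT}$, $\Co{k}{f}{\bbZ}$, $\Co{k-1}{f}{\bbR}$, $H^k_\mathrm{tor}(f;\bbZ)$, $H^k_\mathrm{free}(f;\bbZ)$ and $\Co{k-1}{f}{\bbR}/H^{k-1}_\mathrm{free}(f;\bbZ)$ are all isomorphisms. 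By the five lemma applied to the left column and bottom row of (\ref{eqn:soldiagrm}), it then suffices to prove that $\widehat{\mathfrak{Sol}}{}^k(f)$ is an isomorphism \emph{on the curvature level}, i.e.\ that $\Omega^k_{\bbZ,\delta}(f)\colon \Omega^k_{\bbZ,\delta}(N)\to\Omega^k_{\bbZ,\delta}(M)$ (equivalently $(\dd\Omega^{k-1})_\delta(f)$, via the upper-right square) is a bijection, and then assemble the pieces by a diagram chase.

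The heart of the argument is therefore the following: a closed and coclosed $k$-form $\omega$ on $N$ with integral periods is uniquely determined by, and can be reconstructed from, its restriction $f^\ast\omega$ on a Cauchy neighbourhood. Existence and uniqueness of such an extension is exactly a Cauchy problem for the normally hyperbolic operator $\Box = \dd\delta + \delta\dd$ acting on $k$-forms: given $f^\ast\omega$ on $M$, one propagates it to all of $N$ using the well-posedness of the Cauchy problem for $\Box$ on globally hyperbolic spacetimes (see e.g.\ \cite{Bar:2007zz}), checks that the extension is again closed and coclosed (since $\dd$ and $\delta$ commute with $\Box$ and vanish on a Cauchy surface neighbourhood, a uniqueness argument forces them to vanish everywhere), and checks that the periods are preserved — the latter because the period over any cycle $z\in\Z{k}{N}{\bbZ}$ can, after a homology deformation, be computed on a cycle pushed into $f[M]$, using that $\Ho{k}{f}{\bbZ}$ is surjective and that $\omega$ is closed. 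Uniqueness follows the same way from well-posedness of the Cauchy problem. This gives bijectivity of $\Omega^k_{\bbZ,\delta}(f)$.

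With both outer terms of the middle row handled, a final diagram chase on (\ref{eqn:soldiagrm}) — using the snake lemma or simply the five lemma applied to the middle horizontal short exact sequence $0\to \Co{k-1}{M}{\bbT}\to\widehat{\mathfrak{Sol}}{}^k(M)\to\Omega^k_{\bbZ,\delta}(M)\to 0$ — yields that $\widehat{\mathfrak{Sol}}{}^k(f)$ is an isomorphism. I expect the main obstacle to be the careful treatment of the period condition: one must ensure not only that $f^\ast$ on forms and $f_\ast$ on cycles are compatible, but that the propagated form genuinely has integral (not merely real) periods, which requires combining surjectivity of $\Ho{k}{f}{\bbZ}$ with Stokes' theorem relating homologous cycles in $N$ to cycles supported in $f[M]$, and invoking that $\omega$ and its extension agree on the overlap. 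The verification that the propagated solution remains closed and coclosed is a standard but necessary uniqueness argument for $\Box$ that should be spelled out, since $\Box$-Cauchy data alone do not a priori see the $\dd$- and $\delta$-constraints.
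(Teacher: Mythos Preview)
Your approach is correct and takes a genuinely different route from the paper's. The paper argues injectivity and surjectivity of $\widehat{\mathfrak{Sol}}{}^k(f)$ directly via the middle \emph{vertical} sequence of (\ref{eqn:soldiagrm}): it first reduces along $\Char$ (using the cohomology isomorphism $\Co{k}{f}{\bbZ}$) to the topologically trivial sector $\mathfrak{Sol}^k$, then represents solutions there as Green's operators applied to timelike-compactly supported sources (invoking \cite{Benini:2014vsa}), extends those sources by zero from $f[M]$ to $N$, and finally checks the period condition by the same homology argument you sketch. You instead use the middle \emph{horizontal} sequence, reducing by the five lemma to bijectivity of $\Omega^k_{\bbZ,\,\delta}(f)$ and handling that via the Cauchy problem for $\Box$ on $k$-forms. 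Your route is cleaner and more modular: it separates the analytic content (well-posedness for $\Box$, plus the propagation of the constraints $\dd\omega=0$, $\delta\omega=0$ by uniqueness) from the topological content (homotopy invariance of cohomology), and avoids both the explicit Green's-operator manipulations and the representation theorems of \cite{Benini:2014vsa}. The paper's route is in turn more constructive and stays closer to the differential-character picture throughout.

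One point in your surjectivity argument deserves explicit care. Having solved $\Box\omega=0$ on $N$ from the Cauchy data of $\omega'$ on a Cauchy surface $\Sigma\subseteq f[M]$ of $N$, you need $\omega\vert_{f[M]}=\omega'$; this amounts to uniqueness of $\Box$-solutions on $f[M]$ with data on $\Sigma$, i.e.\ to $\Sigma$ being a Cauchy surface of the causally compatible globally hyperbolic subset $f[M]$ as well. This is true and standard, but it is exactly where the Cauchy-morphism hypothesis bites on the analytic side, and it should be stated rather than left implicit. The paper sidesteps this issue entirely by extending timelike-compact sources by zero instead of propagating Cauchy data.
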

\begin{proof}
Let us start with a simple observation:
Any $\mathsf{Loc}^m$-morphism $f:M\to N$ can be factorized as
$f= \iota_{N;f[M]}\circ \underline{f}$,  where
$\underline{f} : M\to f[M]$ is the $\mathsf{Loc}^m$-isomorphism given by
restricting $f$ to its image 
and $\iota_{N;f[M]}: f[M] \to N$ is the $\mathsf{Loc}^m$-morphism given by the 
canonical inclusion of subsets. As functors map isomorphisms to isomorphisms,
it is sufficient to prove that for any object $N$ in $\mathsf{Loc}^m$ and any causally compatible, open and 
globally hyperbolic subset $O\subseteq N$ that contains a Cauchy surface of $N$, the
canonical inclusion $\iota_{N;O} : O\to N$ is mapped to an $\mathsf{Ab}$-isomorphism 
$\widehat{\mathfrak{Sol}}{}^k(\iota_{N;O}) : \widehat{\mathfrak{Sol}}{}^k(N)\to \widehat{\mathfrak{Sol}}{}^k(O)$.
\sk

We first prove injectivity. Let $h\in \widehat{\mathfrak{Sol}}{}^k(N)$ be any element 
in the kernel of $\widehat{\mathfrak{Sol}}{}^k(\iota_{N;O})$. Applying $\Char$ 
implies that $\Char(h)$ lies in the kernel of $\Co{k}{\iota_{N;O}}{\bbZ} : \Co{k}{N}{\bbZ} \to \Co{k}{O}{\bbZ}$,
which is an isomorphism since $O$ and $N$ are both homotopic to their common Cauchy surface.
As a consequence $\Char(h)=0$ and by Theorem \ref{theo:soldiagram} there exists $[\eta]\in\mathfrak{Sol}^k(N)$ such that
$h=\iota\big([\eta]\big)$. Since $\iota$ is natural and injective, this implies that
$[\eta]$ lies in the kernel of  $\mathfrak{Sol}^k(\iota_{N;O})$. We can always choose a coclosed representative
$\eta\in\Omega_\delta^{k-1}(N)$ of the class $[\eta]$ (by going to Lorenz gauge, cf.\ \cite[Section 2.3]{Sanders:2012sf})
and the condition that $[\eta]$ lies in the kernel of $\mathfrak{Sol}^k(\iota_{N;O})$ implies that
the restriction $\eta\vert_O\in\Omega_\delta^{k-1}(O)$ of $\eta$ to $O$ is an element of $\Omega^{k-1}_\bbZ(O)$.
In particular, $\delta \eta\vert_O =0$ and $\dd\eta\vert_O=0$, so by \cite[Theorem 7.8]{Benini:2014vsa} there 
exist forms  $\alpha\in\Omega^{k}_{\mathrm{tc},\,\dd }(O)$ and $\beta\in\Omega^{k-2}_{\mathrm{tc},\,\delta}(O)$
of timelike compact support such that $\eta\vert_O = G(\delta\alpha + \dd \beta)$. 
Here $G:=G^+ - G^-: \Omega^{k-1}_\mathrm{tc}(O)\to\Omega^{k-1}(O)$ is the unique retarded-minus-advanced
Green's operator for the Hodge-d'Alembert operator 
$\square := \delta \circ \dd + \dd \circ \delta : \Omega^{k-1}(O)\to\Omega^{k-1}(O)$, see \cite{Bar:2007zz,Baernew}.
Since $O\subseteq N$ contains a Cauchy
surface of $N$, any form of timelike compact support on $O$ can be extended by zero 
to a form of timelike compact support on $N$ (denoted with a slight abuse of notation by the same symbol).
Hence there exists $\rho\in\Omega^{k-1}(N)$ satisfying $\rho\vert_O =0$ such that
$\eta = G(\delta\alpha + \dd\beta) +\rho$ on all of $N$.
As $\eta$ satisfies $\delta \dd \eta=0$ and the Lorenz gauge condition $\delta\eta=0$, it also satisfies
$\square\eta=0$.  Since also $\square G(\delta\alpha + \dd\beta) =0$, we obtain $\square\rho =0$, which together
with the support condition $\rho\vert_O=0$ implies $\rho=0$. So $\eta = G(\delta\alpha + \dd\beta) $ on all of $N$
and it remains to prove that $\eta\in\Omega^{k-1}_\bbZ(N)$. As $\eta$ is obviously closed, 
the integral $\int_z\, \eta$  depends only on the homology class $[z]\in \Ho{k-1}{N}{\bbZ}$.
The inclusion $\iota_{N;O}:O \to N$ induces isomorphisms $\Ho{k-1}{\iota_{N;O}}{\bbZ}:\Ho{k-1}{O}{\bbZ}\to \Ho{k-1}{N}{\bbZ}$ 
and $\Co{k-1}{\iota_{N;O}}{\bbZ}:\Co{k-1}{N}{\bbZ}\to \Co{k-1}{O}{\bbZ}$.
Thus any $z \in Z_{k-1}(N;\bbZ)$ is homologous to a cycle of the form ${\iota_{N;O}}_\ast(z^\prime\, )$ for some $z^\prime \in Z_{k-1}(O;\bbZ)$,
where ${\iota_{N;O}}_\ast$ is the push-forward.
This yields $\int_z\, \eta = \int_{{\iota_{N;O}}_\ast (z^\prime\, )} \, \eta = \int_{z^\prime} \, \iota_{N;O}^\ast (\eta) = \int_{z^\prime} \,
\eta\vert_O \in \bbZ$, since $\eta\vert_O\in\Omega^{k-1}_{\bbZ}(O)$.
Thus $\eta\in \Omega^{k-1}_\bbZ(N)$ and hence $h = \iota\big([\eta]\big)=\ozero$.
\sk

We now prove surjectivity. Let $l\in\widehat{\mathfrak{Sol}}{}^k(O)$ be arbitrary and consider its characteristic class
$\Char(l)\in \Co{k}{O}{\bbZ}$. As we have explained above, $\Co{k}{\iota_{N;O}}{\bbZ} : \Co{k}{N}{\bbZ} \to \Co{k}{O}{\bbZ}$
is an isomorphism, hence by using also Theorem \ref{theo:soldiagram} we can find $h\in\widehat{\mathfrak{Sol}}{}^k(N)$
such that $\Char(l)= \Co{k}{\iota_{N;O}}{\bbZ} \big(\Char(h)\big)  =
\Char\big(\, \widehat{\mathfrak{Sol}}{}^k(\iota_{N;O})(h)\, \big)$.
Again by Theorem \ref{theo:soldiagram}, there exists $[\eta]\in \mathfrak{Sol}^k(O)$ such that
$l =\widehat{\mathfrak{Sol}}{}^k(\iota_{N;O})(h) + \iota\big([\eta]\big)$. By \cite[Theorem 7.4]{Benini:2014vsa} 
the equivalence class $[\eta]$ has a representative $\eta \in \Omega^{k-1}(O)$ which is of the form $\eta = G(\alpha)$ for some
$\alpha\in \Omega^{k-1}_{\mathrm{tc},\,\delta}(O)$. We can extend $\alpha$ by zero (denoted with a slight abuse of notation
 by the same symbol) and define $[G(\alpha)]\in \mathfrak{Sol}^k(N)$. Since 
 $[\eta] = \mathfrak{Sol}^k(\iota_{N;O})\big([G(\alpha)]\big)$ we have
 $l = \widehat{\mathfrak{Sol}}{}^k(\iota_{N;O})(h) + \iota\big(\mathfrak{Sol}^k(\iota_{N;O})\big([G(\alpha)]\big)\big)
 = \widehat{\mathfrak{Sol}}{}^k(\iota_{N;O})\big(h +
 \iota\big([G(\alpha)]\big)\big)$, which gives the assertion.
\end{proof} 

% % % % % % % % % % % % % % % % % % % % % % % % % % % %
% % % % % % % % % % % % % % % % % % % % % % % % % % % %

\section{\label{sec:duality}Smooth Pontryagin duality}

Let $\big(\DCo{\ast}{\,\cdot\,}{\bbZ},\Curv,\Char,\iota,\kappa\big)$ be a differential cohomology theory
and let us consider its restriction $\DCo{k}{\,\cdot\,}{\bbZ} : \mathsf{Loc}^m \to\mathsf{Ab}$ 
to degree $k\geq 1$ and to the category $\mathsf{Loc}^m$ with $m\geq k$.
For an Abelian group $G$, the {\it character group} is defined by $G^\star :=\Hom(G,\bbT)$,
where $\Hom$ denotes the homomorphisms of Abelian groups.
Since the circle group $\bbT$ is divisible, the $\Hom$-functor $\Hom(\,\cdot\, , \bbT)$ preserves exact sequences.
Hence we can dualize the degree $k$ component of the diagram (\ref{eqn:absdiagrm}) and
obtain the following commutative diagram with exact rows and columns:
\begin{flalign}\label{eqn:absdiagrmdual}
\xymatrix{
 & 0 \ar[d]& 0\ar[d] & 0\ar[d] & \\
 0 \ \ar[r] & \ H^k_\mathrm{free}(M;\bbZ)^\star \ \ar[d] \ar[r]& \
 \Co{k}{M}{\bbZ}^\star \ \ar[r] \ar[d]^-{\Char^\star}& \ar[d] \
 H^k_\mathrm{tor}(M;\bbZ)^\star \ \ar[r]& \ 0\\
 0 \ \ar[r] & \ar[d] \ \Omega^k_\bbZ(M)^\star \ \ar[r]^-{\Curv^\star}
 &\ar[d]^-{\iota^\star} \ \DCo{k}{M}{\bbZ}^\star \
 \ar[r]^-{\kappa^\star} & \ar[d] \ \Co{k-1}{M}{\bbT}^\star \ \ar[r]& \
 0\\
 0 \ \ar[r] & \ar[d] \ \left(\dd\Omega^{k-1}(M)\right)^\star \
 \ar[r]^-{\dd^\star} & \ar[d] \ \Big(\,
 \frac{\Omega^{k-1}(M)}{\Omega^{k-1}_\bbZ(M)}\, \Big)^\star \ 
 \ar[r] &\ar[d] \ \Big(\,
 \frac{\Co{k-1}{M}{\bbR}}{H^{k-1}_\mathrm{free}(M;\bbZ)}\, \Big)^\star
 \ \ar[r] & \ 0\\
 & 0 & 0 & 0 &
}
\end{flalign}

The diagram (\ref{eqn:absdiagrmdual}) contains the character groups of 
$\DCo{k}{M}{\bbZ}$ and of various groups of differential forms,
whose generic elements are too singular for our purposes.
We shall use a strategy similar to \cite{HLZ} (called {\it smooth Pontryagin duality}) in order to identify suitable subgroups of 
such character groups, which describe regular group characters.
In order to explain the construction of the smooth Pontryagin duals of the Abelian group $\DCo{k}{M}{\bbZ}$ 
and the various groups of differential forms, let us first notice that there exists an injective 
homomorphism of Abelian groups $\mathcal{W}: \Omega^{p}_0(M) \to \Omega^p(M)^\star\,,~\varphi\mapsto \mathcal{W}_\varphi$, from the space of compactly supported $p$-forms 
$\Omega^{p}_0(M)$ to the character group of the $p$-forms $\Omega^p(M)$.
For any $\varphi \in \Omega^{p}_0(M)$, the character $\mathcal{W}_\varphi$ is defined as
\begin{flalign}\label{eqn:calWmap}
\mathcal{W}_\varphi \,:\, \Omega^p(M) \ \longrightarrow \ \bbT~,
\qquad \omega \ \longmapsto \  
 \exp\big(2\pi\ii\ip{\varphi}{\omega}\big) = \exp\Big(2\pi\ii\int_M
\, \varphi\wedge \ast\,\omega\Big)~.
\end{flalign}
With this homomorphism we can regard $\Omega^{p}_0(M)$ as a subgroup of $\Omega^p(M)^\star$
and we shall simply write $\Omega^{p}_0(M)\subseteq \Omega^p(M)^\star$, suppressing the map $\mathcal{W}$
when there is no risk of confusion. We say that $\Omega^p(M)^\star_\infty := \Omega^{p}_0(M)\subseteq \Omega^p(M)^\star$
 is the {\it smooth Pontryagin dual} of the $p$-forms. It is important to notice that
 the smooth Pontryagin dual separates points of $\Omega^p(M)$ since $\mathcal{W}_{\varphi}(\omega) =1$ for all $\varphi\in\Omega^p_0(M)$
 if and only if $\omega=0$.
 \sk
 
We next come to the smooth Pontryagin dual of the Abelian group 
$\Omega^{k-1}(M)/\Omega^{k-1}_\bbZ(M)$. 
A smooth group character $\varphi\in \Omega^{k-1}_0(M) = \Omega^{k-1}(M)^\star_\infty$
induces to this quotient if and only if $\mathcal{W}_\varphi(\omega) =1$ for all
$\omega\in \Omega^{k-1}_\bbZ(M)$. Hence we have to understand the vanishing subgroups
of differential forms with integral periods,
\begin{flalign}\label{eqn:integralvanishing}
\mathcal{V}^p(M):= \left\{\varphi \in \Omega^{p}_0(M) \ : \
  \mathcal{W}_\varphi(\omega) =1 \quad \forall \omega\in \Omega^p_\bbZ(M)\right\}~.
\end{flalign}
To give an explicit characterization of the subgroups $\mathcal{V}^p(M)$ we require some prerequisites:
Any coclosed and compactly supported $p$-form $\varphi\in \Omega^p_{0,\,\delta}(M)$
defines via the pairing (\ref{eqn:pairing}) a linear map $\ip{\varphi }{\,\cdot\,} : \Omega_\dd^p(M)\to\bbR\,,~\omega \mapsto
\ip{\varphi}{\omega} = \ip{[\varphi]}{[\omega]}$, which depends only on the de~Rham class
 $[\omega]\in \Omega^p_{\dd}(M)/\dd\Omega^{p-1}(M)$ of $\omega$ 
and the compactly supported dual de Rham class $[\varphi]\in \Omega^p_{0,\,\delta}(M)/\delta\Omega^{p+1}_0(M)$ of $\varphi$.
By Poincar{\'e} duality and de Rham's theorem, we can naturally identify the compactly supported dual de 
Rham cohomology group $\Omega^p_{0,\,\delta}(M)/\delta\Omega^{p+1}_0(M)$ with the dual vector space
$\Hom_\bbR(H^p(M;\bbR),\bbR)$. As $H^p_\mathrm{free}(M;\bbZ)$ is a lattice in $H^p(M;\bbR)$,
we have the subgroup $H^p_\mathrm{free}(M;\bbZ)^\prime := \Hom(H^p_\mathrm{free}(M;\bbZ),\bbZ) 
$ of $ \Hom_\bbR(H^p(M;\bbR),\bbR)$. (By $^\prime$ we denote the dual $\bbZ$-module.)
We then write $[\varphi]\in H^p_\mathrm{free}(M;\bbZ)^\prime $, for some $\varphi\in \Omega^p_{0,\,\delta}(M)$,
if and only if  $\ip{[\varphi]}{\,\cdot\,}$ restricts (under the isomorphisms above) 
to a homomorphism  of Abelian groups $H^p_\mathrm{free}(M;\bbZ) \to \bbZ$.

\begin{lem}\label{lem:vanishinggroup}
$\mathcal{V}^p(M) = \big\{\varphi\in \Omega^{p}_{0,\,\delta}(M) \ : \ [\varphi] \in H^p_\mathrm{free}(M;\bbZ)^\prime  \big\}$.
\end{lem}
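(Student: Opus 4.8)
The plan is to prove the two inclusions separately. For the inclusion ``$\supseteq$'', suppose $\varphi\in\Omega^p_{0,\,\delta}(M)$ with $[\varphi]\in H^p_\mathrm{free}(M;\bbZ)^\prime$. I must show $\mathcal{W}_\varphi(\omega)=1$ for all $\omega\in\Omega^p_\bbZ(M)$, i.e.\ $\ip{\varphi}{\omega}\in\bbZ$. Since $\varphi$ is coclosed and compactly supported and $\omega$ is closed, the pairing $\ip{\varphi}{\omega}$ depends only on the de~Rham class $[\omega]\in H^p(M;\bbR)$ and the compactly supported dual class $[\varphi]$, as recalled just above the statement. Now $\omega\in\Omega^p_\bbZ(M)$ means $\omega$ has integral periods, so its de~Rham class $[\omega]$ lies in the image of $H^p_\mathrm{free}(M;\bbZ)$ inside $H^p(M;\bbR)$ (this is precisely the characterization of $H^p_\mathrm{free}(M;\bbZ)$ as the lattice of integral-period classes, via de~Rham's theorem and the universal coefficient theorem as discussed in Section~\ref{sec:background}). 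By hypothesis $\ip{[\varphi]}{\,\cdot\,}$ restricts to an integer-valued homomorphism on that lattice, hence $\ip{\varphi}{\omega}\in\bbZ$, as required.

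For the inclusion ``$\subseteq$'', suppose $\varphi\in\mathcal{V}^p(M)$, so $\varphi\in\Omega^p_0(M)$ and $\ip{\varphi}{\omega}\in\bbZ$ for all $\omega\in\Omega^p_\bbZ(M)$. First I would show $\varphi$ is coclosed. For any $\beta\in\Omega^{p-1}_0(M)$ and any $\lambda\in\bbR$, the form $\lambda\,\dd\beta$ is exact, hence has vanishing periods, so $\lambda\,\dd\beta\in\Omega^p_\bbZ(M)$; therefore $\lambda\ip{\varphi}{\dd\beta}=\ip{\varphi}{\lambda\,\dd\beta}\in\bbZ$ for all $\lambda\in\bbR$, which forces $\ip{\varphi}{\dd\beta}=0$. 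By the adjointness of $\dd$ and $\delta$ with respect to the pairing (\ref{eqn:pairing}), $\ip{\delta\varphi}{\beta}=0$ for all compactly supported $\beta$, hence $\delta\varphi=0$, i.e.\ $\varphi\in\Omega^p_{0,\,\delta}(M)$. Next I would upgrade the condition $\ip{\varphi}{\omega}\in\bbZ$ on all integral-period forms to the statement $[\varphi]\in H^p_\mathrm{free}(M;\bbZ)^\prime$: under the identification $\Omega^p_{0,\,\delta}(M)/\delta\Omega^{p+1}_0(M)\simeq\Hom_\bbR(H^p(M;\bbR),\bbR)$ from Poincar\'e duality and de~Rham's theorem, I must check that $\ip{[\varphi]}{\,\cdot\,}$ sends the lattice $H^p_\mathrm{free}(M;\bbZ)$ into $\bbZ$. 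The point is that every class in $H^p_\mathrm{free}(M;\bbZ)\subseteq H^p(M;\bbR)$ is represented by some $\omega\in\Omega^p_\bbZ(M)$ (de~Rham isomorphism plus the fact that $H^p_\mathrm{free}(M;\bbZ)$ is exactly the image of $H^p(M;\bbZ)$ in real cohomology, i.e.\ the integral-period lattice), so $\ip{[\varphi]}{[\omega]}=\ip{\varphi}{\omega}\in\bbZ$, giving $[\varphi]\in H^p_\mathrm{free}(M;\bbZ)^\prime$.

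The main obstacle, I expect, is the careful bookkeeping of the identifications: that the real de~Rham class of a form with integral periods lands exactly in the lattice $H^p_\mathrm{free}(M;\bbZ)$ and, conversely, that every lattice class is realized by such a form. This uses the finite-type assumption on objects of $\mathsf{Loc}^m$ (so that Poincar\'e duality and the identification $\Omega^p_{0,\,\delta}(M)/\delta\Omega^{p+1}_0(M)\simeq\Hom_\bbR(H^p(M;\bbR),\bbR)$ are available) together with the compatibility of the de~Rham pairing with the integration pairing between homology and cohomology; once these identifications are in place, both inclusions are immediate. The coclosedness argument in the ``$\subseteq$'' direction is the only genuinely computational step, and it is short.
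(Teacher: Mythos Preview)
Your proposal is correct and follows essentially the same two-inclusion strategy as the paper: for ``$\supseteq$'' you use that the pairing descends to cohomology classes and that $[\omega]\in H^p_\mathrm{free}(M;\bbZ)$ for $\omega\in\Omega^p_\bbZ(M)$; for ``$\subseteq$'' you first test against exact forms (with the scaling trick) to force $\delta\varphi=0$, then conclude integrality on the lattice. The only cosmetic difference is that the paper tests against $\dd\eta$ for arbitrary $\eta\in\Omega^{p-1}(M)$ rather than compactly supported $\beta$, but since $\varphi$ has compact support both choices work and the arguments are equivalent.
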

\begin{proof}
We first show the inclusion ``$\supseteq$'':
Let $\varphi\in \Omega^{p}_0(M)$ be coclosed, i.e.\ $\delta\varphi=0$, and such that 
$\ip{[\varphi]}{\,\cdot\,}$ restricts to a homomorphism  of 
Abelian groups $H^p_\mathrm{free}(M;\bbZ) \to \bbZ$.
 For any $\omega\in \Omega^p_\bbZ(M)$ we have
 \begin{flalign}
 \mathcal{W}_\varphi(\omega) = \exp\big(2\pi\ii\ip{\varphi}{\omega}\big) = \exp\big(2\pi\ii\ip{[\varphi]}{[\omega]}\big) =1~,
 \end{flalign}
 where in the second equality we have used the fact that the pairing depends only on the equivalence classes
 and in the last equality we have used $[\omega]\in H_\mathrm{free}^p(M;\bbZ)$
 via the de Rham isomorphism.
 \sk
 
Let us now show the inclusion ``$\subseteq$'': Let $\varphi\in \mathcal{V}^p(M)$.
As $\dd\Omega^{p-1}(M)\subseteq \Omega^p_\bbZ(M)$ we obtain $\mathcal{W}_\varphi(\dd\eta) = \exp\left(2\pi\ii \ip{\varphi}{\dd\eta} \right) = \exp \left(2\pi\ii\ip{\delta\varphi}{\eta}\right) =1$ for all $\eta\in\Omega^{p-1}(M)$,
which implies that $\delta\varphi=0$ and hence $\varphi\in \Omega^{p}_{0,\, \delta} (M)$. For any 
$\omega\in\Omega^p_\bbZ(M)$ we obtain
$\mathcal{W}_\varphi(\omega) = \exp\left(2\pi\ii
  \ip{[\varphi]}{[\omega]}\right)=1$, and hence 
$[\varphi]\in H^p_\mathrm{free}(M;\bbZ)^\prime$.
\end{proof}

Motivated by the definition (\ref{eqn:integralvanishing}) we define the smooth Pontryagin dual of the
 quotient group $\Omega^{k-1}(M)/\Omega^{k-1}_\bbZ(M)$ by
\begin{flalign}
\Big(\, \frac{\Omega^{k-1}(M)}{\Omega^{k-1}_\bbZ(M)}\, \Big)^\star_\infty := \mathcal{V}^{k-1}(M)~.
\end{flalign}
\begin{lem}\label{lem:separatingtoptriv}
The smooth Pontryagin dual $\mathcal{V}^{k-1}(M)$ 
separates points of $\Omega^{k-1}(M)/\Omega^{k-1}_\bbZ(M)$.
\end{lem}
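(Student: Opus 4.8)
The claim is that $\mathcal{V}^{k-1}(M)$, regarded via $\mathcal{W}$ as a group of characters of $\Omega^{k-1}(M)/\Omega^{k-1}_\bbZ(M)$, separates its points. Concretely, the plan is to produce, for every $\eta\in\Omega^{k-1}(M)$ with $\eta\notin\Omega^{k-1}_\bbZ(M)$, some $\varphi\in\mathcal{V}^{k-1}(M)$ with $\mathcal{W}_\varphi(\eta)\neq 1$. The condition $\eta\notin\Omega^{k-1}_\bbZ(M)$ means that either $\dd\eta\neq 0$, or $\dd\eta=0$ but the de~Rham class $[\eta]_{\mathrm{dR}}\in\Co{k-1}{M}{\bbR}$ does not lie in the integral lattice $H^{k-1}_\mathrm{free}(M;\bbZ)$ (this is precisely the exactness of the top row of (\ref{eqn:csdiagrm})). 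I would treat these two cases separately, using in both the explicit description of $\mathcal{V}^{k-1}(M)$ from Lemma \ref{lem:vanishinggroup}.

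Suppose first that $\dd\eta\neq 0$. Since the inner product (\ref{eqn:pairing}) is pointwise nondegenerate, there is a compactly supported $k$-form $\beta\in\Omega^k_0(M)$ with $\ip{\beta}{\dd\eta}\neq 0$, and after rescaling we may assume $\ip{\beta}{\dd\eta}=\tfrac{1}{2}$. Set $\varphi:=\delta\beta\in\Omega^{k-1}_{0,\,\delta}(M)$. For every closed form $\omega$, in particular for every $\omega\in\Omega^{k-1}_\bbZ(M)$, we have $\mathcal{W}_\varphi(\omega)=\exp\big(2\pi\ii\,\ip{\delta\beta}{\omega}\big)=\exp\big(2\pi\ii\,\ip{\beta}{\dd\omega}\big)=1$, so $\varphi\in\mathcal{V}^{k-1}(M)$; equivalently, the compactly supported dual de~Rham class of $\varphi$ vanishes, so $\varphi$ has the required form by Lemma \ref{lem:vanishinggroup}. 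On the other hand $\mathcal{W}_\varphi(\eta)=\exp\big(2\pi\ii\,\ip{\beta}{\dd\eta}\big)=\exp(\pi\ii)=-1\neq 1$, as wanted.

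Suppose instead that $\dd\eta=0$, so that $[\eta]_{\mathrm{dR}}$ lies in the finite-dimensional real vector space $\Co{k-1}{M}{\bbR}$ (using that $M$ is of finite type) but not in the full lattice $L:=H^{k-1}_\mathrm{free}(M;\bbZ)$. The standard lattice identity $\big\{v\in\Co{k-1}{M}{\bbR}\;:\;f(v)\in\bbZ\ \text{for all}\ f\in L^\prime\big\}=L$, where $L^\prime:=\Hom(L,\bbZ)$ is regarded inside $\Hom_\bbR(\Co{k-1}{M}{\bbR},\bbR)$, yields some $f\in L^\prime$ with $f\big([\eta]_{\mathrm{dR}}\big)\notin\bbZ$. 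Using the Poincar{\'e}-duality/de~Rham isomorphism $\Omega^{k-1}_{0,\,\delta}(M)/\delta\Omega^k_0(M)\cong\Hom_\bbR(\Co{k-1}{M}{\bbR},\bbR)$ recalled just before Lemma \ref{lem:vanishinggroup}, choose a representative $\varphi\in\Omega^{k-1}_{0,\,\delta}(M)$ of the class corresponding to $f$. Then $[\varphi]\in L^\prime=H^{k-1}_\mathrm{free}(M;\bbZ)^\prime$, so $\varphi\in\mathcal{V}^{k-1}(M)$ by Lemma \ref{lem:vanishinggroup}, while $\ip{\varphi}{\eta}=\ip{[\varphi]}{[\eta]_{\mathrm{dR}}}=f\big([\eta]_{\mathrm{dR}}\big)\notin\bbZ$, whence $\mathcal{W}_\varphi(\eta)=\exp\big(2\pi\ii\,\ip{\varphi}{\eta}\big)\neq 1$.

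The first case and the bookkeeping with the de~Rham and Poincar{\'e}-duality isomorphisms are routine. The one point I expect to require care is the lattice identity invoked in the second case, i.e.\ that $H^{k-1}_\mathrm{free}(M;\bbZ)$ coincides with its double dual inside $\Co{k-1}{M}{\bbR}$; this is exactly where the finite-type hypothesis enters, since it guarantees both that $\Co{k-1}{M}{\bbR}$ is finite-dimensional and that $H^{k-1}_\mathrm{free}(M;\bbZ)$ is a full lattice in it.
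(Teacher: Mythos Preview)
Your proof is correct and follows essentially the same approach as the paper's, just argued contrapositively: where the paper assumes $\mathcal{W}_\varphi(\eta)=1$ for all $\varphi\in\mathcal{V}^{k-1}(M)$ and deduces first $\dd\eta=0$ (from $\delta\Omega^k_0(M)\subseteq\mathcal{V}^{k-1}(M)$) and then $[\eta]\in H^{k-1}_\mathrm{free}(M;\bbZ)$ (from the double-dual identity), you instead split into the two cases $\dd\eta\neq 0$ and $\dd\eta=0$ with $[\eta]_{\mathrm{dR}}\notin H^{k-1}_\mathrm{free}(M;\bbZ)$ and exhibit explicit separating characters. The mathematical content---use of coexact test forms for the first step and the finite-type hypothesis for the lattice double-dual in the second---is identical.
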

\begin{proof}
Let $\eta\in\Omega^{k-1}(M)$ be such that $\mathcal{W}_\varphi( \eta ) = \exp\big(2\pi\ii\ip{\varphi}{\eta}\big) =1$ for all $\varphi\in \mathcal{V}^{k-1}(M) $.
We need to prove that $\eta$ is closed and has integral periods, which implies that $\mathcal{V}^{k-1}(M)$
separates points of the quotient $\Omega^{k-1}(M)/\Omega^{k-1}_\bbZ(M)$.
Since by Lemma \ref{lem:vanishinggroup} 
we have $\delta\Omega^k_0(M)\subseteq \mathcal{V}^{k-1}(M)$, we obtain in particular the
condition $1= \exp\big(2\pi\ii\ip{\delta\zeta}{\eta}\big) = \exp\big(2\pi\ii\ip{\zeta}{\dd\eta}\big)$ for all
$\zeta\in\Omega^k_0(M)$, which implies $\dd\eta=0$. For any $\varphi\in \mathcal{V}^{k-1}(M)$ we then
get the condition
\begin{flalign}
1=\exp\big(2\pi\ii\ip{\varphi}{\eta}\big) = \exp\big(2\pi\ii\ip{[\varphi]}{[\eta]}\big)
\end{flalign}
for all $[\varphi]\in H^{k-1}_\mathrm{free}(M;\bbZ)^\prime$, which implies that 
the de Rham class $[\eta]$ defines an element in the double dual $\bbZ$-module $H^{k-1}_\mathrm{free}(M;\bbZ)^{\prime\prime}$
of $H^{k-1}_\mathrm{free}(M;\bbZ)$.
As $H^{k-1}_\mathrm{free}(M;\bbZ)$ is finitely generated (by our assumption that $M$ is of finite-type) 
and free, its double dual $\bbZ$-module is isomorphic to itself,
hence the class $[\eta]$ defines an element in $H^{k-1}_\mathrm{free}(M;\bbZ)$
and as a consequence $\eta$ has integral periods.
\end{proof}

Using further the natural isomorphism (see e.g.\ \cite[Lemma 5.1]{HLZ})
\begin{flalign}
\Big(\, \frac{\Co{k-1}{M}{\bbR}}{H^{k-1}_\mathrm{free}(M;\bbZ) }\,
\Big)^\star \simeq  \Hom\big(H^{k-1}_\mathrm{free}(M;\bbZ),\bbZ \big)= H^{k-1}_\mathrm{free}(M;\bbZ)^\prime~,
\end{flalign}
we observe that the restriction of the lowest row of the diagram (\ref{eqn:absdiagrmdual}) 
to smooth Pontryagin duals reads as
\begin{flalign}\label{eqn:lowersmooth}
\xymatrix{
0 \ \ar[r] & \ \delta\Omega^{k}_0(M) \ \ar[r] & \ \mathcal{V}^{k-1}(M)
\ \ar[r] & \ H^{k-1}_\mathrm{free}(M;\bbZ)^\prime \ \ar[r] & \ 0
}
\end{flalign}
with the dual group homomorphisms 
\begin{subequations}
\begin{flalign}
\delta\Omega^{k}_0(M) & \ \longrightarrow \ \mathcal{V}^{k-1}(M)  ~, \qquad
\delta\eta \ \longmapsto \ \delta\eta~,\\[4pt]
\mathcal{V}^{k-1}(M) & \ \longrightarrow \
H^{k-1}_\mathrm{free}(M;\bbZ)^\prime ~, \qquad \varphi \ \longmapsto \
[\varphi]~.
\end{flalign}
\end{subequations}
Here we have implicitly used the injective homomorphism of Abelian groups 
$\mathcal{W} : \delta\Omega^{k}_0(M)\to\big(\dd\Omega^{k-1}(M)\big)^\star$, $\delta\zeta \mapsto \mathcal{W}_{\delta\zeta}$,
defined by
\begin{flalign}
\mathcal{W}_{\delta\zeta} \,:\, \dd\Omega^{k-1}(M) \ \longrightarrow \
\bbT~, \qquad \dd \eta \ \longmapsto \ \exp\big( 2\pi\ii \ip{\zeta}{\dd\eta}\big)~.
\end{flalign}
We suppress this group homomorphism and call $\big(\dd\Omega^{k-1}(M)\big)^\star_\infty:= \delta\Omega^{k}_0(M)\subseteq 
\big(\dd\Omega^{k-1}(M)\big)^\star$ the smooth Pontryagin dual of $\dd\Omega^{k-1}(M)$.
The smooth Pontryagin dual $\delta\Omega^{k}_0(M)$
separates points of $\dd\Omega^{k-1}(M)$: if $\exp(2\pi\ii \ip{\zeta}{\dd\eta}) =1$ for all $\zeta\in \Omega^{k}_0(M)$, then $\dd\eta=0$.
Exactness of the sequence (\ref{eqn:lowersmooth}) is an easy check. 
\sk

We now define the smooth Pontryagin dual of the differential cohomology group $\DCo{k}{M}{\bbZ}$ 
by the inverse image
\begin{flalign}\label{eqn:smoothdualdc}
\DCo{k}{M}{\bbZ}^\star_\infty := {\iota^{\star}}^{-1}\big(\mathcal{V}^{k-1}(M)\big)~.
\end{flalign}
Furthermore, by (\ref{eqn:integralvanishing})  it is natural to set 
$\big(\Omega^k_\bbZ(M)\big)^\star_\infty := \Omega^k_0(M)/\mathcal{V}^k(M)$, as in this way we divide out from the smooth
 group characters on $\Omega^k(M)$ exactly those which are trivial on $\Omega^k_\bbZ(M)$.
 The diagram (\ref{eqn:absdiagrmdual}) restricts as follows to the smooth Pontryagin duals.
\begin{theo}\label{theo:smoothdual}
The following diagram commutes and has exact rows and columns:
\begin{flalign}\label{eqn:absdiagrmdualsmooth}
\xymatrix{
 & 0 \ar[d]& 0\ar[d] & 0\ar[d] & \\
 0 \ \ar[r] & \ H^k_\mathrm{free}(M;\bbZ)^\star \ \ar[d] \ar[r]& \
 \Co{k}{M}{\bbZ}^\star \ \ar[r] \ar[d]^-{\Char^\star}& \ar[d] \
 H^k_\mathrm{tor}(M;\bbZ)^\star \ \ar[r]& \ 0\\
 0 \ \ar[r] & \ar[d]^\delta \frac{\Omega^{k}_0(M)}{\mathcal{V}^{k}(M)}
 \ \ar[r]^-{\Curv^\star} &\ar[d]^-{\iota^\star} \
 \DCo{k}{M}{\bbZ}_\infty^\star \ \ar[r]^-{\kappa^\star} & \ar[d] \
 \Co{k-1}{M}{\bbT}^\star \ \ar[r]& \ 0 \\
 0 \ \ar[r] & \ar[d] \ \delta\Omega^{k}_0(M) \ \ar[r] & \ar[d]  \
 \mathcal{V}^{k-1}(M) \ 
 \ar[r] &\ar[d] \ H^{k-1}_\mathrm{free}(M;\bbZ)^\prime \ \ar[r] & \ 0\\
 & 0 & 0 & 0 &
}
\end{flalign}
\end{theo}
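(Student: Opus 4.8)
The plan is to obtain the diagram (\ref{eqn:absdiagrmdualsmooth}) by a careful restriction of the diagram (\ref{eqn:absdiagrmdual}) to the smooth Pontryagin duals, using the definitions (\ref{eqn:smoothdualdc}) and the prescription $\big(\Omega^k_\bbZ(M)\big)^\star_\infty := \Omega^k_0(M)/\mathcal{V}^k(M)$, together with the already-established exact sequence (\ref{eqn:lowersmooth}) which furnishes the bottom row. The commutativity of all squares is inherited from (\ref{eqn:absdiagrmdual}), so the real content is exactness of each row and column after restriction. First I would record that the bottom row is exactly (\ref{eqn:lowersmooth}), already proved, and that the top row is unchanged from (\ref{eqn:absdiagrmdual}) since $\Co{k}{M}{\bbZ}^\star$ and its sub/quotient involve no differential forms — nothing to restrict there. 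The right-hand column is likewise unchanged at the top two entries; at the bottom it is the quotient map $\Co{k-1}{M}{\bbT}^\star \to H^{k-1}_\mathrm{free}(M;\bbZ)^\prime$ obtained by restricting the dual of the universal coefficient sequence, and exactness of the right column reduces to the statement that this map is surjective with the expected kernel, which one gets by dualizing the (divisible-coefficient) short exact sequence $0\to H^{k-1}_\mathrm{free}(M;\bbZ)^\prime \to \Co{k-1}{M}{\bbT}^\star \to (\text{something flat})^\star \to 0$ coming from the left column of (\ref{eqn:absdiagrm}).

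Next I would treat the middle row, which is the crux. By definition (\ref{eqn:smoothdualdc}), $\DCo{k}{M}{\bbZ}_\infty^\star = {\iota^\star}^{-1}\big(\mathcal{V}^{k-1}(M)\big)$, so the inclusion $\DCo{k}{M}{\bbZ}_\infty^\star \hookrightarrow \DCo{k}{M}{\bbZ}^\star$ together with $\iota^\star$ gives a commuting square with the bottom edge $\mathcal{V}^{k-1}(M)\hookrightarrow \big(\Omega^{k-1}(M)/\Omega^{k-1}_\bbZ(M)\big)^\star$. For the left edge: the middle column of (\ref{eqn:absdiagrmdual}) shows $\mathrm{Ker}\,\iota^\star = \mathrm{Im}\,\Curv^\star = \Omega^k_\bbZ(M)^\star$, and I must check that $\Curv^\star$ lands inside $\DCo{k}{M}{\bbZ}_\infty^\star$ and that its image there is precisely $\Omega^k_0(M)/\mathcal{V}^k(M)$. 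Since $\iota^\star \circ \Curv^\star = 0$ (from exactness of the middle column of (\ref{eqn:absdiagrmdual})) and $0 \in \mathcal{V}^{k-1}(M)$, automatically $\mathrm{Im}\,\Curv^\star \subseteq \DCo{k}{M}{\bbZ}_\infty^\star$; and $\Curv^\star$ factors through $\Omega^k(M)^\star \supseteq \Omega^k_0(M) = \Omega^k(M)^\star_\infty$, so restricting $\Curv^\star$ to smooth characters and using injectivity of $\Curv^\star$ modulo the kernel $\mathcal{V}^k(M)$ (those compactly supported forms killing $\Omega^k_\bbZ(M)$, which is by definition how the pullback along $\Curv$ of an integral-period form becomes trivial) identifies the image with $\Omega^k_0(M)/\mathcal{V}^k(M)$. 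Exactness at $\DCo{k}{M}{\bbZ}_\infty^\star$ then reads off from the middle column of (\ref{eqn:absdiagrmdual}) intersected with $\DCo{k}{M}{\bbZ}_\infty^\star$. For the right edge, I need $\kappa^\star$ restricted to $\DCo{k}{M}{\bbZ}_\infty^\star$ to remain surjective onto $\Co{k-1}{M}{\bbT}^\star$: given a smooth character on $\DCo{k}{M}{\bbZ}$ pulled back from $\Co{k-1}{M}{\bbT}$, its $\iota^\star$-image is $0 \in \mathcal{V}^{k-1}(M)$ (since $\iota$ and $\kappa$ have complementary rows), so every element of $\mathrm{Im}\,\kappa^\star$ in (\ref{eqn:absdiagrmdual}) already lies in $\DCo{k}{M}{\bbZ}_\infty^\star$, whence surjectivity is automatic.

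For the columns I would argue similarly: the left column of (\ref{eqn:absdiagrmdualsmooth}) is $0 \to H^k_\mathrm{free}(M;\bbZ)^\star \to \Omega^k_0(M)/\mathcal{V}^k(M) \xrightarrow{\delta} \delta\Omega^k_0(M) \to 0$, which I obtain by restricting the left column $0\to H^k_\mathrm{free}(M;\bbZ)^\star \to \Omega^k_\bbZ(M)^\star \to (\dd\Omega^{k-1}(M))^\star \to 0$ of (\ref{eqn:absdiagrmdual}); the point here is that dualizing $0 \to \dd\Omega^{k-1}(M) \to \Omega^k_\bbZ(M) \to H^k_\mathrm{free}(M;\bbZ) \to 0$ and passing to smooth duals replaces $\Omega^k_\bbZ(M)^\star$ by $\Omega^k_0(M)/\mathcal{V}^k(M)$ and $(\dd\Omega^{k-1}(M))^\star$ by $\delta\Omega^k_0(M)$ while leaving $H^k_\mathrm{free}(M;\bbZ)^\star$ unchanged (as it is already ``small'', being $\mathrm{Hom}(\mathbb{Z}^r,\bbT)\cong\bbT^r$). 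Exactness is then a diagram-chase using Lemma \ref{lem:vanishinggroup} to identify $\mathcal{V}^k(M)$. The middle column is exactly the restriction encoded in (\ref{eqn:smoothdualdc}) glued to (\ref{eqn:lowersmooth}), and the right column was handled above. The main obstacle I anticipate is verifying exactness of the \emph{middle column} at $\DCo{k}{M}{\bbZ}_\infty^\star$ — that $\mathrm{Ker}\,\iota^\star|_{\DCo{k}{M}{\bbZ}_\infty^\star}$ equals $\mathrm{Im}\,\Curv^\star|_{\Omega^k_0(M)/\mathcal{V}^k(M)}$ with the smooth-dual identifications — and that $\iota^\star$ restricted to $\DCo{k}{M}{\bbZ}_\infty^\star$ actually \emph{surjects} onto $\mathcal{V}^{k-1}(M)$ rather than just lands in it; the latter surjectivity, needed for the bottom edge of the middle column to be exact, should follow from surjectivity of $\iota^\star$ in (\ref{eqn:absdiagrmdual}) combined with the definition of $\DCo{k}{M}{\bbZ}_\infty^\star$ as a full preimage, but the bookkeeping of which elements remain ``smooth'' after pulling back requires care. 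Everything else is routine once the smooth-dual analogues of the form-group duals are pinned down via Lemma \ref{lem:vanishinggroup}.
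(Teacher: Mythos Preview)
Your overall strategy of restricting the diagram (\ref{eqn:absdiagrmdual}) to smooth Pontryagin duals is correct and matches the paper, but there is a genuine error in your treatment of the middle row that breaks the argument.

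You write that ``the middle column of (\ref{eqn:absdiagrmdual}) shows $\mathrm{Ker}\,\iota^\star = \mathrm{Im}\,\Curv^\star$'' and that ``$\iota^\star \circ \Curv^\star = 0$ (from exactness of the middle column of (\ref{eqn:absdiagrmdual}))''. Both statements are false: the middle \emph{column} of (\ref{eqn:absdiagrmdual}) is $\Char^\star$ followed by $\iota^\star$, so exactness there gives $\mathrm{Ker}\,\iota^\star = \mathrm{Im}\,\Char^\star$, not $\mathrm{Im}\,\Curv^\star$. The composition $\iota^\star\circ\Curv^\star$ is \emph{not} zero; it equals the map $\dd^\star$ appearing in the lower-left commutative square of (\ref{eqn:absdiagrmdual}). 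Consequently your argument that $\Curv^\star$ automatically lands in $\DCo{k}{M}{\bbZ}_\infty^\star$ because its $\iota^\star$-image vanishes does not work. The correct argument, which the paper gives, is that by commutativity of that lower-left square, $\iota^\star\circ\Curv^\star$ applied to the smooth dual $\Omega^k_0(M)/\mathcal{V}^k(M)$ yields $\delta\Omega^k_0(M)\subseteq\mathcal{V}^{k-1}(M)$, and hence $\Curv^\star$ maps into ${\iota^\star}^{-1}(\mathcal{V}^{k-1}(M)) = \DCo{k}{M}{\bbZ}_\infty^\star$.

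Your argument for surjectivity of $\kappa^\star$ on smooth duals is also insufficient. You claim it is ``automatic'' because characters in the image of $\kappa^\star$ have $\iota^\star$-image zero, but this is the wrong direction: you need, given $\phi\in\Co{k-1}{M}{\bbT}^\star$, to produce a \emph{smooth} preimage $\mathrm{w}\in\DCo{k}{M}{\bbZ}_\infty^\star$ with $\kappa^\star(\mathrm{w})=\phi$. A preimage in the full dual $\DCo{k}{M}{\bbZ}^\star$ exists by (\ref{eqn:absdiagrmdual}), but there is no a priori reason it is smooth. The paper handles this by a diagram chase: first map $\phi$ down to $H^{k-1}_\mathrm{free}(M;\bbZ)^\prime$, lift that through the (already established) surjections $\mathcal{V}^{k-1}(M)\to H^{k-1}_\mathrm{free}(M;\bbZ)^\prime$ and $\iota^\star:\DCo{k}{M}{\bbZ}_\infty^\star\to\mathcal{V}^{k-1}(M)$ to get a smooth $\mathrm{w}$, and then correct $\mathrm{w}$ by an element coming from $\Co{k}{M}{\bbZ}^\star$ via $\Char^\star$ to hit $\phi$ exactly. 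A similar chase is needed for exactness at $\DCo{k}{M}{\bbZ}_\infty^\star$ in the middle row. Your observation that surjectivity of $\iota^\star$ onto $\mathcal{V}^{k-1}(M)$ is immediate from the preimage definition is correct, and the rest of your outline (top row, bottom row, left and right columns) is essentially right.
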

\begin{proof}
By the constructions above and (\ref{eqn:absdiagrmdual}), we have the following commutative subdiagram with exact rows and columns:
\begin{flalign}\label{eqn:partialdualdiagram}
\xymatrix{
 & & 0\ar[d] & 0\ar[d] & \\
 0 \ \ar[r] &  \ H^k_\mathrm{free}(M;\bbZ)^\star \ \ar[r]& \
 \Co{k}{M}{\bbZ}^\star \ \ar[r] \ar[d]^-{\Char^\star}& \ar[d] \
 H^k_\mathrm{tor}(M;\bbZ)^\star \ \ar[r]& \ 0\\
  &  &\ar[d]^-{\iota^\star} \ \DCo{k}{M}{\bbZ}_\infty^\star \
  \ar[r]^-{\kappa^\star} & \ar[d] \ \Co{k-1}{M}{\bbT}^\star &\\
0 \ \ar[r] & \ \delta\Omega^{k}_0(M) \ \ar[r] & \ar[d] \
\mathcal{V}^{k-1}(M) \ 
 \ar[r] &\ar[d] \ H^{k-1}_\mathrm{free}(M;\bbZ)^\prime \ \ar[r] & \ 0\\
 & & 0 & 0 &
}
\end{flalign}
and it remains to prove that it extends to the diagram of exact sequences in (\ref{eqn:absdiagrmdualsmooth}). 
\sk

Let us first focus on the left column in  (\ref{eqn:absdiagrmdualsmooth}). By  (\ref{eqn:absdiagrmdual}), there exists
an injective  group homomorphism $H^k_\mathrm{free}(M;\bbZ)^\star \to \Omega^k_\bbZ(M)^\star $ and we
have to show that its image lies in the smooth Pontryagin dual $\Omega^k_0(M)/\mathcal{V}^k(M)$ of $\Omega^k_\bbZ(M)$.
The character group $H^k_\mathrm{free}(M;\bbZ)^\star$ is isomorphic to the quotient
$\Hom_\bbR(H^k(M;\bbR),\bbR)/H^k_\mathrm{free}(M;\bbZ)^\prime$.\footnote{
The isomorphism $\Hom_\bbR(H^k(M;\bbR),\bbR)/H^k_\mathrm{free}(M;\bbZ)^\prime \to H^k_\mathrm{free}(M;\bbZ)^\star$
is constructed as follows:  Given any $\bbR$-linear map $\psi : H^k(M;\bbR) \to \bbR$, we define a group character on
$H^k_\mathrm{free}(M;\bbZ)\subseteq H^k(M;\bbR)$ by $\exp(2\pi\ii\psi(\,\cdot\,))$. This association is
surjective since, as $H^k_\mathrm{free}(M;\bbZ)$ is a free Abelian group, 
any character $\phi : H^k_\mathrm{free}(M;\bbZ)\to \bbT$ has a real lift $\widetilde{\phi}:  
H^k_\mathrm{free}(M;\bbZ)\to \bbR$, i.e.\ $\phi(\,\cdot\,) = \exp(2\pi\ii \widetilde{\phi}(\,\cdot\,)) $,
which further has an $\bbR$-linear extension to $H^k(M;\bbR)$. The kernel of this association
is exactly $H^k_\mathrm{free}(M;\bbZ)^\prime = \Hom(H^k_\mathrm{free}(M;\bbZ),\bbZ)$. 
}
Under this identification, the group homomorphism $H^k_\mathrm{free}(M;\bbZ)^\star \to \Omega^k_\bbZ(M)^\star $
maps $\psi\in \Hom_\bbR(H^k(M;\bbR),\bbR)$ to the group character on
$\Omega^k_\bbZ(M)$ given by
\begin{flalign}
\Omega^k_\bbZ(M) \ \longrightarrow \ \bbT~, \qquad \omega \
\longmapsto \ \exp\big(2\pi\ii \psi([\omega])\big) 
= \exp\big(2\pi\ii\ip{\varphi_\psi}{\omega}\big)~.
\end{flalign}
In the last equality we have used the fact that, by Poincar{\'e} duality and de Rham's theorem,
there exists $\varphi_\psi\in \Omega^k_{0,\,\delta}(M)$ such that $\psi([\omega]) = \ip{\varphi_\psi}{\omega}$
for all $\omega\in\Omega^k_{\dd}(M)$. Hence the image of $H^k_\mathrm{free}(M;\bbZ)^\star \to \Omega^k_\bbZ(M)^\star $
lies in the smooth Pontryagin dual $\Omega^k_0(M)/\mathcal{V}^k(M)$ of $\Omega^k_\bbZ(M)$. Exactness of the corresponding
sequence (the left column in (\ref{eqn:absdiagrmdualsmooth})) is an easy check.
\sk

It remains to understand the middle horizontal sequence in (\ref{eqn:absdiagrmdualsmooth}). From the commutative square in the
lower left corner of (\ref{eqn:absdiagrmdual}) and the definition (\ref{eqn:smoothdualdc}), we find
that $\Curv^\star: \Omega^{k}_\bbZ(M)^\star \to \DCo{k}{M}{\bbZ}^\star$
restricts to the smooth Pontryagin duals: by commutativity of this square, 
$\iota^\star\circ \Curv^\star $ maps the smooth Pontryagin dual $\Omega^k_0(M)/\mathcal{V}^k(M)$
of $\Omega^k_\bbZ(M)$ into the smooth Pontryagin dual $\mathcal{V}^{k-1}(M)$ of $\Omega^{k-1}(M)/\Omega^{k-1}_\bbZ(M)$,
thus $\Curv^\star$ maps $\Omega^k_0(M)/\mathcal{V}^k(M)$ to
$\DCo{k}{M}{\bbZ}^\star_\infty$ by the definition (\ref{eqn:smoothdualdc}). 
 We therefore get the middle horizontal sequence in
(\ref{eqn:absdiagrmdualsmooth}) and it remains to prove that it is exact everywhere. As the restriction of an injective group homomorphism,
$\Curv^\star : \Omega^k_0(M)/\mathcal{V}^k(M) \to \DCo{k}{M}{\bbZ}^\star_\infty$ is injective. 
Next, we prove exactness of the middle part of this sequence by 
using what we already know about the diagram (\ref{eqn:absdiagrmdualsmooth}):
Let $\mathrm{w} \in \DCo{k}{M}{\bbZ}^\star_\infty$ be such that $\kappa^\star(\mathrm{w})=0$. As a consequence of the commutative square
in the lower right corner  and exactness of the lower horizontal sequence in this diagram,
there exists $\varphi\in \Omega^k_0(M)$ such that $\iota^\star(\mathrm{w}) = \delta\varphi$. We can use
$\varphi$ to define an element $[\varphi]\in \Omega^k_0(M)/\mathcal{V}^k(M)$. By the commutative square in the lower left corner 
we have $\iota^\star\big(\mathrm{w} - \Curv^\star([\varphi])\big)=0$, 
hence by exactness of the middle vertical sequence there exists $\phi\in H^k(M;\bbZ)^\star$ such that
$\mathrm{w} = \Curv^\star([\varphi]) + \Char^\star(\phi)$. Applying $\kappa^\star$ yields
$0=\kappa^\star\big(\Char^\star(\phi)\big)$, which by the commutative square in the upper right corner 
and exactness of the right vertical and upper horizontal sequences implies that $\phi$ 
has a preimage $\tilde\phi\in H^k_\mathrm{free}(M;\bbZ)^\star$.
Finally, the commutative square in the upper left corner implies that $\Char^\star(\phi)$
is in fact in the image of $\Curv^\star$ (restricted to the smooth
Pontryagin dual), and hence so is
$\mathrm{w}$. It remains to prove that $\kappa^\star: \DCo{k}{M}{\bbZ}^\star_\infty \to H^{k-1}(M;\bbT)^\star$ is surjective,
which follows from a similar argument based on what we already know about the diagram (\ref{eqn:absdiagrmdualsmooth}):
Let $\phi\in H^{k-1}(M;\bbT)^\star$ and consider its image $\tilde\phi\in H^{k-1}_\mathrm{free}(M;\bbZ)^\prime$
under the group homomorphism in the right column in this diagram. Since $\mathcal{V}^{k-1}(M)\to H^{k-1}_\mathrm{free}(M;\bbZ)^\prime$
and $\iota^\star:\DCo{k}{M}{\bbZ}^\star_\infty\to\mathcal{V}^{k-1}(M)$ are surjective, there exists
 $\mathrm{w}\in\DCo{k}{M}{\bbZ}^\star_\infty$ which maps under the composition of these morphisms to $\tilde\phi$.
 Hence by the commutative square in the lower right corner we have $\widetilde{\kappa^\star(\mathrm{w})} -\tilde{\phi}=0$,
 which by exactness of the right vertical sequence implies that there exists $\psi\in H^k_\mathrm{tor}(M;\bbZ)^\star$
 such that $\phi = \kappa^\star(\mathrm{w}) + \hat\psi$, where $\hat \psi \in H^{k-1}(M;\bbT)^\star$ is the image of $\psi$
 under the group homomorphism $H^k_\mathrm{tor}(M;\bbZ)^\star\to  H^{k-1}(M;\bbT)^\star$.
  By exactness of the upper horizontal sequence,
 $\psi$ has a preimage  $\underline{\psi}\in H^k(M;\bbZ)^\star$, and by the commutative square in the upper right corner 
 we get $\hat\psi = \kappa^\star \big(\Char^\star(\, \underline{\psi}\, )\big)$. This proves surjectivity since
 $\phi = \kappa^\star(\mathrm{w}^\prime\, )$ with 
 $\mathrm{w}^\prime =\mathrm{w} + \Char^\star(\, \underline{\psi}\, )\in \DCo{k}{M}{\bbZ}^\star_\infty$.
\end{proof}

It remains to study two important points: Firstly, we may ask whether the association of the Abelian groups $\DCo{k}{M}{\bbZ}_\infty^\star$
to objects in $\mathsf{Loc}^m$ is functorial and, secondly, we still have to prove that $\DCo{k}{M}{\bbZ}_\infty^\star$
separates points of $\DCo{k}{M}{\bbZ}$.
Let us start with the second point:
\begin{propo}
The smooth Pontryagin dual $\DCo{k}{M}{\bbZ}_\infty^\star$ separates points of $\DCo{k}{M}{\bbZ}$.
\end{propo}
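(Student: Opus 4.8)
The plan is to exploit the structure of the commutative diagram (\ref{eqn:absdiagrmdualsmooth}) from Theorem \ref{theo:smoothdual} together with the exactness of the curvature sequence $0\to\Co{k-1}{M}{\bbT}\xrightarrow{\kappa}\DCo{k}{M}{\bbZ}\xrightarrow{\Curv}\Omega^k_\bbZ(M)\to0$ appearing in (\ref{eqn:absdiagrm}). Given a nonzero $h\in\DCo{k}{M}{\bbZ}$, I would distinguish two cases according to whether $\Curv(h)=0$, and in each case produce an explicit element of $\DCo{k}{M}{\bbZ}^\star_\infty$ that does not annihilate $h$. The two standard facts used are that the smooth Pontryagin dual $\Omega^k_0(M)$ separates points of $\Omega^k(M)$, and that for any Abelian group $A$ and any $a\neq0$ there is a homomorphism $A\to\bbT$ sending $a$ to a non-unit (since $\bbT$ is divisible, hence injective in $\mathsf{Ab}$).

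\textbf{Case $\Curv(h)\neq0$.} Since $\Omega^k_0(M)$ separates points of $\Omega^k(M)$, choose $\varphi\in\Omega^k_0(M)$ with $\exp\bigl(2\pi\ii\,\ip{\varphi}{\Curv(h)}\bigr)\neq1$. Because $\Curv(h)\in\Omega^k_\bbZ(M)$, this $\varphi$ is not in $\mathcal{V}^k(M)$, so it represents a class $[\varphi]\in\Omega^k_0(M)/\mathcal{V}^k(M)=\bigl(\Omega^k_\bbZ(M)\bigr)^\star_\infty$, that is, a smooth group character on $\Omega^k_\bbZ(M)$. By Theorem \ref{theo:smoothdual} the map $\Curv^\star$ carries $\Omega^k_0(M)/\mathcal{V}^k(M)$ into $\DCo{k}{M}{\bbZ}^\star_\infty$, so $\mathrm{w}:=\Curv^\star([\varphi])$ lies in $\DCo{k}{M}{\bbZ}^\star_\infty$, and by construction $\mathrm{w}(h)=\exp\bigl(2\pi\ii\,\ip{\varphi}{\Curv(h)}\bigr)\neq1$.

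\textbf{Case $\Curv(h)=0$.} By exactness of the curvature sequence there is a unique $u\in\Co{k-1}{M}{\bbT}$ with $h=\kappa(u)$; since $\kappa$ is injective and $h\neq\ozero$, we have $u\neq0$. Pick $\chi\in\Co{k-1}{M}{\bbT}^\star=\Hom\bigl(\Co{k-1}{M}{\bbT},\bbT\bigr)$ with $\chi(u)\neq1$. By Theorem \ref{theo:smoothdual} the map $\kappa^\star:\DCo{k}{M}{\bbZ}^\star_\infty\to\Co{k-1}{M}{\bbT}^\star$ is surjective, so there is $\mathrm{w}\in\DCo{k}{M}{\bbZ}^\star_\infty$ with $\kappa^\star(\mathrm{w})=\chi$, whence $\mathrm{w}(h)=\mathrm{w}\bigl(\kappa(u)\bigr)=\kappa^\star(\mathrm{w})(u)=\chi(u)\neq1$. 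Since every nonzero $h$ falls into one of these two cases, $\DCo{k}{M}{\bbZ}^\star_\infty$ separates points of $\DCo{k}{M}{\bbZ}$.

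I do not expect a genuine obstacle here: once diagram (\ref{eqn:absdiagrmdualsmooth}) is available, this is a short argument combining a reduction via the curvature sequence with the two separation facts above. The only point demanding care is to check that the auxiliary character descends to a class in $\Omega^k_0(M)/\mathcal{V}^k(M)$ and that the constructed characters genuinely lie in $\DCo{k}{M}{\bbZ}^\star_\infty$ rather than merely in the full character group $\DCo{k}{M}{\bbZ}^\star$ — but these are precisely the statements furnished by Theorem \ref{theo:smoothdual} (namely that $\Curv^\star$ lands in the smooth dual and that $\kappa^\star$ restricted to the smooth dual is onto).
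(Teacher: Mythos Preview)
Your proof is correct and uses the same diagram (\ref{eqn:absdiagrmdualsmooth}) as the paper, but you run the argument along the \emph{horizontal} (curvature) exact sequence $0\to\Co{k-1}{M}{\bbT}\xrightarrow{\kappa}\DCo{k}{M}{\bbZ}\xrightarrow{\Curv}\Omega^k_\bbZ(M)\to 0$, whereas the paper runs it along the \emph{vertical} (characteristic class) sequence $0\to \Omega^{k-1}(M)/\Omega^{k-1}_\bbZ(M)\xrightarrow{\iota}\DCo{k}{M}{\bbZ}\xrightarrow{\Char}\Co{k}{M}{\bbZ}\to 0$. Concretely, the paper first shows $\Char(h)=0$ by pairing with the full character group $\Co{k}{M}{\bbZ}^\star$ via $\Char^\star$, then writes $h=\iota([\eta])$ and invokes surjectivity of $\iota^\star$ together with Lemma~\ref{lem:separatingtoptriv} to conclude $[\eta]=0$. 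Your route instead distinguishes $\Curv(h)\neq 0$ (handled by $\Curv^\star$ and the elementary separation of $\Omega^k(M)$ by $\Omega^k_0(M)$) from $\Curv(h)=0$ (handled by surjectivity of $\kappa^\star$ and injectivity of $\bbT$ in $\mathsf{Ab}$). Your argument has the minor advantage of not appealing to Lemma~\ref{lem:separatingtoptriv} directly, using only the more basic separation of $\Omega^k(M)$; the paper's argument has the minor advantage of being phrased uniformly (no case split). Both are short once Theorem~\ref{theo:smoothdual} is in hand. One cosmetic remark: in Case~1 the observation that $\varphi\notin\mathcal{V}^k(M)$ is true but unnecessary---all you need is that $\Curv^\star([\varphi])(h)=\mathcal{W}_\varphi(\Curv(h))\neq 1$, which holds regardless of whether $[\varphi]$ is the zero class.
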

\begin{proof}
Let $h\in\DCo{k}{M}{\bbZ}$ be such that $\mathrm{w}(h) =1$ for all $\mathrm{w}\in\DCo{k}{M}{\bbZ}^\star_\infty$.
Due to the group homomorphism
$\Char^\star : \Co{k}{M}{\bbZ}^\star \to
\DCo{k}{M}{\bbZ}^\star_\infty$ we have in particular
\begin{flalign}
1=\Char^\star(\phi)(h) = \phi\big(\Char(h)\big)
\end{flalign}
for all $\phi\in \Co{k}{M}{\bbZ}^\star$. As the character group $\Co{k}{M}{\bbZ}^\star$ separates points of $\Co{k}{M}{\bbZ}$ we obtain $\Char(h)=0$
and hence by (\ref{eqn:absdiagrm}) there exists $[\eta]\in \Omega^{k-1}(M)/\Omega^{k-1}_\bbZ(M)$
such that $h=\iota\big([\eta]\big)$. The original condition $\mathrm{w}(h) =1$ for all $\mathrm{w}\in\DCo{k}{M}{\bbZ}^\star_\infty$
now reduces to 
\begin{flalign}
1= \mathrm{w}\big(\iota\big([\eta]\big)\big)= \exp\big(2\pi\ii \ip{\iota^\star(\mathrm{w})}{[\eta]}\big)
\end{flalign} 
for all $\mathrm{w}\in\DCo{k}{M}{\bbZ}^\star_\infty$. Using (\ref{eqn:absdiagrmdualsmooth}) the homomorphism
$\iota^\star :\DCo{k}{M}{\bbZ}^\star_\infty\to  \mathcal{V}^{k-1}(M)$ is surjective, and hence by Lemma 
\ref{lem:separatingtoptriv} we find $[\eta]=0$. As a consequence, $h=\iota(0) =\ozero$ and the result follows.
\end{proof}

We shall now  address functoriality: First, recall that the compactly supported $p$-forms are given by a covariant
functor $\Omega^p_0(\,\cdot\,): \mathsf{Loc}^m \to \mathsf{Ab}$. (In fact, this functor maps to the category of real vector spaces.
We shall however forget the multiplication by scalars and only consider the Abelian group structure given by $+$
 on compactly supported $p$-forms.) Explicitly, to any object $M$ in $\mathsf{Loc}^m$ the functor associates the Abelian group
 $\Omega^p_0(M)$ and to any $\mathsf{Loc}^m$-morphism $f:M\to N$ the functor associates
 the $\mathsf{Ab}$-morphism given by the push-forward (i.e.\ extension by zero) $\Omega^p_0(f) :=f_\ast :\Omega^p_0(M)
 \to \Omega^p_0(N)$. Notice that $\mathcal{V}^p(\,\cdot\,) : \mathsf{Loc}^m\to\mathsf{Ab}$ is a subfunctor of $\Omega_0^p(\,\cdot\,)$: using the definition (\ref{eqn:integralvanishing}) we find
 \begin{flalign}
 \mathcal{W}_{f_\ast(\varphi)}(\omega) = \exp\big(2\pi\ii \ip{f_\ast(\varphi)}{\omega}\big) = 
 \exp\big(2\pi\ii \ip{\varphi}{f^\ast(\omega)}\big)=1
 \end{flalign}
for any $\mathsf{Loc}^m$-morphism
 $f:M\to N$, and for all $\varphi\in \mathcal{V}^p(M)$ and
 $\omega\in\Omega^p_\bbZ(N)$, where $f^\ast$ denotes the pull-back of differential forms. In the last equality we have used the fact that closed $p$-forms
with integral periods on $N$ are pulled-back under $f$ to such forms on $M$.
Thus in the diagram (\ref{eqn:absdiagrmdualsmooth}) we can regard
$\Omega^k_0(\,\cdot\,)/\mathcal{V}^k(\,\cdot\,)$, $\delta\Omega^k_0(\,\cdot\,)$ and $\mathcal{V}^{k-1}(\,\cdot\,)$
as covariant functors from $\mathsf{Loc}^m$ to $\mathsf{Ab}$. 
Furthermore, as a consequence
of being the character groups (or dual $\bbZ$-modules) 
of Abelian groups given by {\it contravariant} functors from $\mathsf{Loc}^m$ to $\mathsf{Ab}$,
we can also regard $H^k_\mathrm{free}(\,\cdot\,;\bbZ)^\star$, $\Co{k}{\,\cdot\,}{\bbZ}^\star$, 
$H^k_\mathrm{tor}(\,\cdot\,;\bbZ)^\star$, $\Co{k-1}{\,\cdot\,}{\bbT}^\star$  and 
$H^{k-1}_\mathrm{free}(\,\cdot\,;\bbZ)^\prime$ as covariant functors from $\mathsf{Loc}^m$ to $\mathsf{Ab}$. 
(Indeed, they are just given by composing the corresponding contravariant functors of
degree $k$ in (\ref{eqn:absdiagrm}) with the contravariant $\Hom$-functor $\Hom(\,\cdot\,,\bbT)$ in case of the character
groups or with $\Hom(\,\cdot\,,\bbZ)$ in case of the dual $\bbZ$-modules.)
By the same argument, the full character groups $\DCo{k}{\,\cdot\,}{\bbZ}^\star$ of $\DCo{k}{\,\cdot\,}{\bbZ}$ are given by a covariant functor
$\DCo{k}{\,\cdot\,}{\bbZ}^\star:\mathsf{Loc}^m\to\mathsf{Ab}$.
\begin{propo}
The smooth Pontryagin dual $\DCo{k}{\,\cdot\,}{\bbZ}^\star_\infty$ is a subfunctor of $\DCo{k}{\,\cdot\,}{\bbZ}^\star:\mathsf{Loc}^m\to\mathsf{Ab}$. Furthermore,
(\ref{eqn:absdiagrmdualsmooth}) is a diagram of natural transformations.
\end{propo}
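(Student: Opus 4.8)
The plan is to verify two things: (i) for every $\mathsf{Loc}^m$-morphism $f:M\to N$, the $\mathsf{Ab}$-morphism $\DCo{k}{f}{\bbZ}^\star:\DCo{k}{M}{\bbZ}^\star\to\DCo{k}{N}{\bbZ}^\star$ restricts to a map $\DCo{k}{M}{\bbZ}^\star_\infty\to\DCo{k}{N}{\bbZ}^\star_\infty$; and (ii) all the squares in (\ref{eqn:absdiagrmdualsmooth}) are natural in $M$. For (i) I would argue diagrammatically rather than with explicit cocycles. Recall that $\DCo{k}{M}{\bbZ}^\star_\infty := {\iota^\star}^{-1}\big(\mathcal{V}^{k-1}(M)\big)$, so I must show that if $\mathrm{w}\in\DCo{k}{M}{\bbZ}^\star$ satisfies $\iota^\star_M(\mathrm{w})\in\mathcal{V}^{k-1}(M)$, then $\iota^\star_N\big(\DCo{k}{f}{\bbZ}^\star(\mathrm{w})\big)\in\mathcal{V}^{k-1}(N)$. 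The key input is naturality of the (full) dualized diagram (\ref{eqn:absdiagrmdual}), which holds automatically since it is obtained by applying the contravariant functor $\Hom(\,\cdot\,,\bbT)$ to the natural diagram (\ref{eqn:absdiagrm}). In particular the square relating $\iota^\star$ to the push-forward on $\big(\Omega^{k-1}(\,\cdot\,)/\Omega^{k-1}_\bbZ(\,\cdot\,)\big)^\star$ commutes, so $\iota^\star_N\big(\DCo{k}{f}{\bbZ}^\star(\mathrm{w})\big)$ equals the image of $\iota^\star_M(\mathrm{w})$ under the dual of the push-forward on forms. It then suffices to know that the dual map carries $\mathcal{V}^{k-1}(M)$ into $\mathcal{V}^{k-1}(N)$ — but this is precisely the statement, already established in the paragraph immediately preceding this proposition, that $\mathcal{V}^{k-1}(\,\cdot\,)$ is a subfunctor of $\Omega^{k-1}_0(\,\cdot\,)$, with the functorial map being $f_\ast$ (extension by zero), the computation $\mathcal{W}_{f_\ast(\varphi)}(\omega)=\mathcal{W}_\varphi(f^\ast\omega)=1$ having been checked there. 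Hence $\DCo{k}{f}{\bbZ}^\star$ restricts as claimed, and composition with identities is inherited from the functor $\DCo{k}{\,\cdot\,}{\bbZ}^\star$, so $\DCo{k}{\,\cdot\,}{\bbZ}^\star_\infty:\mathsf{Loc}^m\to\mathsf{Ab}$ is a well-defined subfunctor.

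For (ii) I would simply observe that every arrow appearing in (\ref{eqn:absdiagrmdualsmooth}) is a restriction (or corestriction, or the induced quotient) of an arrow in (\ref{eqn:absdiagrmdual}): the maps $\Char^\star$, $\Curv^\star$, $\iota^\star$, $\kappa^\star$, the vertical map $\delta$, and the edge maps are all obtained either by applying $\Hom(\,\cdot\,,\bbT)$ (or $\Hom(\,\cdot\,,\bbZ)$) to the natural transformations in (\ref{eqn:absdiagrm}), or as the canonical maps in the smooth-dual sequence (\ref{eqn:lowersmooth}) and the left column computed in the proof of Theorem \ref{theo:smoothdual}. Since (\ref{eqn:absdiagrmdual}) is a diagram of natural transformations and restriction of a natural transformation to a subfunctor (on both source and target) is again natural, each square of (\ref{eqn:absdiagrmdualsmooth}) commutes compatibly with the morphisms $\DCo{k}{f}{\bbZ}^\star$, $\Omega^k_0(f)/\mathcal{V}^k(f)$, $\delta\Omega^k_0(f)$, $\mathcal{V}^{k-1}(f)$, $\Co{k}{f}{\bbZ}^\star$, etc. Concretely one checks, for the one genuinely new square (the middle horizontal one involving $\Curv^\star$ restricted to $\Omega^k_0(M)/\mathcal{V}^k(M)$), that the push-forward $f_\ast$ on $\Omega^k_0$ descends to the quotients by $\mathcal{V}^k$ — again immediate from $\mathcal{V}^k(\,\cdot\,)$ being a subfunctor — and that it intertwines the two copies of $\Curv^\star$; this is inherited verbatim from naturality of $\Curv^\star$ in (\ref{eqn:absdiagrmdual}).

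The step I expect to be the only point requiring a moment's care is confirming that $\DCo{k}{f}{\bbZ}^\star$ really does land in the smooth dual of $N$ and not merely in the full character group — i.e. the first half of (i) — because this is where one must combine the abstract naturality of (\ref{eqn:absdiagrmdual}) with the concrete characterization of $\mathcal{V}^{k-1}$ from Lemma \ref{lem:vanishinggroup} and the subfunctor property. Everything else is a formal consequence of the fact that passing to a subfunctor preserves the naturality of all structure maps, so no further computation is needed; I would phrase the proof as a short argument reducing both claims to the already-proven facts that (\ref{eqn:absdiagrmdual}) is natural and that $\mathcal{V}^p(\,\cdot\,)\subseteq\Omega^p_0(\,\cdot\,)$ is a subfunctor.
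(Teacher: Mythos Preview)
Your proposal is correct and follows essentially the same approach as the paper: for (i) you use naturality of $\iota^\star$ in the full dual diagram (\ref{eqn:absdiagrmdual}) together with the already-established subfunctor property of $\mathcal{V}^{k-1}(\,\cdot\,)$ to conclude that $\DCo{k}{f}{\bbZ}^\star$ maps ${\iota^\star}^{-1}(\mathcal{V}^{k-1}(M))$ into ${\iota^\star}^{-1}(\mathcal{V}^{k-1}(N))$, and for (ii) you observe that (\ref{eqn:absdiagrmdualsmooth}) is a restriction of the natural diagram (\ref{eqn:absdiagrmdual}). The paper's proof is the same argument, only phrased more tersely via a single commutative square.
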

\begin{proof}
Let $f:M\to N$ be any $\mathsf{Loc}^m$-morphism.
Restricting $\DCo{k}{f}{\bbZ}^\star : \DCo{k}{M}{\bbZ}^\star\to \DCo{k}{N}{\bbZ}^\star$ to the smooth Pontryagin
dual $\DCo{k}{M}{\bbZ}^\star_\infty$, we obtain by naturality of the
(unrestricted) morphism $\iota^\star$ the commutative diagram
\begin{flalign}
\xymatrix{
\ar[d]^-{\iota^\star}\DCo{k}{M}{\bbZ}_\infty^\star \
\ar[rr]^-{\DCo{k}{f}{\bbZ}^\star} & & \ \DCo{k}{N}{\bbZ}^\star\ar[d]^-{\iota^\star}\\
\mathcal{V}^{k-1}(M) \ \ar[rr]^-{\mathcal{V}^{k-1}(f)} && \ \mathcal{V}^{k-1}(N)
}
\end{flalign}
Hence the image of $\DCo{k}{M}{\bbZ}_\infty^\star$ under $\DCo{k}{f}{\bbZ}^\star$ is contained in the inverse image
of $\mathcal{V}^{k-1}(N)$ under $\iota^\star$, which is by the definition (\ref{eqn:smoothdualdc}) the smooth Pontryagin dual $\DCo{k}{N}{\bbZ}^\star_\infty$. Thus $\DCo{k}{\,\cdot\,}{\bbZ}^\star_\infty$ is a subfunctor of $\DCo{k}{\,\cdot\,}{\bbZ}^\star$.
 \sk
 
Finally, (\ref{eqn:absdiagrmdualsmooth}) is a diagram of natural transformations since 
it is the restriction to smooth Pontryagin duals of the diagram (\ref{eqn:absdiagrmdual})
of natural transformations, which is given by acting with the $\Hom$-functor $\Hom(\,\cdot\,,\bbT)$ 
on the degree $k$ component of the natural diagram (\ref{eqn:absdiagrm}).
\end{proof}
\begin{rem}
For any $\mathsf{Loc}^m$-morphism $f: M\to N$ we shall denote the
restriction of $\DCo{k}{f}{\bbZ}^\star$ to $\DCo{k}{M}{\bbZ}_\infty^\star$ by
$\DCo{k}{f}{\bbZ}_\infty^\star : \DCo{k}{M}{\bbZ}_\infty^\star\to \DCo{k}{N}{\bbZ}_\infty^\star$ .
\end{rem}

% % % % % % % % % % % % % % % % % % % % % % % % % % % %
% % % % % % % % % % % % % % % % % % % % % % % % % % % %

\section{\label{sec:presymp}Presymplectic Abelian group functors}

As a preparatory step towards the quantization of the smooth Pontryagin dual
$\DCo{k}{\,\cdot\,}{\bbZ}^\star_\infty : \mathsf{Loc}^m\to\mathsf{Ab}$ of a 
degree $k$ differential cohomology theory we have to equip the
Abelian groups $\DCo{k}{M}{\bbZ}^\star_\infty$ with a natural presymplectic structure $\widehat{\tau} : \DCo{k}{M}{\bbZ}^\star_\infty\times
\DCo{k}{M}{\bbZ}^\star_\infty \to\bbR$.
A useful selection criterion for these structures
is given by Peierls' construction \cite{Peierls:1952cb} that allows us to derive a
Poisson bracket which can be used as a presymplectic structure on 
$\DCo{k}{M}{\bbZ}^\star_\infty$. We shall now explain this
construction in some detail, referring to \cite[Remark 3.5]{Benini:2013tra} where a similar construction is done for 
connections on a fixed $\bbT$-bundle.
\sk

Let $M$ be any object in $\mathsf{Loc}^m$. Recall that any element $\mathrm{w}\in \DCo{k}{M}{\bbZ}^\star_\infty$
is a group character, i.e.\ a homomorphism of Abelian groups $\mathrm{w}: \DCo{k}{M}{\bbZ} \to\bbT$ to the circle
group $\bbT$. Using the inclusion $\bbT\hookrightarrow \bbC$ of the circle group into the complex numbers
of modulus one, we may regard $\mathrm{w}$ as a complex-valued functional, i.e.\ $\mathrm{w}: \DCo{k}{M}{\bbZ} \to\bbC$.
We use the following notion of functional derivative, which we derive in Appendix \ref{app:frechet} from 
a Fr{\'e}chet-Lie group structure on $\DCo{k}{M}{\bbZ} $.
\begin{defi}\label{def:funcder}
For any $\mathrm{w}\in \DCo{k}{M}{\bbZ}^\star_\infty$ considered as a complex-valued functional 
$\mathrm{w}: \DCo{k}{M}{\bbZ} \to\bbC$, the {\bf functional derivative} of $\mathrm{w}$ at $h\in\DCo{k}{M}{\bbZ}$ 
along the vector $[\eta]\in\Omega^{k-1}(M)/\dd\Omega^{k-2}(M)$ (if it exists) is defined by
\begin{flalign}\label{eqn:funcder}
\mathrm{w}^{(1)}(h)\big([\eta]\big) := \lim_{\epsilon\to 0} \,
\frac{\mathrm{w}\big(h + \iota\big([\epsilon \, \eta]\big)\big) -\mathrm{w}(h)}{\epsilon}~,
\end{flalign}
where we have suppressed the projection
$\Omega^{k-1}(M)/\dd\Omega^{k-2}(M)\to \Omega^{k-1}(M)/\Omega^{k-1}_{\bbZ}(M)$  
that is induced by the identity on $\Omega^{k-1}(M)$.
\end{defi}
\begin{propo}
For any $\mathrm{w}\in \DCo{k}{M}{\bbZ}^\star_\infty$,  $h\in\DCo{k}{M}{\bbZ}$  and  $[\eta]\in\Omega^{k-1}(M)/\dd\Omega^{k-2}(M)$
the functional derivative exists and reads as
\begin{flalign}
\mathrm{w}^{(1)}(h)\big([\eta]\big) = 2\pi\ii\, \mathrm{w}(h) \, \ip{\iota^{\star}(\mathrm{w})}{[\eta]}~.
\end{flalign}
\end{propo}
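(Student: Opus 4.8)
The plan is to use the one structural property of $\mathrm{w}$ that matters here, namely that it is a homomorphism of Abelian groups $\DCo{k}{M}{\bbZ}\to\bbT\hookrightarrow\bbC$. First I would exploit multiplicativity to split the numerator of the difference quotient (\ref{eqn:funcder}). Since $h+\iota([\epsilon\,\eta])$ denotes the group operation in $\DCo{k}{M}{\bbZ}$ (note that $\dd\Omega^{k-2}(M)\subseteq\Omega^{k-1}_\bbZ(M)$, so the projection suppressed in Definition \ref{def:funcder} is well defined and $\iota([\epsilon\,\eta])$ makes sense), one has
\[
\mathrm{w}\big(h+\iota([\epsilon\,\eta])\big)=\mathrm{w}(h)\,\mathrm{w}\big(\iota([\epsilon\,\eta])\big)~,
\]
hence $\big(\mathrm{w}(h+\iota([\epsilon\,\eta]))-\mathrm{w}(h)\big)/\epsilon=\mathrm{w}(h)\,\big(\mathrm{w}(\iota([\epsilon\,\eta]))-1\big)/\epsilon$, and the problem reduces to differentiating the one-parameter family $\epsilon\mapsto\mathrm{w}(\iota([\epsilon\,\eta]))$ at $\epsilon=0$.

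The next step is to identify this family explicitly. By definition of the dual map, $\mathrm{w}\circ\iota=\iota^\star(\mathrm{w})$, and by the definition (\ref{eqn:smoothdualdc}) of the smooth Pontryagin dual together with $\big(\Omega^{k-1}(M)/\Omega^{k-1}_\bbZ(M)\big)^\star_\infty=\mathcal{V}^{k-1}(M)$, the character $\iota^\star(\mathrm{w})$ lies in $\mathcal{V}^{k-1}(M)\subseteq\Omega^{k-1}_0(M)$, i.e.\ it equals $\mathcal{W}_{\iota^\star(\mathrm{w})}$ under the embedding $\mathcal{W}$ of (\ref{eqn:calWmap}). Using (\ref{eqn:calWmap}) and linearity of the pairing (\ref{eqn:pairing}) in its second slot,
\[
\mathrm{w}\big(\iota([\epsilon\,\eta])\big)=\mathcal{W}_{\iota^\star(\mathrm{w})}(\epsilon\,\eta)=\exp\big(2\pi\ii\,\epsilon\,\ip{\iota^\star(\mathrm{w})}{\eta}\big)~.
\]
Here I would check that the exponent depends only on the class $[\eta]\in\Omega^{k-1}(M)/\dd\Omega^{k-2}(M)$ along which one differentiates: by Lemma \ref{lem:vanishinggroup} the form $\iota^\star(\mathrm{w})$ is coclosed, so $\ip{\iota^\star(\mathrm{w})}{\dd\alpha}=\ip{\delta\iota^\star(\mathrm{w})}{\alpha}=0$ for all $\alpha\in\Omega^{k-2}(M)$ (the supports overlap compactly since $\iota^\star(\mathrm{w})$ is compactly supported), so that $\ip{\iota^\star(\mathrm{w})}{[\eta]}$ is well defined, consistently with the statement.

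Finally I would pass to the limit using the elementary identity $\lim_{\epsilon\to0}\epsilon^{-1}\big(\exp(2\pi\ii\,\epsilon\,t)-1\big)=2\pi\ii\,t$ with $t=\ip{\iota^\star(\mathrm{w})}{[\eta]}$, which shows that the limit in (\ref{eqn:funcder}) exists and equals $\mathrm{w}(h)\cdot2\pi\ii\,\ip{\iota^\star(\mathrm{w})}{[\eta]}$, as claimed. There is no genuine analytic difficulty in this argument; the only point that requires any care — and the closest thing to an obstacle — is the bookkeeping between the two quotients $\Omega^{k-1}(M)/\dd\Omega^{k-2}(M)$ and $\Omega^{k-1}(M)/\Omega^{k-1}_\bbZ(M)$ and the verification, sketched above, that coclosedness of $\iota^\star(\mathrm{w})$ lets the pairing descend to the finer quotient. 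If one wished to be fully rigorous about the notion of functional derivative itself, the curve $\epsilon\mapsto h+\iota([\epsilon\,\eta])$ would be justified as a smooth curve through $h$ via the Fr\'echet--Lie structure of Appendix \ref{app:frechet}; but since Definition \ref{def:funcder} already fixes the explicit difference quotient, the computation above suffices.
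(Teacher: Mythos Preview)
Your argument is correct and follows essentially the same route as the paper's proof: split the difference quotient using that $\mathrm{w}$ is a group homomorphism, rewrite $\mathrm{w}\circ\iota$ via $\mathcal{W}_{\iota^\star(\mathrm{w})}$ using (\ref{eqn:calWmap}), and differentiate the resulting exponential. The only addition is your explicit check that the pairing descends to $\Omega^{k-1}(M)/\dd\Omega^{k-2}(M)$ via coclosedness of $\iota^\star(\mathrm{w})$, which the paper leaves implicit.
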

\begin{proof}
We compute (\ref{eqn:funcder}) explicitly to get
\begin{flalign}
\nn \mathrm{w}^{(1)}(h)\big([\eta]\big)  &= \lim_{\epsilon\to 0} \, \frac{\mathrm{w}(h)~ \mathrm{w}\big(\iota\big([\epsilon\, \eta]\big)\big) -\mathrm{w}(h)}{\epsilon}\\[4pt]
 &= \mathrm{w}(h)~\lim_{\epsilon\to 0} \,
\frac{\exp\big(2\pi\ii\ip{\iota^\star(\mathrm{w})}{[\epsilon\, \eta]}\big) -1}{\epsilon} =
\mathrm{w}(h) ~2\pi\ii \ip{\iota^{\star}(\mathrm{w})}{[\eta]}~.
\end{flalign}
In the first equality we have used the fact that $\mathrm{w}$ is a homomorphism of Abelian groups 
and in the second equality the group homomorphism (\ref{eqn:calWmap}).
\end{proof}

To work out Peierls' construction we need a Lagrangian, which we
take to be
the generalized Maxwell Lagrangian
\begin{flalign}\label{eqn:lagrangian}
L(h) = \frac{\lambda}{2} \,\Curv(h)\wedge\ast\, \big(\Curv(h)\big)~, 
\end{flalign}
where $\lambda>0$ is a ``coupling'' constant and the factor $\frac12$ is purely conventional. The corresponding Euler-Lagrange equation coincides (up to the factor $\lambda$) with the Maxwell map
defined in Section \ref{sec:MWmaps}, so they have the same solution subgroups. Given any solution
$h\in\widehat{\mathfrak{Sol}}{}^k(M)$ of the Euler-Lagrange equation $\lambda\,\delta\big(\Curv(h)\big)=0$,
Peierls' proposal is to study the retarded/advanced effect of a functional $\mathrm{w}$ on this solution.
Adapted to our setting, we shall introduce a formal parameter $\varepsilon$ and search
for $\eta_\mathrm{w}^\pm \in \Omega^{k-1}(M)$ such that $h_\mathrm{w}^\pm := h + \iota\big([\varepsilon\, \eta^\pm_\mathrm{w}]\big) $ 
solves the partial differential equation
\begin{flalign}\label{eqn:tmpPeierlsinhom}
\lambda \,\delta\big(\Curv(h_\mathrm{w}^\pm)\big) +  \varepsilon\, \mathrm{w}^{(1)}(h^\pm_\mathrm{w})=0
\end{flalign}
up to first order in $\varepsilon$ and such that $\eta_\mathrm{w}^\pm$ satisfies a suitable asymptotic condition to
be stated below. Expanding (\ref{eqn:tmpPeierlsinhom}) to first order in $\varepsilon$ (and using $\delta\big(\Curv(h)\big)=0$) yields
the inhomogeneous equation
\begin{flalign}\label{eqn:tmp:Peierlsinhom2}
\lambda\,\delta \dd\eta^{\pm}_\mathrm{w} + 2\pi\ii \mathrm{w}(h) \, \iota^\star(\mathrm{w})=0~.
\end{flalign}
The requisite asymptotic condition on $\eta_\mathrm{w}^\pm$ is as follows: 
There exist small gauge transformations $\dd \chi_\mathrm{w}^\pm \in \dd \Omega^{k-2}(M)$ and Cauchy surfaces 
$\Sigma_\mathrm{w}^\pm$ in $M$ such that
\begin{flalign}
\big(\eta^\pm_\mathrm{w} + \dd\chi^\pm_{\mathrm{w}}\big)\big\vert_{J_M^{\mp}(\Sigma_\mathrm{w}^\pm)} = 0~,
\end{flalign}
where $J^{\pm}_M(A)$ denotes the causal future/past of a subset
$A\subseteq M$. In simple terms, this requires
$\eta^+_\mathrm{w}$ to be pure gauge in the far past and $\eta^-_\mathrm{w}$ to be pure gauge in the far future.
Under these assumptions, the unique (up to small gauge invariance) solution to (\ref{eqn:tmp:Peierlsinhom2})
is given by $\eta^\pm_\mathrm{w} =  -\frac{2\pi\ii}{\lambda}\, 
\mathrm{w}(h) \, G^\pm\big(\iota^\star(\mathrm{w})\big)$,
where $G^\pm : \Omega^{k-1}_0(M)\to \Omega^{k-1}(M)$ denote the unique retarded/advanced Green's operators
of the Hodge-d'Alembert operator $\square := \delta \circ \dd + \dd \circ \delta : \Omega^{k-1}(M)\to\Omega^{k-1}(M)$.
Following further the construction of Peierls we define the retarded/advanced effect of $\mathrm{w}$ on
 $\mathrm{v}\in\DCo{k}{M}{\bbZ}^\star_\infty$ (considered
also as a functional $\mathrm{v}: \DCo{k}{M}{\bbZ} \to\bbC$) by taking the functional derivative of $\mathrm{v}$ at $h$ along
$[\eta^{\pm}_\mathrm{w}]$, i.e.\ 
\begin{flalign}
\big(E^\pm_\mathrm{w}(\mathrm{v}) \big)(h) := \mathrm{v}^{(1)}(h)\big([\eta^\pm_\mathrm{w}]\big) = 
\frac{4\pi^2 }{\lambda}
\,\ip{\iota^\star(\mathrm{v})}{G^\pm\big(\iota^\star(\mathrm{w})\big)}
\,\mathrm{v}(h)\,\mathrm{w}(h)~.
\end{flalign}
The difference between the retarded and advanced effects defines a Poisson bracket on the associative, commutative and unital
$\ast$-algebra generated by $\DCo{k}{M}{\bbZ}^\star_\infty$. For two generators $\mathrm{v},\mathrm{w}\in \DCo{k}{M}{\bbZ}^\star_\infty$
the Poisson bracket reads as
\begin{flalign}
\{\mathrm{v},\mathrm{w}\} =  4\pi^2 \, \widehat{\tau}(\mathrm{v},\mathrm{w}) ~\mathrm{v}\,\mathrm{w}~,
\end{flalign}
with the antisymmetric bihomomorphism of Abelian groups
\begin{flalign}
\widehat{\tau} \,:\, \DCo{k}{M}{\bbZ}_\infty^\star\times
\DCo{k}{M}{\bbZ}_\infty^\star \ & \longrightarrow \ \bbR ~, \nonumber \\
(\mathrm{v},\mathrm{w}) \ & \longmapsto \ 
\widehat{\tau}(\mathrm{v},\mathrm{w}) =\lambda^{-1}\,\ip{\iota^\star(\mathrm{v})}{G\big(\iota^\star(\mathrm{w})\big)}~.
\label{eqn:presymp}\end{flalign}
In this expression $G:= G^+ - G^- : \Omega^{k-1}_0(M)\to \Omega^{k-1}(M)$ is the retarded-minus-advanced Green's operator.
Antisymmetry of $\widehat{\tau}$ follows from the fact that $G$ is
formally skew-adjoint
as a consequence of $\square$ being formally self-adjoint with respect to the inner product on forms $\ip{\,\cdot\,}{\,\cdot\,}$.
\sk

By naturality of the Green's operators $G^\pm$ and the inner product $\ip{\,\cdot\,}{\,\cdot\,}$,
the presymplectic structure $\widehat{\tau}$ is also natural.
This allows us to promote the covariant functor $\DCo{k}{\,\cdot\,}{\bbZ}^\star_\infty : \mathsf{Loc}^m\to \mathsf{Ab}$
to a functor with values in the category of presymplectic Abelian groups $\mathsf{PAb}$ defined as follows:
The objects in $\mathsf{PAb}$ are pairs $(G,\sigma)$, where $G$ is an Abelian group and $\sigma : G\times G\to \bbR$ 
is an antisymmetric bihomomorphism of Abelian groups (called a presymplectic structure), 
i.e.\ for any $g\in G$, the maps $\sigma(\,\cdot\,,g)$, $\sigma(g,\,\cdot\,): G\to \bbR$
are both homomorphisms of Abelian groups. The morphisms in $\mathsf{PAb}$ are group homomorphisms 
$\phi : G\to G^\prime$ that preserve the presymplectic structures, i.e.\ $\sigma^\prime\circ ( \phi\times\phi) = \sigma$.
\begin{defi}
The {\bf off-shell presymplectic Abelian group functor} $\widehat{\mathfrak{G}}^k_\mathrm{o}(\,\cdot\, ) : \mathsf{Loc}^m\to \mathsf{PAb}$
for a degree $k$ differential cohomology theory is defined as follows: To an object $M$ in $\mathsf{Loc}^m$ 
it associates the presymplectic Abelian group 
$\widehat{\mathfrak{G}}^k_\mathrm{o}(M) := \big(\DCo{k}{M}{\bbZ}^\star_\infty , \widehat{\tau} \big)$ 
with $\widehat{\tau}$ given in (\ref{eqn:presymp}).
To a $\mathsf{Loc}^m$-morphism $f: M\to N$ it associates the $\mathsf{PAb}$-morphism
$\widehat{\mathfrak{G}}^k_\mathrm{o}(f): \widehat{\mathfrak{G}}^k_\mathrm{o}(M)\to\widehat{\mathfrak{G}}^k_\mathrm{o}(N)$
 that is induced by the $\mathsf{Ab}$-morphism 
  $\DCo{k}{f}{\bbZ}^\star_\infty : \DCo{k}{M}{\bbZ}^\star_\infty\to  \DCo{k}{N}{\bbZ}^\star_\infty$.
\end{defi}
\sk

The terminology {\it off-shell} comes from the physics literature and it means that the Abelian groups underlying $\widehat{\mathfrak{G}}^k_\mathrm{o}(\,\cdot\,)$ are (subgroups of) the character groups
of $\DCo{k}{\,\cdot\,}{\bbZ}$. In contrast, the Abelian groups underlying the {\it on-shell} presymplectic 
Abelian group functor should be (subgroups of) the character groups of the subfunctor $\widehat{\mathfrak{Sol}}{}^k(\,\cdot\,)$
of $\DCo{k}{\,\cdot\,}{\bbZ}$, see Section \ref{sec:MWmaps}. We shall discuss the on-shell presymplectic Abelian group functor
later in this section after making some remarks on $\widehat{\mathfrak{G}}^k_\mathrm{o}(\,\cdot\,)$.
\sk

Our first remark is concerned with the presymplectic structure (\ref{eqn:presymp}). Notice that 
$\widehat{\tau}$ is the pull-back under $\iota^\star$ of the presymplectic structure
on the Abelian group $\mathcal{V}^{k-1}(M)$ given by
\begin{flalign}\label{eqn:trivpresymp}
\tau \,:\, \mathcal{V}^{k-1}(M)\times\mathcal{V}^{k-1}(M) \
\longrightarrow \ \bbR~, \qquad (\varphi,\psi) \ \longmapsto \ \tau(\varphi,\psi) = \lambda^{-1}\, \ip{\varphi}{G(\psi)}~.
\end{flalign}
By using this presymplectic structure, the covariant functor $\mathcal{V}^{k-1}(\,\cdot\,):\mathsf{Loc}^m\to\mathsf{Ab}$ 
can be promoted to a functor $\mathfrak{G}^k_{\mathrm{o}}(\,\cdot\,) : \mathsf{Loc}^m\to\mathsf{PAb}$ 
taking values in the category $\mathsf{PAb}$: For any object $M$ in $\mathsf{Loc}^m$ we set $\mathfrak{G}^k_{\mathrm{o}}(M) := 
\big(\mathcal{V}^{k-1}(M),\tau\big)$ with $\tau$ given in (\ref{eqn:trivpresymp}) and for any
$\mathsf{Loc}^m$-morphism $f:M\to N$ we set 
$\mathfrak{G}^k_{\mathrm{o}}(f) : \mathfrak{G}^k_{\mathrm{o}}(M)\to \mathfrak{G}^k_{\mathrm{o}}(N)$
to be the $\mathsf{PAb}$-morphism induced by the 
$\mathsf{Ab}$-morphism $\mathcal{V}^{k-1}(f): \mathcal{V}^{k-1}(M)\to \mathcal{V}^{k-1}(N)$.
By Theorem \ref{theo:smoothdual} and (\ref{eqn:presymp}), we have a surjective natural transformation
$\iota^\star: \widehat{\mathfrak{G}}^k_\mathrm{o}(\,\cdot\,)\Rightarrow \mathfrak{G}^k_\mathrm{o}(\,\cdot\,)$
 between functors from $\mathsf{Loc}^m$ to $\mathsf{PAb}$.  
Furthermore, by equipping the Abelian groups $H^k(M;\bbZ)^\star$ with
the trivial presymplectic structure, we may regard $H^k(\,\cdot\,;\bbZ)^\star : \mathsf{Loc}^m\to\mathsf{Ab}$ as a covariant
functor with values in $\mathsf{PAb}$.
Theorem \ref{theo:smoothdual} then provides us with a
natural exact sequence in the category $\mathsf{PAb}$ given by
\begin{flalign}
\xymatrix{
0 \ \ar[r] & \ H^k(M;\bbZ)^\star \ \ar[r]^-{\Char^\star} & \
\widehat{\mathfrak{G}}^k_\mathrm{o}(M) \ \ar[r]^-{\iota^\star} & \
\mathfrak{G}^k_\mathrm{o}(M) \ \ar[r] & \ 0  \ .
}
\end{flalign}
If we pull back $\widehat{\tau}$ under the
 natural transformation $\Curv^\star : \Omega^k_0(\,\cdot\,)/\mathcal{V}^k(\,\cdot\,) \Rightarrow \DCo{k}{\,\cdot\,}{\bbZ}^\star_\infty$,
 we can promote the covariant functor $\Omega^k_0(\,\cdot\,)/\mathcal{V}^k(\,\cdot\,) :\mathsf{Loc}^m\to \mathsf{Ab}$
  to a functor with values  in the category 
  $\mathsf{PAb}$, which we denote by $\mathfrak{F}_\mathrm{o}^k(\,\cdot\,): \mathsf{Loc}^m\to\mathsf{PAb}$.
  Using again Theorem \ref{theo:smoothdual}, we obtain a natural
  diagram in the category $\mathsf{PAb}$ given by
  \begin{flalign}\label{eqn:subsystemclassical}
  \xymatrix{
  &&0\ar[d]&&\\
  &&\mathfrak{F}^k_\mathrm{o}(M)\ar[d]^-{\Curv^\star} & &\\
  0 \ \ar[r] & \ H^k(M;\bbZ)^\star \ \ar[r]^-{\Char^\star} & \
  \widehat{\mathfrak{G}}^k_\mathrm{o}(M) \ \ar[r]^-{\iota^\star} & \
  \mathfrak{G}^k_\mathrm{o}(M) \ \ar[r] & \ 0 
  }
  \end{flalign}
  where the horizontal and vertical sequences are exact. 

  \begin{rem}\label{rem:physinterpretation}
  The diagram (\ref{eqn:subsystemclassical}) has the following physical interpretation.
  If we think of the covariant functor $\widehat{\mathfrak{G}}^k_\mathrm{o}(\,\cdot\,)$ as a field theory
   describing classical observables on the differential cohomology groups $\DCo{k}{\,\cdot\,}{\bbZ}$,
   the diagram shows that this field theory has two (faithful) subtheories: The first subtheory $H^k(\,\cdot\,;\bbZ)^\star$
    is purely topological and it describes observables on the cohomology groups $H^k(\,\cdot\,;\bbZ)$. The second 
   subtheory $\mathfrak{F}^k_\mathrm{o}(\,\cdot\,)$ describes only the ``field strength observables'',
   i.e.\ classical observables measuring the curvature of elements in $\DCo{k}{\,\cdot\,}{\bbZ}$.
   In addition to $\widehat{\mathfrak{G}}^k_\mathrm{o}(\,\cdot\,)$ having two subtheories,
   it also projects onto the field theory $\mathfrak{G}^k_\mathrm{o}(\,\cdot\,) $ describing classical observables of topologically trivial fields.
  \end{rem}
    \begin{rem}
  In the $\mathsf{PAb}$-diagram (\ref{eqn:subsystemclassical}) the
  character group $\Co{k-1}{M}{\bbT}^\star$ (cf.\ the $\mathsf{Ab}$-diagram (\ref{eqn:absdiagrmdualsmooth})) does not appear.
  The reason is that there is no presymplectic
  structure on $\Co{k-1}{\,\cdot\,}{\bbT}^\star$ such that the components of both $\Curv^\star$ and $\kappa^\star$ are $\mathsf{PAb}$-morphisms:
  if such a presymplectic structure $\sigma$ would exist, then the presymplectic structure on
  $\mathfrak{F}_\mathrm{o}^k(\,\cdot\,)$ would have to be trivial
   as it would be given by the pull-back of $\sigma$ along $\kappa^\star\circ \Curv^\star =0$.
This is not the case.
We expect that the role of the flat classes $\Co{k-1}{\,\cdot\,}{\bbT}$ is that of a local 
symmetry group of the field theory $\widehat{\mathfrak{G}}^k_\mathrm{o}(\,\cdot\,)$.
This claim is strengthened by noting that adding flat classes does not change the generalized Maxwell 
Lagrangian (\ref{eqn:lagrangian}). In future work we plan to study this local symmetry group in detail and 
also try to understand its role in Abelian S-duality.
\end{rem}
\sk

We shall now discuss the on-shell presymplectic Abelian group functor for a degree $k$ differential cohomology theory.
Recall that for any object $M$ in $\mathsf{Loc}^m$ we 
have that $\widehat{\mathfrak{Sol}}{}^k(M)\subseteq \DCo{k}{M}{\bbZ}$ is a subgroup. Thus any element 
$\mathrm{w}\in \DCo{k}{M}{\bbZ}^\star_\infty$ defines a group character on $\widehat{\mathfrak{Sol}}{}^k(M)$.
However, there are elements $\mathrm{w} \in \DCo{k}{M}{\bbZ}^\star_\infty$ which give rise to a trivial group
 character on $\widehat{\mathfrak{Sol}}{}^k(M)$, i.e.\ $\mathrm{w}(h) =1$ for all $h\in \widehat{\mathfrak{Sol}}{}^k(M)$.
 We collect all of these elements in the vanishing subgroups
 \begin{flalign}
 \widehat{\mathfrak{I}}^k(M) := \big\{\mathrm{w}\in
 \DCo{k}{M}{\bbZ}^\star_\infty \ : \ \mathrm{w}(h) =1 \quad \forall h\in  \widehat{\mathfrak{Sol}}{}^k(M)\big\}~.
 \end{flalign}
Notice that $\widehat{\mathfrak{I}}^k(\,\cdot\,): \mathsf{Loc}^m \to \mathsf{Ab}$ is a subfunctor of $\DCo{k}{\,\cdot\,}{\bbZ}^\star_\infty$.
In order to characterize this subfunctor, let us first dualize the Maxwell maps 
$\mathrm{MW} = \delta\circ \Curv  : \DCo{k}{M}{\bbZ}\to \Omega^{k-1}(M)$ to the smooth Pontryagin duals.
This yields the group homomorphisms
\begin{flalign}
\mathrm{MW}^\star = \Curv^\star \circ \dd \,:\, \Omega^{k-1}_0(M) \
\longrightarrow \ \DCo{k}{M}{\bbZ}^\star_\infty~.
\end{flalign}
It is immediate to see that the image
$\mathrm{MW}^\star[\Omega^{k-1}_0(M) ]$ is a subgroup of
$\widehat{\mathfrak{I}}^k(M)$: for any $\rho\in \Omega^{k-1}_0(M) $ we have
\begin{flalign}
\mathrm{MW}^\star(\rho)(h) = \exp\big(2\pi\ii\ip{\rho}{\mathrm{MW}(h)}\big) =1
\end{flalign}
for all $h\in\widehat{\mathfrak{Sol}}{}^k(M)$.
\begin{propo}\label{propo:vanishingsolution}
$\widehat{\mathfrak{I}}^k(M) = \mathrm{MW}^\star[\Omega^{k-1}_0(M)]$.
\end{propo}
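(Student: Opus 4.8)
The inclusion $\mathrm{MW}^\star[\Omega^{k-1}_0(M)]\subseteq\widehat{\mathfrak{I}}^k(M)$ has already been established, so the plan is to prove the reverse inclusion. Let $\mathrm{w}\in\widehat{\mathfrak{I}}^k(M)$. First I would observe that $\kappa$ maps into the kernel of $\Curv$ (see the middle horizontal row of (\ref{eqn:absdiagrm})), so that $\kappa[\Co{k-1}{M}{\bbT}]\subseteq\widehat{\mathfrak{Sol}}{}^k(M)$; since $\mathrm{w}(h)=1$ for all $h\in\widehat{\mathfrak{Sol}}{}^k(M)$ this forces $\kappa^\star(\mathrm{w})=0$. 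By exactness of the middle horizontal sequence in (\ref{eqn:absdiagrmdualsmooth}) (Theorem \ref{theo:smoothdual}) one may then write $\mathrm{w}=\Curv^\star\big([\varphi]\big)$ for some $[\varphi]\in\Omega^k_0(M)/\mathcal{V}^k(M)$, and fix a representative $\varphi\in\Omega^k_0(M)$.

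The next step is to unpack the condition $\mathrm{w}(h)=1$ on $\widehat{\mathfrak{Sol}}{}^k(M)$. Since $\mathrm{w}(h)=\exp\big(2\pi\ii\,\ip{\varphi}{\Curv(h)}\big)$ and, by Theorem \ref{theo:soldiagram}, $\Curv$ maps $\widehat{\mathfrak{Sol}}{}^k(M)$ onto $\Omega^k_{\bbZ,\,\delta}(M)$, this condition is equivalent to
\begin{flalign}\label{eqn:vanishingstar}
\ip{\varphi}{\omega}\in\bbZ\qquad\text{for all }\omega\in\Omega^k_{\bbZ,\,\delta}(M)~.
\end{flalign}
I will use two consequences of (\ref{eqn:vanishingstar}). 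First, if $\eta\in\Omega^{k-1}(M)$ satisfies $\delta\dd\eta=0$, then $s\,\dd\eta\in\Omega^k_{\bbZ,\,\delta}(M)$ for every $s\in\bbR$ (it is closed, coclosed and exact, hence has integral periods), so (\ref{eqn:vanishingstar}) gives $s\,\ip{\varphi}{\dd\eta}\in\bbZ$ for all $s\in\bbR$ and therefore $\ip{\delta\varphi}{\eta}=\ip{\varphi}{\dd\eta}=0$; thus $\delta\varphi$ annihilates every smooth solution of $\delta\dd\eta=0$. Second, by the right-hand column of (\ref{eqn:soldiagrm}) the de Rham class map $\Omega^k_{\bbZ,\,\delta}(M)\to H^k_\mathrm{free}(M;\bbZ)$ is surjective, so every class in $H^k_\mathrm{free}(M;\bbZ)$ has a representative in $\Omega^k_{\bbZ,\,\delta}(M)$.

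The analytic heart of the argument is then to solve $\delta\dd\rho=\delta\varphi$ with $\rho\in\Omega^{k-1}_0(M)$. The inhomogeneity $\delta\varphi$ is coclosed, compactly supported, and (by the first consequence above) annihilates every smooth solution of $\delta\dd\eta=0$; by the theory of the operator $\delta\circ\dd$ on globally hyperbolic spacetimes — concretely, by passing to a Lorenz-type gauge and invoking the retarded-minus-advanced Green's operator of $\square=\delta\circ\dd+\dd\circ\delta$ as in \cite{Benini:2014vsa,Sanders:2012sf} — such a compactly supported $\rho$ exists. I expect this solvability to be the main obstacle, since it is precisely the step not visible to the purely algebraic data of the diagram (\ref{eqn:absdiagrmdualsmooth}) and requires the full hyperbolic-PDE machinery.

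Finally I would set $\psi:=\varphi-\dd\rho\in\Omega^k_0(M)$ and check that $\psi\in\mathcal{V}^k(M)$. By construction $\delta\psi=\delta\varphi-\delta\dd\rho=0$, so $\psi\in\Omega^k_{0,\,\delta}(M)$, and by Lemma \ref{lem:vanishinggroup} it remains to verify $[\psi]\in H^k_\mathrm{free}(M;\bbZ)^\prime$. Given a class in $H^k_\mathrm{free}(M;\bbZ)$, choose (by the second consequence above) a representative $\omega\in\Omega^k_{\bbZ,\,\delta}(M)$; then $\ip{\psi}{\omega}=\ip{\varphi}{\omega}-\ip{\rho}{\delta\omega}=\ip{\varphi}{\omega}\in\bbZ$ by (\ref{eqn:vanishingstar}), so the induced pairing with $H^k_\mathrm{free}(M;\bbZ)$ is integer-valued and hence $\psi\in\mathcal{V}^k(M)$. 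Consequently $[\varphi]=[\dd\rho]$ in $\Omega^k_0(M)/\mathcal{V}^k(M)$, so $\mathrm{w}=\Curv^\star\big([\dd\rho]\big)=\mathrm{MW}^\star(\rho)\in\mathrm{MW}^\star[\Omega^{k-1}_0(M)]$, which together with the already established inclusion completes the proof.
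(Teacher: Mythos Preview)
Your proposal is correct and follows essentially the same strategy as the paper: show $\kappa^\star(\mathrm{w})=0$, write $\mathrm{w}=\Curv^\star([\varphi])$, reduce to solving $\delta\dd\rho=\delta\varphi$ with $\rho\in\Omega^{k-1}_0(M)$, and conclude $[\varphi]=[\dd\rho]$. There are two minor organizational differences worth noting. First, you test $\mathrm{w}$ directly against $\Omega^k_{\bbZ,\,\delta}(M)$ via $\Curv$, whereas the paper pulls through $\iota^\star$ to $\mathfrak{Sol}^k(M)$ and then invokes \cite[Theorem~7.4]{Benini:2014vsa} to rewrite solutions as $G(\alpha)$; the resulting linear conditions on $\delta\varphi$ are equivalent. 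Second, your closing step is slightly more economical: having found $\rho$, you verify $\varphi-\dd\rho\in\mathcal{V}^k(M)$ directly via Lemma~\ref{lem:vanishinggroup} and the integrality condition on $\Omega^k_{\bbZ,\,\delta}(M)$, whereas the paper instead shows $\iota^\star\big(\mathrm{w}-\mathrm{MW}^\star(\rho)\big)=0$, writes the difference as $\Char^\star(\phi)$, and kills $\phi$ using surjectivity of $\Char$ on $\widehat{\mathfrak{Sol}}{}^k(M)$.

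The one place where your proposal is thinner than the paper is the analytic step, which you correctly flag as the crux but treat as a black box. The references you cite do not quite contain this solvability statement off the shelf; the paper supplies the argument explicitly. From $\ip{G(\delta\varphi)}{\alpha}=0$ for all $\alpha\in\Omega^{k-1}_{\mathrm{tc},\,\delta}(M)$ one gets, for $k=1$, $G(\delta\varphi)=0$ and hence $\delta\varphi=\square\rho=\delta\dd\rho$ with $\rho\in\Omega^0_0(M)$; for $k>1$, Poincar\'e duality between spacelike and timelike compact de~Rham cohomologies \cite[Theorem~6.2]{Benini:2014vsa} gives $G(\delta\varphi)=\dd\chi$ with $\chi$ spacelike compact, then $\delta\dd\chi=0$ yields $\dd\chi=\dd G(\beta)$ for some $\beta\in\Omega^{k-2}_{0,\,\delta}(M)$, so $G(\delta\varphi-\dd\beta)=0$ and $\delta\varphi-\dd\beta=\square\rho$ with $\rho\in\Omega^{k-1}_0(M)$; applying $\delta$ forces $\beta=-\delta\rho$ and hence $\delta\varphi=\delta\dd\rho$. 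You should include this chain of reasoning rather than deferring it.
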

\begin{proof}
The inclusion ``$\supseteq$'' was shown above. To prove the inclusion ``$\subseteq$'', let us take an arbitrary 
$\mathrm{w}\in\widehat{\mathfrak{I}}^k(M) $. By (\ref{eqn:soldiagrm}) we have 
$\kappa[\Co{k-1}{M}{\bbT}]\subseteq \widehat{\mathfrak{Sol}}{}^k(M)$, which implies that 
$\kappa^\star(\mathrm{w})=0$ is the trivial group character on $\Co{k-1}{M}{\bbT}$.
As a consequence of (\ref{eqn:absdiagrmdualsmooth}) (exactness of the middle horizontal sequence),
we have $\mathrm{w} = \Curv^\star\big([\varphi]\big)$ for some $[\varphi]\in \Omega^k_0(M)/\mathcal{V}^k(M)$.
Furthermore, applying $\iota^\star$ on $\mathrm{w}$ and using again
the commutative diagram (\ref{eqn:absdiagrmdualsmooth}) we find $\iota^\star(\mathrm{w}) = \delta\varphi\in\mathcal{V}^{k-1}(M)$.
Due to the injective group homomorphism $\iota$ (cf.\ (\ref{eqn:soldiagrm})), the group character $ \iota^\star(\mathrm{w}) $
has to be trivial on $\mathfrak{Sol}^k(M)$, i.e.\ 
\begin{flalign}
\exp\big(2\pi\ii\ip{\delta\varphi}{[\eta]}\big)=1
\end{flalign}
for all $[\eta]\in\mathfrak{Sol}^k(M)$.
It is immediate to see that  $[G(\alpha)]\in \mathfrak{Sol}^k(M)$, 
for any $\alpha \in\Omega^{k-1}_{\mathrm{tc},\,\delta}(M)$ of timelike compact support,
and by \cite[Theorem 7.4]{Benini:2014vsa} any $[\eta]\in \mathfrak{Sol}^k(M)$ has a representative of the 
form $\eta = G(\alpha)$, with $\alpha \in\Omega^{k-1}_{\mathrm{tc},\,\delta}(M)$.  Hence, we obtain the equivalent condition
\begin{flalign}
\ip{\delta\varphi}{ G(\alpha)} =0
\end{flalign}
for all $\alpha\in\Omega^{k-1}_{\mathrm{tc},\,\delta}(M)$.
For $k=1$ this condition implies $G(\delta\varphi) =0$ and hence
$\delta\varphi = \square \rho = \delta \dd\rho$ for some $\rho\in\Omega^{0}_0(M)$
by standard properties of normally hyperbolic operators \cite{Bar:2007zz,Baernew}.
For $k>1$ we use Poincar{\'e} duality 
between spacelike and timelike compact de Rham cohomology groups
\cite[Theorem 6.2]{Benini:2014vsa} to find $G(\delta\varphi) = \dd \chi$ for some 
$\chi\in \Omega^{k-2}_\mathrm{sc}(M)$ of spacelike compact support.
Applying $\delta$ to this equation  yields $0=\delta \dd\chi$ and, again by
\cite[Theorem 7.4]{Benini:2014vsa}, there exists $\beta\in\Omega^{k-2}_{0,\,\delta}(M)$ such that $\dd\chi = \dd G(\beta) $.
Plugging this into the equation above yields $G(\delta \varphi - \dd \beta) =0$ and hence
$\delta\varphi- \dd\beta = \square\rho$ for some $\rho\in\Omega^{k-1}_0(M)$.
Applying $\delta$ and using the fact that $\beta$ is coclosed implies
$-\square\beta = \square\, \delta\rho$, thus
$\beta = -\delta\rho$. Plugging this back into the original equation leads to
$\delta \varphi = \delta \dd\rho$, just as in the case $k=1$.
As a consequence we obtain
\begin{flalign}
\iota^\star\big(\mathrm{MW}^\star(\rho) \big)= \delta \dd\rho = \delta\varphi = \iota^\star(\mathrm{w})~.
\end{flalign}
By exactness of the middle vertical sequence in the diagram (\ref{eqn:absdiagrmdualsmooth})
there exists $\phi \in \Co{k}{M}{\bbZ}^\star$ such that $\mathrm{w} = \mathrm{MW}^\star(\rho) + \Char^\star(\phi)$.
Using the fact that both $\mathrm{w}$ and $\mathrm{MW}^\star(\rho)$ are trivial group
 characters on the solution subgroups we find
 \begin{flalign}
 1 = \Char^\star(\phi)(h) = \phi\big(\Char(h)\big)
 \end{flalign}
for all $h\in\widehat{\mathfrak{Sol}}{}^k(M)$.
 In order to finish the proof we just have to notice that $\Char: \widehat{\mathfrak{Sol}}{}^k(M) \to \Co{k}{M}{\bbZ}$
 is surjective (cf.\ Theorem \ref{theo:soldiagram}), so $\phi$ is the trivial group character 
 and $\mathrm{w} = \mathrm{MW}^\star(\rho)$ with $\rho\in \Omega^{k-1}_0(M)$.
\end{proof}

For any object $M$ in $\mathsf{Loc}^m$, the vanishing subgroup $\widehat{\mathfrak{I}}^k(M)$
is a subgroup of the radical in the presymplectic Abelian group $\widehat{\mathfrak{G}}_\mathrm{o}^k(M)$: given any $\rho\in\Omega^{k-1}_0(M)$ and any $\mathrm{w}\in \DCo{k}{M}{\bbZ}^\star_\infty$, we find 
\begin{flalign}
\nn \widehat{\tau}\big(\mathrm{MW}^\star(\rho), \mathrm{w}\big) &= \lambda^{-1}\,\ip{\iota^\star\big(\mathrm{MW}^\star(\rho)\big)}{G\big(\iota^\star(\mathrm{w})\big)} \\[4pt] &=
\lambda^{-1}\,\ip{\delta \dd \rho}{G\big(\iota^\star(\mathrm{w})\big)} \nn \\[4pt]
&= \lambda^{-1}\,\ip{\rho}{G\big(\delta \dd\,
  \iota^\star(\mathrm{w})\big)} \nn \\[4pt]
&=\lambda^{-1}\,\ip{\rho}{G\big(\square \, \iota^\star(\mathrm{w})\big)} \nn \\[4pt]
&=0~,
\end{flalign}
where in the third equality we have integrated by parts (which is possible since $\rho$ is of compact support) and
in the fourth equality we have used the fact that $\iota^\star(\mathrm{w})\in\mathcal{V}^{k-1}(M)$ is coclosed (cf.\ Lemma 
\ref{lem:vanishinggroup}), hence $\delta \dd \,
\iota^\star(\mathrm{w}) = \square \, \iota^\star(\mathrm{w})$.
The last equality follows from $G\circ\square =0$ on compactly supported forms.
As a consequence we can take the quotient of the covariant functor
 $\widehat{\mathfrak{G}}_\mathrm{o}^k(\,\cdot\,):\mathsf{Loc}^m\to\mathsf{PAb}$ by the subfunctor
$\widehat{\mathfrak{I}}^k(\,\cdot\,)$, which yields another functor to the category $\mathsf{PAb}$.
\begin{defi}
The {\bf on-shell presymplectic Abelian group functor} 
$\widehat{\mathfrak{G}}^k(\,\cdot\, ) : \mathsf{Loc}^m\to \mathsf{PAb}$ for a degree $k$ differential cohomology theory
is defined as the quotient
 $\widehat{\mathfrak{G}}^k(\,\cdot\, ):=
 \widehat{\mathfrak{G}}_\mathrm{o}^k(\,\cdot\, )/\, \widehat{\mathfrak{I}}^k(\,\cdot\,)$.
 Explicitly, it associates to an object $M$ in $\mathsf{Loc}^m$ the presymplectic Abelian group 
$\widehat{\mathfrak{G}}^k(M) :=
\widehat{\mathfrak{G}}_\mathrm{o}^k(M)/\, \widehat{\mathfrak{I}}^k(M)$ with the presymplectic
structure induced from  $\widehat{\mathfrak{G}}_\mathrm{o}^k(M)$.\footnote{
The induced presymplectic structure is well-defined, since, as we have shown above, 
$\widehat{\mathfrak{I}}^k(M)$ is a subgroup of the radical in $\widehat{\mathfrak{G}}_\mathrm{o}^k(M)$.
}
To a $\mathsf{Loc}^m$-morphism $f: M\to N$ it associates the $\mathsf{PAb}$-morphism
$\widehat{\mathfrak{G}}^k(f): \widehat{\mathfrak{G}}^k(M)\to\widehat{\mathfrak{G}}^k(N)$
 that is induced by $\widehat{\mathfrak{G}}_\mathrm{o}^k(f):
  \widehat{\mathfrak{G}}_\mathrm{o}^k(M)\to\widehat{\mathfrak{G}}_\mathrm{o}^k(N)$.
\end{defi}
\sk

We shall now derive the analog of the diagram (\ref{eqn:subsystemclassical}) for on-shell functors.
By construction, it is clear that we have two natural transformations 
$\Char^\star : H^k(\,\cdot\,;\bbZ)^\star \Rightarrow \widehat{\mathfrak{G}}^k(\,\cdot\,)$
and $\Curv^\star : \mathfrak{F}^k_\mathrm{o}(\,\cdot\,)\Rightarrow \widehat{\mathfrak{G}}^k(\,\cdot\,)$,
which however do not have to be injective. To make them injective we have to take a quotient by the kernel subfunctors,
which we now characterize.
\begin{lem}\label{lem:onshellcurv}
For any object $M$ in $\mathsf{Loc}^m$, the kernel of $\Char^\star: H^k(M;\bbZ)^\star \to \widehat{\mathfrak{G}}^k(M)$
is trivial while the kernel $\mathfrak{K}^k(M)$ of $\Curv^\star : \mathfrak{F}^k_\mathrm{o}(M)\to \widehat{\mathfrak{G}}^k(M)$
is given by
\begin{flalign}
\mathfrak{K}^k(M) = \big[\dd\Omega^{k-1}_{0}(M)\big] \ \subseteq \ \frac{\Omega^k_0(M)}{\mathcal{V}^k(M)}~.
\end{flalign}
\end{lem}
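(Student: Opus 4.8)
The plan is to do everything inside the ambient group $\DCo{k}{M}{\bbZ}^\star_\infty$, of which $\widehat{\mathfrak{G}}^k(M)$ is the quotient by $\widehat{\mathfrak{I}}^k(M)$. Two inputs carry the argument: first, $\Char^\star\colon \Co{k}{M}{\bbZ}^\star \to \DCo{k}{M}{\bbZ}^\star_\infty$ and $\Curv^\star\colon \Omega^k_0(M)/\mathcal{V}^k(M) \to \DCo{k}{M}{\bbZ}^\star_\infty$ are both injective (Theorem~\ref{theo:smoothdual}, the middle column and middle row of (\ref{eqn:absdiagrmdualsmooth})); second, using Proposition~\ref{propo:vanishingsolution} together with the factorization $\mathrm{MW}^\star = \Curv^\star\circ\dd$, one has $\widehat{\mathfrak{I}}^k(M) = \mathrm{MW}^\star[\Omega^{k-1}_0(M)] = \Curv^\star\big[[\dd\Omega^{k-1}_0(M)]\big]$, where $[\dd\Omega^{k-1}_0(M)]$ denotes the image in $\Omega^k_0(M)/\mathcal{V}^k(M)$ of the exact compactly supported $k$-forms. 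Since $\Char^\star$ and $\Curv^\star$ are injective, the kernel of each on-shell map in the statement is the preimage under the corresponding (injective) map of the subgroup $\widehat{\mathfrak{I}}^k(M)\subseteq\DCo{k}{M}{\bbZ}^\star_\infty$.

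For the first assertion, suppose $\phi\in\Co{k}{M}{\bbZ}^\star$ satisfies $\Char^\star(\phi)\in\widehat{\mathfrak{I}}^k(M)$. By the definition of the vanishing subgroup this means $\phi\big(\Char(h)\big) = \Char^\star(\phi)(h) = 1$ for all $h\in\widehat{\mathfrak{Sol}}{}^k(M)$. Since $\Char\colon\widehat{\mathfrak{Sol}}{}^k(M)\to\Co{k}{M}{\bbZ}$ is surjective by Theorem~\ref{theo:soldiagram}, this forces $\phi=0$; this is essentially the argument that closes the proof of Proposition~\ref{propo:vanishingsolution}.

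For the second assertion I would prove the two inclusions. The inclusion $[\dd\Omega^{k-1}_0(M)]\subseteq\mathfrak{K}^k(M)$ is immediate: for $\omega\in\Omega^{k-1}_0(M)$ one has $\Curv^\star\big([\dd\omega]\big) = \mathrm{MW}^\star(\omega)\in\widehat{\mathfrak{I}}^k(M)$, so $[\dd\omega]$ maps to $\ozero$ in $\widehat{\mathfrak{G}}^k(M)$. Conversely, if $[\varphi]\in\Omega^k_0(M)/\mathcal{V}^k(M)$ lies in $\mathfrak{K}^k(M)$, then $\Curv^\star([\varphi])\in\widehat{\mathfrak{I}}^k(M) = \Curv^\star\big[[\dd\Omega^{k-1}_0(M)]\big]$, so $\Curv^\star([\varphi]) = \Curv^\star([\dd\omega])$ for some $\omega\in\Omega^{k-1}_0(M)$, and injectivity of $\Curv^\star$ gives $[\varphi] = [\dd\omega]\in[\dd\Omega^{k-1}_0(M)]$.

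Once Proposition~\ref{propo:vanishingsolution} is available, the whole thing is a short diagram chase; no new analysis is needed, since all the hyperbolic partial differential equation input is already packaged into Theorems~\ref{theo:soldiagram} and~\ref{theo:smoothdual} and Proposition~\ref{propo:vanishingsolution}. The one point to handle with care is the bookkeeping of the identification $\widehat{\mathfrak{I}}^k(M) = \Curv^\star\big[[\dd\Omega^{k-1}_0(M)]\big]$: one must recognize that the subgroup $[\dd\Omega^{k-1}_0(M)]\subseteq\Omega^k_0(M)/\mathcal{V}^k(M)$ that parametrizes $\widehat{\mathfrak{I}}^k(M)$ through the injective map $\Curv^\star$ is literally the subgroup that turns out to be $\mathfrak{K}^k(M)$.
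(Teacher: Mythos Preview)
Your proof is correct, and for the second assertion it is genuinely cleaner than the paper's. The first assertion is argued identically in both: $\Char^\star(\phi)\in\widehat{\mathfrak{I}}^k(M)$ means $\phi\circ\Char$ vanishes on $\widehat{\mathfrak{Sol}}{}^k(M)$, and surjectivity of $\Char$ from Theorem~\ref{theo:soldiagram} forces $\phi=0$.

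For the second assertion the paper takes a direct route: it unwinds the condition $\Curv^\star([\varphi])\in\widehat{\mathfrak{I}}^k(M)$ as $\exp\big(2\pi\ii\ip{\varphi}{\omega}\big)=1$ for all $\omega\in\Omega^k_{\bbZ,\,\delta}(M)$, tests against forms $G(\dd\beta)$ with $\beta\in\Omega^{k-1}_{\mathrm{tc},\,\delta}(M)$, and then reruns the PDE argument from the proof of Proposition~\ref{propo:vanishingsolution} to produce $\rho\in\Omega^{k-1}_0(M)$ with $\delta\varphi=\delta\dd\rho$; a final step shows the remainder $\varphi-\dd\rho$ lies in $\mathcal{V}^k(M)$. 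Your approach instead observes that Proposition~\ref{propo:vanishingsolution} already gives $\widehat{\mathfrak{I}}^k(M)=\mathrm{MW}^\star[\Omega^{k-1}_0(M)]=\Curv^\star\big[[\dd\Omega^{k-1}_0(M)]\big]$, so that the kernel question becomes a pure preimage computation under the injective map $\Curv^\star$ of Theorem~\ref{theo:smoothdual}. This avoids any repetition of the hyperbolic analysis; the paper's direct argument is more self-contained at the cost of redoing work already packaged into Proposition~\ref{propo:vanishingsolution}.
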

\begin{proof}
To prove the first statement, let $\phi\in H^k(M;\bbZ)^\star $ be such that $\Char^\star(\phi)\in\widehat{\mathfrak{I}}^k(M)$.
This means that $\Char^\star(\phi)(h) = \phi\big(\Char(h)\big)=1$ for all $h\in\widehat{\mathfrak{Sol}}{}^k(M)$.
As $\Char: \widehat{\mathfrak{Sol}}{}^k(M) \to H^k(M;\bbZ)$ is surjective (cf.\ Theorem \ref{theo:soldiagram}),
we obtain $\phi =0$ and hence the kernel of $\Char^\star: H^k(M;\bbZ)^\star \to \widehat{\mathfrak{G}}^k(M)$ is trivial.
\sk

Next, let us notice that by Theorem \ref{theo:soldiagram} an element 
$[\varphi]\in \Omega^k_0(M)/\mathcal{V}^k(M)$ is in the kernel of 
$\Curv^\star :\mathfrak{F}^k_\mathrm{o}(M)\to \widehat{\mathfrak{G}}^k(M)$ 
if and only if
\begin{flalign}\label{eqn:tmpcurvsol}
1= \exp\big(2\pi\ii \,\ip{\varphi}{\omega}\big)
\end{flalign}
for all $\omega\in \Omega^k_{\bbZ,\,\delta}(M)$ and 
for any choice of representative $\varphi$ of the class $[\varphi]$. It is clear that for 
$\varphi = \dd\rho \in \dd\Omega^{k-1}_{0}(M)$ the condition (\ref{eqn:tmpcurvsol}) is fulfilled: just integrate $\dd$ by parts
and use $\delta\omega=0$.
To show that any $[\varphi]$ satisfying (\ref{eqn:tmpcurvsol}) has a representative $\varphi = \dd\rho\in \dd\Omega^{k-1}_{0}(M) $,
we first use the fact that the vector space $\{G(\dd\beta) : \beta\in\Omega^{k-1}_{\mathrm{tc},\,\delta}(M) \}$ 
is a subgroup of $(\dd\Omega^{k-1})_\delta(M)\subseteq \Omega^{k}_{\bbZ,\,\delta}(M)$. Then
the condition (\ref{eqn:tmpcurvsol}) in particular implies that
\begin{flalign}
0 = \ip{\varphi}{G(\dd\beta)} = -\ip{G(\delta\varphi)}{\beta}
\end{flalign}
for all $\beta \in \Omega^{k-1}_{\mathrm{tc},\,\delta}(M)$.
Arguing as in the proof of Proposition \ref{propo:vanishingsolution},
this condition implies that $\delta\varphi = \delta  \dd \rho$ for some $\rho\in \Omega^{k-1}_0(M)$.
Hence $\varphi$ is of the form $\varphi = \dd\rho + \varphi^\prime $ 
with $\delta\varphi^\prime =0$. Plugging this into (\ref{eqn:tmpcurvsol})
we obtain the condition
\begin{flalign}\label{eqn:tmpcurvsol2}
1 =\exp\big(2\pi\ii \,\big(\ip{\varphi^\prime}{\omega}
+\ip{\rho}{\delta\omega}\big) \big) = \exp\big(2\pi\ii
\,\ip{[\varphi^\prime\, ]}{[\omega]}\big)
\end{flalign}
for all $\omega\in\Omega^k_{\bbZ,\,\delta}(M)$, 
where $[\varphi^\prime\, ]$ is a dual de~Rham class in
$\Omega^{k}_{0,\,\delta}(M)/\delta\Omega^{k+1}_{0}(M)$ and $[\omega]$ 
is a de~Rham class in $\Omega^k_{\dd}(M)/\dd\Omega^{k-1}(M)$.
As by Theorem \ref{theo:soldiagram} the de Rham class mapping $\Omega^k_{\bbZ,\,\delta}(M) \to H^k_\mathrm{free}(M;\bbZ)$
is surjective, (\ref{eqn:tmpcurvsol2}) implies that $[\varphi^\prime\,
]\in H^k_\mathrm{free}(M;\bbZ)^\prime$ and hence
that $\varphi$ is equivalent to $\dd\rho$ in $\Omega^k_0(M)/\mathcal{V}^k(M)$.
\end{proof}
\sk

Taking now the quotient of the covariant functor $\mathfrak{F}^k_\mathrm{o}(\,\cdot\,) : \mathsf{Loc}^m\to \mathsf{PAb}$
by its subfunctor $\mathfrak{K}^k(\,\cdot\,)$, we get another covariant functor 
$\mathfrak{F}^k(\,\cdot\,) :=\mathfrak{F}^k_\mathrm{o}(\,\cdot\,)/\mathfrak{K}^k(\,\cdot\,) : \mathsf{Loc}^m\to \mathsf{PAb}$.
By Lemma \ref{lem:onshellcurv} there are now two {\it injective} natural transformations 
$\Char^\star : H^k(\,\cdot\,;\bbZ)^\star \Rightarrow \widehat{\mathfrak{G}}^k(\,\cdot\,)$
and $\Curv^\star : \mathfrak{F}^k(\,\cdot\,) \Rightarrow \widehat{\mathfrak{G}}^k(\,\cdot\,)$
to the on-shell presymplectic Abelian group functor $\widehat{\mathfrak{G}}^k(\,\cdot\,)$.
To obtain the on-shell analog of the diagram (\ref{eqn:subsystemclassical}) we just have to notice that,
by a proof similar to that of Proposition \ref{propo:vanishingsolution},
the vanishing subgroups of the topologically trivial field theory are given by
\begin{flalign}
\mathfrak{I}^k(M) := \big\{\varphi \in\mathcal{V}^{k-1}(M) \ : \
\exp\big(2\pi\ii\ip{\varphi}{[\eta]}\big)=1 \quad \forall \, [\eta]\in
\mathfrak{Sol}^k(M)\big\} = \delta \dd \Omega^{k-1}_0(M)~.
\end{flalign}
We define the corresponding quotient $\mathfrak{G}^k(\,\cdot\,) 
:= \mathfrak{G}_\mathrm{o}^k(\,\cdot\,)/\mathfrak{I}^k(\,\cdot\,): \mathsf{Loc}^m\to \mathsf{PAb}$
and notice that the natural transformation $\iota^\star : \widehat{\mathfrak{G}}^k(\,\cdot\,)\Rightarrow 
\mathfrak{G}^k(\,\cdot\,)$ is well-defined and surjective. Hence we obtain a natural diagram 
in the category $\mathsf{PAb}$ given by
\begin{flalign}
  \xymatrix{
  &&0\ar[d]&&\\
  &&\mathfrak{F}^k(M)\ar[d]^-{\Curv^\star} & &\\
  0 \ \ar[r] & \ H^k(M;\bbZ)^\star \ \ar[r]^-{\Char^\star} & \
  \widehat{\mathfrak{G}}^k(M) \ \ar[r]^-{\iota^\star} & \ 
  \mathfrak{G}^k(M) \ \ar[r] & \ 0 
  }
  \end{flalign}
  where the horizontal and vertical sequences are exact. The physical interpretation given in Remark
  \ref{rem:physinterpretation} applies to this diagram as well.
  \sk
  
We conclude this section by pointing out that the subtheory $\mathfrak{F}^k(\,\cdot\,)$ of $\widehat{\mathfrak{G}}^k(\,\cdot\,)$
has a further purely topological subtheory. Let $M$ be any object in $\mathsf{Loc}^m$.
Recall that the Abelian group underlying $\mathfrak{F}^k(M)$ is given by the double quotient
 $\big(\Omega^k_0(M)/\mathcal{V}^k(M)\big)/\big[\dd\Omega^{k-1}_0(M)\big]$, classes of which we denote by double brackets, e.g.\
 $[[\varphi]]$. There is a natural group homomorphism $\Omega^k_{0,\,\dd }(M) \to \mathfrak{F}^k(M)\,,~\varphi\mapsto [[\varphi]]$,
 which induces to the quotient $\Omega^k_{0,\,\dd}(M)/\dd\Omega^{k-1}_0(M) \to \mathfrak{F}^{k}(M)$ since
 $[[\dd\rho]] =0 $ for any $\rho\in\Omega^{k-1}_0(M)$. This group homomorphism is injective: if $[[\varphi]]=0$ for some $\varphi\in\Omega^{k}_{0,\,\dd}(M)$, then $\varphi = \dd\rho + \varphi^\prime$ for some
 $\rho\in \Omega^{k-1}_0(M)$ and $\varphi^\prime \in \mathcal{V}^k(M)$ (in particular $\delta\varphi^\prime=0$). 
 Since $\dd\varphi=0$ we also have
 $\dd\varphi^\prime=0$ and hence $\square \varphi^\prime =0 $, which implies $\varphi^\prime=0$.
 By Poincar{\'e} duality and de Rham's theorem, 
  the quotient $\Omega^k_{0,\,\dd}(M)/\dd\Omega^{k-1}_0(M) $ can be canonically identified with 
  $\Hom_\bbR(H^{m-k}(M;\bbR),\bbR)\simeq H^{m-k}(M;\bbR)^\star$, the character group of the $(m{-}k)$-th
  singular cohomology group with coefficients in $\bbR$. We denote the injective natural 
  transformation constructed above by $\mathrm{q}^\star : H^{m-k}(\,\cdot\,;\bbR)^\star\Rightarrow \mathfrak{F}^k(\,\cdot\,)$.
The pull-back of the presymplectic structure on $\mathfrak{F}^k(\,\cdot\,)$ to $H^{m-k}(\,\cdot\,;\bbR)^\star$
 is trivial. In fact, the presymplectic structure on $H^{m-k}(M;\bbR)^\star$ is then given by pulling back
 $\tau$ in (\ref{eqn:trivpresymp}) via $\iota^\star\circ\Curv^\star\circ \mathrm{q}^\star$. 
 For any two elements in $H^{m-k}(M;\bbR)^\star$, which we represent by two classes
  $[\varphi],[\varphi^\prime\, ]\in \Omega^k_{0,\,\dd}(M)/\dd\Omega^{k-1}_0(M)$ with the isomorphism above,
  we find
 \begin{flalign}
 \tau\big(\iota^\star\circ\Curv^\star\circ
 \mathrm{q}^\star\big([\varphi]\big), \iota^\star\circ\Curv^\star\circ
 \mathrm{q}^\star\big([\varphi^\prime\, ]\big)\big)
 =\tau(\delta\varphi,\delta\varphi^\prime\, ) = \lambda^{-1}\,
 \ip{\varphi}{G\big(\dd\delta\varphi^\prime\, \big)} =0
 \end{flalign}
 since $ \varphi^\prime$ is closed and hence $G(\dd
 \delta\varphi^\prime\, ) = G(\square\varphi^\prime\, ) =0$.
 In order to get a better understanding of $\mathrm{q}^\star$, let us compute how
elements in the image of $\Curv^\star \circ \mathrm{q}^\star: H^{m-k}(M;\bbR)^\star\to \widehat{\mathfrak{G}}^k(M)$
act on the solution subgroup $\widehat{\mathfrak{Sol}}{}^k(M)$. For any element in
$H^{m-k}(M;\bbR)^\star$, which we represent by a class
  $[\varphi]\in \Omega^k_{0,\,\dd}(M)/\dd\Omega^{k-1}_0(M)$ with the isomorphism above, 
 and any $h\in \widehat{\mathfrak{Sol}}{}^k(M)$ we find
 \begin{flalign}\label{eqn:electric}
 \big(\Curv^\star \circ \mathrm{q}^\star([\varphi])\big)(h) = 
 \exp\big(2\pi\ii \ip{\varphi}{\Curv(h)}\big) = \exp\Big(2\pi\ii
 \int_M\, \varphi \wedge \ast\, \big(\Curv(h)\big)\Big)~.
 \end{flalign}
 Since $h$ lies in the kernel of the Maxwell map, the forms $\ast\big(\Curv(h)\big)\in\Omega^{m-k}(M)$ are closed
 and the integral in (\ref{eqn:electric}) depends only on the de Rham classes
 $[\varphi] \in H^k_0(M;\bbR)$ and $\big[\ast\big(\Curv(h)\big) \big] \in H^{m-k}(M;\bbR)$. So the observables described by the subtheory $H^{m-k}(\,\cdot\,;\bbR)^\star$
 of $\widehat{\mathfrak{G}}^k(\,\cdot\,)$ are exactly those measuring the de~Rham class of the dual curvature
 of a solution. 
 \begin{rem}\label{rem:subtheoryonshell}
 Following the terminology used in ordinary Maxwell theory (given in degree $k=2$) we may 
 call the subtheory $H^{m-k}(\,\cdot\,;\bbR)^\star$ {\it electric} and the subtheory $H^k(\,\cdot\,;\bbZ)^\star$ {\it magnetic}.
The structures we have found for the on-shell field theory 
can be summarized by the following diagram with all horizontal and vertical sequences exact:
\begin{flalign}\label{eqn:subtheoryonshell}
  \xymatrix{
  &&\ar[d]0&&\\
 0 \ \ar[r] & \ {\underbrace{H^{m-k}(M;\bbR)^\star}_{\text{\rm
       electric}}} \ \ar[r]^-{\mathrm{q}^\star}& \ \mathfrak{F}^k(M)\ar[d]^-{\Curv^\star} && \\
  0 \ \ar[r] & \ {\underbrace{H^k(M;\bbZ)^\star}_{\text{\rm
        magnetic}}} \ \ar[r]^-{\Char^\star} & \
  \widehat{\mathfrak{G}}^k(M) \ \ar[r]^-{\iota^\star} & \
  \mathfrak{G}^k(M) \ \ar[r]& \ 0
  }
\end{flalign}
 \end{rem}

% % % % % % % % % % % % % % % % % % % % % % % % % % % %
% % % % % % % % % % % % % % % % % % % % % % % % % % % %

\section{\label{sec:quantization}Quantum field theory}

In the previous section we have derived various functors from the category
$\mathsf{Loc}^m$ to the category $\mathsf{PAb}$ of presymplectic Abelian groups.
In particular, the functor $\widehat{\mathfrak{S}}^k(\,\cdot\,)$ describes the association of the
smooth Pontryagin duals (equipped with a natural presymplectic structure)  
of the solution subgroups of a degree $k$ differential cohomology theory. 
To quantize this field theory, we shall make use of the $\mathrm{CCR}$-functor for presymplectic Abelian groups,
see \cite{Manuceau:1973yn} and \cite[Appendix A]{Benini:2013ita} for details.
In short, canonical quantization is a covariant functor $\mathfrak{CCR}(\, \cdot\,): \mathsf{PAb}\to C^\ast\mathsf{Alg}$
to the category of unital $C^\ast$-algebras with morphisms given by unital $C^\ast$-algebra homomorphisms (not necessarily injective).
To any presymplectic Abelian group $(G,\sigma)$ this functor associates the unital $C^\ast$-algebra $\mathfrak{CCR}(G,\sigma)$,
which is generated by the symbols $W(g)$, $g\in G$, satisfying the
Weyl relations $W(g)\,W(\tilde g) = \e^{-\ii \sigma(g,\tilde g)/2}\,W(g+\tilde g)$ and the $\ast$-involution property
$W(g)^\ast = W(-g)$. This unital $\ast$-algebra is then equipped and completed with respect to a suitable $C^\ast$-norm.
To any $\mathsf{PAb}$-morphism $\phi : (G,\sigma) \to
(G^\prime,\sigma^\prime\, )$ the functor associates
the $C^\ast\mathsf{Alg}$-morphism $\mathfrak{CCR}(\phi):
\mathfrak{CCR}(G,\sigma) \to \mathfrak{CCR}(G^\prime,\sigma^\prime\, )$,
which is obtained as the unique continuous extension of the unital $\ast$-algebra homomorphism defined on
generators by $W(g) \mapsto W(\phi(g))$. 

\begin{defi}
The {\bf quantum field theory functor} $\widehat{\mathfrak{A}}^k(\,\cdot\,) : \mathsf{Loc}^m\to C^\ast\mathsf{Alg}$ 
for a degree $k$ differential cohomology theory is defined as the composition of the on-shell
 presymplectic Abelian group functor $\widehat{\mathfrak{G}}^k(\,\cdot\,) :
\mathsf{Loc}^m \to\mathsf{PAb}$ with the $\mathrm{CCR}$-functor $\mathfrak{CCR}(\,\cdot\,) :\mathsf{PAb}\to C^\ast\mathsf{Alg} $, i.e.\
\begin{flalign}
\widehat{\mathfrak{A}}^k(\,\cdot\,) := \mathfrak{CCR}(\,\cdot\,)\circ \widehat{\mathfrak{G}}^k(\,\cdot\,) ~.
\end{flalign}
\end{defi}
\begin{rem}\label{rem:subsystemquantum}
The subtheory structure of the classical on-shell field theory explained in Remark \ref{rem:subtheoryonshell} 
is also present, with a slight caveat, in the quantum field theory. Acting with the functor 
$\mathfrak{CCR}(\,\cdot\,)$ on the diagram (\ref{eqn:subtheoryonshell}) we obtain a similar
 diagram in the category $C^\ast\mathsf{Alg}$ (with $\mathfrak{CCR}(0) = \bbC$, the trivial unital $C^\ast$-algebra).
 However, the sequences in this diagram will in general not be exact, as the $\mathrm{CCR}$-functor
 is not an exact functor.
 This will not be of major concern to us, since by \cite[Corollary A.7]{Benini:2013ita} the functor
 $\mathfrak{CCR}(\,\cdot\,)$ does map injective $\mathsf{PAb}$-morphisms to injective
 $C^\ast\mathsf{Alg}$-morphisms. Thus our statements in Remark \ref{rem:subtheoryonshell} about
  the (faithful) subtheories remain valid after quantization. Explicitly,
 the quantum field theory $\widehat{\mathfrak{A}}^k(\,\cdot\,)$ has three faithful subtheories
 $\mathfrak{A}^k_{\mathrm{el}}(\,\cdot\,) := \mathfrak{CCR}(\,\cdot\,)\circ H^{m-k}(\,\cdot\,;\bbR)^\star$,
  $\mathfrak{A}^k_{\mathrm{mag}}(\,\cdot\,) := \mathfrak{CCR}(\,\cdot\,)\circ H^k(\,\cdot\,;\bbZ)^\star$ 
 and $\mathfrak{A}_\mathfrak{F}^k(\,\cdot\,):= \mathfrak{CCR}(\,\cdot\,)\circ\mathfrak{F}^k(\,\cdot\,) $,
  with the first two being purely topological and the third being a theory of quantized curvature observables.
\end{rem}
\sk

We shall now address the problem of whether or not our functor $\widehat{\mathfrak{A}}^k(\,\cdot\,)$ satisfies the axioms of locally covariant quantum field theory,
which have been proposed in \cite{Brunetti:2001dx} to single out physically reasonable models
 for quantum field theory from all possible covariant functors from $\mathsf{Loc}^m$ to $C^\ast\mathsf{Alg}$.
 The first axiom formalizes the concept of Einstein causality.
\begin{theo}
The functor $\widehat{\mathfrak{A}}^k(\,\cdot\,) : \mathsf{Loc}^m\to C^\ast\mathsf{Alg}$ satisfies the 
causality axiom: For any pair of $\mathsf{Loc}^m$-morphisms $f_1 : M_1\to M$ and $f_2: M_2\to M$ such that
$f_1[M_1]$ and $f_2[M_2]$ are causally disjoint subsets of $M$, the subalgebras 
$\widehat{\mathfrak{A}}^k(f_1)\big[\widehat{\mathfrak{A}}^k(M_1)\big]$ and $\widehat{\mathfrak{A}}^k(f_2)\big[\widehat{\mathfrak{A}}^k(M_2)\big]$
of $\widehat{\mathfrak{A}}^k(M)$ commute.
\end{theo}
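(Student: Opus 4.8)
The plan is to reduce the statement to a property of the presymplectic structure $\widehat{\tau}$ and then to invoke the Weyl relations. For $C^\ast$-algebras generated by Weyl symbols $W(g)$, two subalgebras $\mathfrak{CCR}(\phi_i)[\mathfrak{CCR}(G_i,\sigma_i)]$ commute as soon as the images of the generating symbols commute, i.e. as soon as $W(\phi_1(g_1))\,W(\phi_2(g_2)) = W(\phi_2(g_2))\,W(\phi_1(g_1))$ for all $g_1\in G_1$, $g_2\in G_2$. By the Weyl relations in $\mathfrak{CCR}$, this holds if and only if the presymplectic pairing between $\phi_1(g_1)$ and $\phi_2(g_2)$ vanishes (so that the phase factor $\e^{-\ii\sigma(\cdot,\cdot)/2}$ is trivial). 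Hence the whole theorem reduces to showing: if $f_1:M_1\to M$ and $f_2:M_2\to M$ have causally disjoint images, then $\widehat{\tau}\big(\widehat{\mathfrak{G}}^k(f_1)(\mathrm{v}),\widehat{\mathfrak{G}}^k(f_2)(\mathrm{w})\big)=0$ for all $\mathrm{v}\in\widehat{\mathfrak{G}}^k(M_1)$, $\mathrm{w}\in\widehat{\mathfrak{G}}^k(M_2)$.

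\textbf{Key steps.} First I would lift everything to the off-shell level, since $\widehat{\mathfrak{G}}^k(\,\cdot\,)$ is a quotient of $\widehat{\mathfrak{G}}^k_\mathrm{o}(\,\cdot\,)$ by a subfunctor of the radical, so it suffices to prove the vanishing of $\widehat{\tau}$ on representatives $\mathrm{v}\in\DCo{k}{M_1}{\bbZ}^\star_\infty$ and $\mathrm{w}\in\DCo{k}{M_2}{\bbZ}^\star_\infty$ (pushed forward to $M$). Second, I would unwind the definition (\ref{eqn:presymp}): writing $f_{i\,\ast}$ for the push-forwards and using naturality of $\iota^\star$ (so that $\iota^\star\big(\DCo{k}{f_i}{\bbZ}^\star_\infty(\mathrm{v})\big) = \mathcal{V}^{k-1}(f_i)\big(\iota^\star(\mathrm{v})\big) = f_{i\,\ast}\big(\iota^\star(\mathrm{v})\big)$), the pairing becomes
\begin{flalign}
\widehat{\tau}\big(\DCo{k}{f_1}{\bbZ}^\star_\infty(\mathrm{v}),\DCo{k}{f_2}{\bbZ}^\star_\infty(\mathrm{w})\big)
= \lambda^{-1}\,\ip{f_{1\,\ast}\big(\iota^\star(\mathrm{v})\big)}{G\big(f_{2\,\ast}\big(\iota^\star(\mathrm{w})\big)\big)}~,
\end{flalign}
where $G$ is the retarded-minus-advanced Green's operator on $M$ and $\iota^\star(\mathrm{v})\in\Omega^{k-1}_0(M_1)$, $\iota^\star(\mathrm{w})\in\Omega^{k-1}_0(M_2)$ are compactly supported. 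Third, I would invoke the standard support property of Green's operators: $\supp G^\pm(\varphi)\subseteq J^\pm_M(\supp\varphi)$, so that $\supp G\big(f_{2\,\ast}(\psi)\big)\subseteq J_M\big(f_2[M_2]\big) := J^+_M\big(f_2[M_2]\big)\cup J^-_M\big(f_2[M_2]\big)$. Since $f_1[M_1]$ and $f_2[M_2]$ are causally disjoint, $f_1[M_1]\cap J_M\big(f_2[M_2]\big)=\emptyset$, hence the wedge product $f_{1\,\ast}\big(\iota^\star(\mathrm{v})\big)\wedge\ast\,G\big(f_{2\,\ast}(\iota^\star(\mathrm{w}))\big)$ has empty support and the integral (\ref{eqn:pairing}) vanishes. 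Therefore $\widehat{\tau}=0$ on these elements, the phases are trivial, and the subalgebras commute.

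\textbf{Main obstacle.} There is no deep obstacle here — the argument is a routine "Green's operator support + causal disjointness" computation, which is the backbone of every proof of the causality axiom in locally covariant field theory. The only mildly delicate point is bookkeeping: one must be careful that the push-forward of a compactly supported form along a $\mathsf{Loc}^m$-morphism again has compact support (true, since $\mathsf{Loc}^m$-morphisms are embeddings with open causally compatible image), and that naturality of $\iota^\star$ together with $\widehat{\mathfrak{G}}^k(f) = $ the induced quotient map genuinely lets one replace on-shell representatives by off-shell ones without affecting $\widehat{\tau}$; this is guaranteed because $\widehat{\mathfrak{I}}^k(\,\cdot\,)$ was shown to lie in the radical of $\widehat{\tau}$. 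Once these are noted, the proof is essentially the three-line computation above.
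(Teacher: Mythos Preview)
Your proposal is correct and follows essentially the same route as the paper: reduce to vanishing of $\widehat{\tau}$ on pushed-forward generators via the Weyl relations, use naturality of $\iota^\star$ to write the pairing as $\lambda^{-1}\ip{f_{1\ast}(\iota^\star(\mathrm{v}))}{G(f_{2\ast}(\iota^\star(\mathrm{w})))}$, and observe that the integrand has empty support by causal disjointness and the support properties of $G$. The paper's proof is slightly terser (it does not spell out the off-shell lift or the support estimate for $G$, and it passes from generators to arbitrary elements by a density/continuity remark), but the argument is the same.
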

\begin{proof}
For any two generators $W(\mathrm{w})\in \widehat{\mathfrak{A}}^k(M_1)$ and $W(\mathrm{v})\in \widehat{\mathfrak{A}}^k(M_2)$
we have
\begin{multline}
\big[\,
\widehat{\mathfrak{A}}^k(f_1)\big(W(\mathrm{w})\big),\widehat{\mathfrak{A}}^k(f_2)\big(W(\mathrm{v})\big)\,
\big] = 
\big[W\big(\widehat{\mathfrak{G}}^k(f_1)(\mathrm{w})\big),W\big(\widehat{\mathfrak{G}}^k(f_2)(\mathrm{v})\big)\big] \\[4pt]
=-2\ii
W\big(\widehat{\mathfrak{G}}^k(f_1)(\mathrm{w})+\widehat{\mathfrak{G}}^k(f_2)(\mathrm{v})\big)
\,\sin \Big(\mbox{$\frac12$}\, \tau\big(f_{1\ast}( \iota^\star(\mathrm{w})) , f_{2\ast} (\iota^\star(\mathrm{v}))\big) \Big)=0~,
\end{multline}
where we have used the Weyl relations and the fact that, by hypothesis, the push-forwards $f_{1\ast}( \iota^\star(\mathrm{w}))$
and $f_{2\ast} (\iota^\star(\mathrm{v}))$ are differential forms of causally disjoint support,
 for which the presymplectic structure (\ref{eqn:trivpresymp})
vanishes. The result now follows by approximating generic elements in $\widehat{\mathfrak{A}}^k(M_1)$
and $\widehat{\mathfrak{A}}^k(M_2)$ by linear combinations of generators and using continuity of $\widehat{\mathfrak{A}}^k(f_1)$
and $\widehat{\mathfrak{A}}^k(f_2)$.
\end{proof}

The second axiom formalizes the concept of a dynamical law. 
Recall that a $\mathsf{Loc}^m$-morphism $f:M\to N$ 
is called a Cauchy morphism if the image $f[M]$ contains a Cauchy surface of $N$.
\begin{theo}
The functor $\widehat{\mathfrak{A}}^k(\,\cdot\,) : \mathsf{Loc}^m\to C^\ast\mathsf{Alg}$ satisfies the 
time-slice axiom: If $f: M\to N$ is a Cauchy morphism, then $\widehat{\mathfrak{A}}^k(f) : \widehat{\mathfrak{A}}^k(M)
\to \widehat{\mathfrak{A}}^k(N)$ is a $C^\ast\mathsf{Alg}$-isomorphism.
\end{theo}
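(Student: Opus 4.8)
The plan is to reduce the statement, in a few steps, to an assertion about compactly supported differential forms that can be settled by the same hyperbolic PDE techniques already used for Theorem~\ref{theo:soltimeslice}. Since $\widehat{\mathfrak{A}}^k(\,\cdot\,)=\mathfrak{CCR}(\,\cdot\,)\circ\widehat{\mathfrak{G}}^k(\,\cdot\,)$ and every functor preserves isomorphisms, it is enough to show that $\widehat{\mathfrak{G}}^k(f):\widehat{\mathfrak{G}}^k(M)\to\widehat{\mathfrak{G}}^k(N)$ is a $\mathsf{PAb}$-isomorphism; as it is by construction a $\mathsf{PAb}$-morphism and a bijective $\mathsf{PAb}$-morphism is automatically a $\mathsf{PAb}$-isomorphism, only its bijectivity as a homomorphism of Abelian groups has to be established. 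Exactly as in the proof of Theorem~\ref{theo:soltimeslice}, one factorizes $f=\iota_{N;f[M]}\circ\underline{f}$ with $\underline{f}:M\to f[M]$ a $\mathsf{Loc}^m$-isomorphism, so that it suffices to treat the canonical inclusion $\iota_{N;O}:O\to N$ of a causally compatible, open, globally hyperbolic subset $O\subseteq N$ that contains a Cauchy surface of $N$.

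For \emph{injectivity}, recall that by Proposition~\ref{propo:vanishingsolution} the subgroup $\widehat{\mathfrak{I}}^k(M)$ is precisely the kernel of the restriction homomorphism $\DCo{k}{M}{\bbZ}^\star_\infty\to\widehat{\mathfrak{Sol}}{}^k(M)^\star$, so that $\widehat{\mathfrak{G}}^k(M)$ embeds naturally into $\widehat{\mathfrak{Sol}}{}^k(M)^\star$. Applying $\Hom(\,\cdot\,,\bbT)$ to the $\mathsf{Ab}$-isomorphism $\widehat{\mathfrak{Sol}}{}^k(\iota_{N;O})$ of Theorem~\ref{theo:soltimeslice} yields an isomorphism $\widehat{\mathfrak{Sol}}{}^k(\iota_{N;O})^\star$, which by naturality fits into a commutative square together with $\widehat{\mathfrak{G}}^k(\iota_{N;O})$ and the embeddings $\widehat{\mathfrak{G}}^k(O)\hookrightarrow\widehat{\mathfrak{Sol}}{}^k(O)^\star$, $\widehat{\mathfrak{G}}^k(N)\hookrightarrow\widehat{\mathfrak{Sol}}{}^k(N)^\star$; since these embeddings are injective and $\widehat{\mathfrak{Sol}}{}^k(\iota_{N;O})^\star$ is an isomorphism, $\widehat{\mathfrak{G}}^k(\iota_{N;O})$ is injective.

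For \emph{surjectivity}, I would invoke the natural short exact sequence $0\to H^k(M;\bbZ)^\star\xrightarrow{\,\Char^\star\,}\widehat{\mathfrak{G}}^k(M)\xrightarrow{\,\iota^\star\,}\mathfrak{G}^k(M)\to 0$ obtained at the end of Section~\ref{sec:presymp}. Since $O$ and $N$ are both homotopy equivalent to their common Cauchy surface, $\Co{k}{\iota_{N;O}}{\bbZ}$, and hence $H^k(\iota_{N;O};\bbZ)^\star$, is an isomorphism, so by the five lemma it remains only to prove that $\mathfrak{G}^k(\iota_{N;O}):\mathfrak{G}^k(O)\to\mathfrak{G}^k(N)$ is an isomorphism, where $\mathfrak{G}^k(M)=\mathcal{V}^{k-1}(M)/\delta\dd\Omega^{k-1}_0(M)$. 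Its injectivity follows by the same commutative-square argument as above, once one knows that $\mathfrak{Sol}^k(\iota_{N;O})$ is itself an isomorphism (a consequence of Theorem~\ref{theo:soltimeslice}, the injectivity of $\iota$ in Theorem~\ref{theo:soldiagram}, and the isomorphism $\Co{k}{\iota_{N;O}}{\bbZ}$), together with the identity $\mathfrak{I}^k(M)=\delta\dd\Omega^{k-1}_0(M)$. The surjectivity of $\mathfrak{G}^k(\iota_{N;O})$ is the technical heart: given $\varphi\in\mathcal{V}^{k-1}(N)$ one must produce $\psi\in\mathcal{V}^{k-1}(O)$ with ${\iota_{N;O}}_\ast\psi-\varphi\in\delta\dd\Omega^{k-1}_0(N)$. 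My plan is first to adjust $\varphi$ by a coclosed compactly supported form pushed forward from $O$, using the isomorphism $H^{k-1}_\mathrm{free}(\iota_{N;O};\bbZ)^\prime$ of Theorem~\ref{theo:smoothdual}, so as to reduce to the case in which $\varphi$ is $\delta$-exact on $N$; and then, mimicking the proof of Theorem~\ref{theo:soltimeslice}, to represent the relevant classes in $\mathfrak{Sol}^k$ through the retarded-minus-advanced Green's operator $G$ acting on forms of timelike compact support (\cite[Theorems~7.4 and~7.8]{Benini:2014vsa}), to extend such forms by zero from $O$ to $N$ (possible because $O$ contains a Cauchy surface of $N$), and to use uniqueness for the Hodge-d'Alembert operator $\square$ in order to conclude that the remaining difference lies in $\delta\dd\Omega^{k-1}_0(N)$. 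Applying $\mathfrak{CCR}(\,\cdot\,)$ to the $\mathsf{PAb}$-isomorphism $\widehat{\mathfrak{G}}^k(f)$ obtained in this way then yields the asserted $C^\ast\mathsf{Alg}$-isomorphism $\widehat{\mathfrak{A}}^k(f)$.

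The step I expect to be the main obstacle is this last cutoff-and-extension argument, which must simultaneously control compact support in $O$, coclosedness, and integrality of periods; however, all genuinely topological subtleties have by then been absorbed into the already-established isomorphisms on singular (co)homology, so the remaining analysis is of the same hyperbolic-analytic character as the proofs of Theorems~\ref{theo:soldiagram} and~\ref{theo:soltimeslice}.
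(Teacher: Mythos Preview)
Your proposal is correct in outline, but it takes a considerably longer route than the paper's proof. The paper dispatches the theorem in three lines: it observes that, by Proposition~\ref{propo:vanishingsolution}, the Abelian group underlying $\widehat{\mathfrak{G}}^k(M)$ sits inside $\widehat{\mathfrak{Sol}}{}^k(M)^\star$ and that $\widehat{\mathfrak{G}}^k(f)$ is literally precomposition with $\widehat{\mathfrak{Sol}}{}^k(f)$; since Theorem~\ref{theo:soltimeslice} makes $\widehat{\mathfrak{Sol}}{}^k(f)$ an $\mathsf{Ab}$-isomorphism, the paper concludes at once that $\widehat{\mathfrak{G}}^k(f)$ is a $\mathsf{PAb}$-isomorphism, and then applies $\mathfrak{CCR}$. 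No five lemma, no further PDE analysis, no splitting into topological and form-theoretic pieces.

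Your injectivity argument is exactly the commutative-square version of this same observation, so there you agree with the paper. The divergence is in surjectivity: you reduce via the short exact sequence $0\to H^k(\,\cdot\,;\bbZ)^\star\to\widehat{\mathfrak{G}}^k(\,\cdot\,)\to\mathfrak{G}^k(\,\cdot\,)\to 0$ and the five lemma to surjectivity of $\mathfrak{G}^k(\iota_{N;O})$, and then propose a genuine hyperbolic-analytic cutoff argument to produce preimages in $\mathcal{V}^{k-1}(O)$ modulo $\delta\dd\Omega^{k-1}_0(N)$. This works (the sketch you give, using the isomorphism $H^{k-1}_\mathrm{free}(\iota_{N;O};\bbZ)^\prime$ to strip the integrality constraint and then Green's-operator representatives together with extension by zero, is the standard and successful approach for $k$-form theories), but it reproves from scratch something the paper simply reads off from Theorem~\ref{theo:soltimeslice}. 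One could argue that the paper is terse here --- surjectivity of a map between \emph{subgroups} of character groups does not follow formally from the ambient isomorphism $\widehat{\mathfrak{Sol}}{}^k(f)^\star$ alone --- so your more explicit argument has the virtue of closing that apparent gap. But it is a genuinely different, more hands-on route than the one the paper records.
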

\begin{proof}
Recall that the Abelian groups underlying $\widehat{\mathfrak{G}}^k(\,\cdot\,)$ are subgroups of the
character groups of $\widehat{\mathfrak{Sol}}{}^k(\,\cdot\,)$ and that by definition
$\widehat{\mathfrak{G}}^k(f) : \widehat{\mathfrak{G}}^k(M) \to \widehat{\mathfrak{G}}^k(N)\,,
~\mathrm{w} \mapsto \mathrm{w}\circ \widehat{\mathfrak{Sol}}{}^k(f)$. Using Theorem \ref{theo:soltimeslice}
we have that $\widehat{\mathfrak{Sol}}{}^k(f)$ is an $\mathsf{Ab}$-isomorphism for any Cauchy morphism $f:M\to N$,
hence $\widehat{\mathfrak{G}}^k(f)$ is a $\mathsf{PAb}$-isomorphism and
as a consequence of functoriality $\widehat{\mathfrak{A}}^k(f)
=\mathfrak{CCR}\big(\, \widehat{\mathfrak{G}}^k(f)\, \big) $
is a $C^\ast\mathsf{Alg}$-isomorphism.
\end{proof}

In addition to the causality and time-slice axioms, \cite{Brunetti:2001dx} proposed the locality axiom which demands
that the functor $\widehat{\mathfrak{A}}^k(\,\cdot\,):\mathsf{Loc}^m\to C^\ast\mathsf{Alg}$ should map
any $\mathsf{Loc}^m$-morphism $f:M\to N$ to an {\it injective} $C^\ast\mathsf{Alg}$-morphism
$\widehat{\mathfrak{A}}^k(f):\widehat{\mathfrak{A}}^k(M)\to \widehat{\mathfrak{A}}^k(N)$. The physical
idea behind this axiom is that any observable quantity on a sub-spacetime $M$ should also be an observable
quantity on the full spacetime $N$ into which it embeds via $f:M\to N$. It is known that 
this axiom is not satisfied in various formulations of Maxwell's theory, see e.g.\
 \cite{Dappiaggi:2011zs,Benini:2013tra,Benini:2013ita,Sanders:2012sf,Fewster:2014hba}.
The violation of the locality axiom is shown in most of these works by giving an example of a
 $\mathsf{Loc}^m$-morphism $f:M\to N$ such that the induced $C^\ast$-algebra morphism is not injective.
A detailed characterization and understanding of which $\mathsf{Loc}^m$-morphisms violate the
locality axiom is given in \cite{Benini:2013ita} for a theory of connections on fixed $\bbT$-bundles.
It is shown there that a morphism violates the locality axiom if and only if the induced morphism
between the compactly supported de Rham cohomology groups of degree
$2$ is not injective. 
Thus the locality axiom is violated due to topological obstructions.
Our present theory under consideration has a much richer topological structure than
a theory of connections on a fixed $\bbT$-bundle, see Remark \ref{rem:subsystemquantum}. 
It is therefore important to extend the analysis of \cite{Benini:2013ita} to our functor
$\widehat{\mathfrak{A}}^k(\,\cdot\,) : \mathsf{Loc}^m\to C^\ast\mathsf{Alg}$ in order to characterize
exactly those $\mathsf{Loc}^m$-morphisms which violate the locality axiom.
\sk

We collect some results which will simplify our analysis:
Let $\phi : (G,\sigma)\to (G^\prime,\sigma^\prime\, )$ be any $\mathsf{PAb}$-morphism.
Then $\mathfrak{CCR}(\phi)$ is injective if and only if $\phi$ is injective:
the direction ``$\Leftarrow$'' is shown in \cite[Corollary A.7]{Benini:2013ita} and the 
 direction ``$\Rightarrow$'' is an obvious proof by contraposition (which is spelled out in
 \cite[Theorem 5.2]{Benini:2013ita}).  Hence our problem of characterizing all $\mathsf{Loc}^m$-morphisms
 $f:M\to N$ for which $\widehat{\mathfrak{A}}^k(f)$ is injective is equivalent 
 to the classical problem of characterizing all $\mathsf{Loc}^m$-morphisms
  $f:M\to N$ for which $\widehat{\mathfrak{G}}^k(f)$ is injective.
 Furthermore, the kernel of any $\mathsf{PAb}$-morphism $\phi :
 (G,\sigma)\to (G^\prime,\sigma^\prime\, )$
 is a subgroup of the radical in $(G,\sigma)$: if $g\in G$ with $\phi(g) =0$ then
  $0 = \sigma^\prime\big(\phi(g),\phi(\tilde g)\big) = \sigma(g,\tilde g)$ for all $\tilde g\in G$,
 which shows that $g$ is an element the radical of $(G,\sigma)$.
For any object $M$ in $\mathsf{Loc}^m$ the radical of $\widehat{\mathfrak{G}}^k(M)$ is easily 
computed:
\begin{lem}\label{lem:radical}
The radical of $\widehat{\mathfrak{G}}^k(M)$ is the subgroup
\begin{flalign}\label{eqn:rad}
\mathrm{Rad}\big(\, \widehat{\mathfrak{G}}^k(M)\, \big) = \Big\{\, \mathrm{v}\in \widehat{\mathfrak{G}}^k(M) 
 \ : \ \iota^\star(\mathrm{v}) \in \frac{\delta\big(\Omega^k_0(M)\cap
   \dd\Omega^{k-1}_\mathrm{tc}(M)\big)}{\delta \dd
   \Omega^{k-1}_0(M)}\, \Big\}~.
\end{flalign}
\end{lem}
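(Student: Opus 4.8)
The plan is to reduce the computation of the radical, via the surjection $\iota^\star$, to a differential-geometric statement. Since $\widehat\tau$ is the pull-back of $\tau$ along $\iota^\star:\widehat{\mathfrak{G}}^k_\mathrm{o}(M)\to\mathcal{V}^{k-1}(M)$ (surjective by Theorem \ref{theo:smoothdual}), and since $\iota^\star$ descends to a surjection $\widehat{\mathfrak{G}}^k(M)\to\mathfrak{G}^k(M)$ (with underlying group $\mathcal{V}^{k-1}(M)/\delta\dd\Omega^{k-1}_0(M)$, using $\iota^\star[\widehat{\mathfrak{I}}^k(M)]=\delta\dd\Omega^{k-1}_0(M)$ from Proposition \ref{propo:vanishingsolution}), one has $\mathrm{Rad}\big(\widehat{\mathfrak{G}}^k(M)\big)=(\iota^\star)^{-1}\big(\mathrm{Rad}(\mathfrak{G}^k(M))\big)$. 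Hence it suffices to show that, for $\varphi\in\mathcal{V}^{k-1}(M)$,
\[ \ip{\varphi}{G(\psi)}=0 \text{ for all } \psi\in\mathcal{V}^{k-1}(M)\quad\Longleftrightarrow\quad \varphi\in\delta\big(\Omega^k_0(M)\cap\dd\Omega^{k-1}_\mathrm{tc}(M)\big) . \]
One first records that the right-hand side lies in $\mathcal{V}^{k-1}(M)$ (its elements are compactly supported, coclosed and $\delta$-exact, hence have trivial dual de~Rham class, cf.\ Lemma \ref{lem:vanishinggroup}) and contains $\delta\dd\Omega^{k-1}_0(M)$, so that the quotient in (\ref{eqn:rad}) is meaningful.

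For the implication ``$\Leftarrow$'' I would write $\varphi=\delta\dd\chi$ with $\chi\in\Omega^{k-1}_\mathrm{tc}(M)$ and $\dd\chi\in\Omega^k_0(M)$, and integrate by parts twice — once moving $\delta$, once moving $\dd$, both steps being legitimate because a timelike compact and a spacelike compact support overlap in a compact set — to obtain $\ip{\varphi}{G(\psi)}=\ip{\chi}{\delta\dd G(\psi)}=-\ip{\chi}{\dd G(\delta\psi)}=0$, since $\psi$ is coclosed and $\square G(\psi)=0$.

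The substance is the implication ``$\Rightarrow$'', which I would carry out along the lines of the proofs of Proposition \ref{propo:vanishingsolution} and Lemma \ref{lem:onshellcurv}. Rewriting the hypothesis as $\ip{G(\varphi)}{\psi}=0$ for all $\psi\in\mathcal{V}^{k-1}(M)$ and testing against $\psi=\delta\zeta$ with $\zeta\in\Omega^k_0(M)$ (an element of $\mathcal{V}^{k-1}(M)$) forces $G(\dd\varphi)=0$, hence $\dd G(\varphi)=0$ and $\dd\varphi=\square\xi$ for some $\xi\in\Omega^k_0(M)$ by the standard properties of $\square$ (\cite{Bar:2007zz,Baernew}); so $G(\varphi)$ is closed and coclosed with spacelike compact support. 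The pairing $\ip{G(\varphi)}{\psi}$ then depends only on the de~Rham class of $G(\varphi)$ and the dual de~Rham class of $\psi$; as $\psi$ varies the latter exhausts the lattice $H^{k-1}_\mathrm{free}(M;\bbZ)^\prime$ (exactness of (\ref{eqn:lowersmooth})), which spans $\Hom_\bbR(H^{k-1}(M;\bbR),\bbR)$ over $\bbR$ by the finite-type assumption, so $[G(\varphi)]=0$ in $H^{k-1}(M;\bbR)$ and $G(\varphi)=\dd\theta$ with $\theta\in\Omega^{k-2}(M)$, $\delta\dd\theta=0$. Passing to Lorenz gauge and applying \cite[Theorem 7.4]{Benini:2014vsa} I would write $\theta=G(\gamma)$ modulo closed forms with $\gamma\in\Omega^{k-2}_{\mathrm{tc},\,\delta}(M)$; then $G(\varphi-\dd\gamma)=0$ with $\varphi-\dd\gamma$ timelike compact gives $\varphi-\dd\gamma=\square\sigma$, $\sigma\in\Omega^{k-1}_\mathrm{tc}(M)$, whence $\varphi=\dd\mu+\delta\dd\sigma$ with $\mu:=\gamma+\delta\sigma$ and $\delta\dd\mu=0$. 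Comparing $\dd\varphi=\square\dd\sigma$ with $\dd\varphi=\square\xi$ yields $\square(\dd\sigma-\xi)=0$ for a timelike compactly supported form, hence $\dd\sigma=\xi\in\Omega^k_0(M)$ by uniqueness of the Cauchy problem for $\square$; then $\dd\mu=\varphi-\delta\dd\sigma$ is a compactly supported solution of $\square$, so $\dd\mu=0$ by the same uniqueness. Therefore $\varphi=\delta(\dd\sigma)$ with $\dd\sigma\in\Omega^k_0(M)\cap\dd\Omega^{k-1}_\mathrm{tc}(M)$, as required.

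The main obstacle is precisely this last direction: the careful tracking of the three support classes (compact, timelike compact, spacelike compact) through the Green's-operator and Poincar\'e-duality results of \cite{Benini:2014vsa}. Everything rests on two clean facts — that the dual de~Rham classes of elements of $\mathcal{V}^{k-1}(M)$ fill out a full-rank lattice, forcing the ordinary de~Rham class of $G(\varphi)$ to vanish, and that a solution of $\square$ with timelike compact (a fortiori compact) support is identically zero, which is what eliminates the residual closed contributions $\dd\mu$ and $\dd\sigma-\xi$.
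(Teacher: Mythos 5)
Your proof is correct, and for the inclusion $\supseteq$ it is essentially the same elementary integration-by-parts computation as in the paper. For $\subseteq$ the paper is very terse — it merely says ``by similar arguments as in the proof of Proposition \ref{propo:vanishingsolution}'' — and your write-up supplies the needed adaptation explicitly, which is a useful service: because the test space here is only $\mathcal{V}^{k-1}(M)$ (compactly supported, coclosed, with dual de~Rham class in the lattice $H^{k-1}_\mathrm{free}(M;\bbZ)^\prime$) rather than the full set of $G(\alpha)$ with $\alpha \in \Omega^{k-1}_{\mathrm{tc},\delta}(M)$ used in Proposition \ref{propo:vanishingsolution}, you correctly replace the spacelike/timelike-compact Poincar\'e duality step by the ordinary one (lattice spans $\Hom_\bbR(H^{k-1}(M;\bbR),\bbR)$, so $[G(\iota^\star(\mathrm{v}))]=0$), and the output $\sigma$ ends up only timelike compact with $\dd\sigma$ compact — exactly the weaker conclusion the lemma actually asserts, as opposed to the compactly supported $\rho$ obtained in Proposition \ref{propo:vanishingsolution}. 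Your endgame, eliminating the residual closed pieces $\dd\mu$ and $\dd\sigma - \xi$ via uniqueness of the Cauchy problem for $\square$ on timelike compactly supported forms, is also the natural way to close the argument and matches the support bookkeeping required by the statement.
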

\begin{proof}
We show the inclusion ``$\supseteq$'' by evaluating the presymplectic structure
(\ref{eqn:presymp}) for any element $\mathrm{v}$ of the group on
 the right-hand side of (\ref{eqn:rad}) and any $\mathrm{w}\in \widehat{\mathfrak{G}}^k(M)$.
 Using $\iota^\star(\mathrm{v}) = [\delta \dd \rho]$ for some $\rho\in \Omega^{k-1}_\mathrm{tc}(M)$, we obtain
 \begin{flalign}
 \nn \widehat{\tau}(\mathrm{v},\mathrm{w}) &= \lambda^{-1}\,\ip{\iota^\star(\mathrm{v})}{G\big(\iota^\star(\mathrm{w}) \big)}
 \\[4pt] \nn &= \lambda^{-1}\,\ip{\delta
   \dd\rho}{G\big(\iota^\star(\mathrm{w}) \big)}\\[4pt] \nn
 &=\lambda^{-1}\,\ip{\rho}{G\big(\delta \dd \,
   \iota^\star(\mathrm{w}) \big)} \\[4pt] &=
  \lambda^{-1}\,\ip{\rho}{G\big(\square\, \iota^\star(\mathrm{w}) \big)}  \nn \\[4pt] 
 &=0~.
 \end{flalign}
We now show the inclusion ``$\subseteq$''. Let $\mathrm{v}$ be any element in the radical
 of $\widehat{\mathfrak{G}}^k(M)$, i.e. $0 =
 \widehat{\tau}(\mathrm{w},\mathrm{v}) =
 \lambda^{-1}\,\ip{\iota^\star(\mathrm{w})}{G(\iota^\star(\mathrm{v}))}$
 for all $\mathrm{w}\in \widehat{\mathfrak{G}}^k(M)$.
As $\iota^\star$ is surjective, this implies that $ \ip{\varphi}{G(\iota^\star(\mathrm{v}))}=0$ for all $\varphi\in\mathcal{V}^{k-1}(M)$,
from which we can deduce by similar arguments as in the proof of Proposition \ref{propo:vanishingsolution} that 
$\iota^\star(\mathrm{v}) = [\delta \dd \rho]$ for some $\rho\in \Omega^{k-1}_{\mathrm{tc}}(M)$
with $\dd \rho \in \Omega^{k}_0(M)$.
\end{proof}

We show that the radical $\mathrm{Rad}\big(\,
\widehat{\mathfrak{G}}^k(M)\, \big)$, and hence also the kernel of any
$\mathsf{PAb}$-morphism with source given by $\widehat{\mathfrak{G}}^k(M)$, is contained in the
 images of $\Curv^\star \circ \mathrm{q}^\star : \Co{m-k}{M}{\bbR}^\star \to \widehat{\mathfrak{G}}^k(M)$
 and $\Char^\star : \Co{k}{M}{\bbZ}^\star \to  \widehat{\mathfrak{G}}^k(M) $.
 To make this precise, similarly to \cite[Section 5]{Fewster:2014gxa} we may 
 equip the category $\mathsf{PAb}$ with the following monoidal structure $\oplus$:
 For two objects $(G,\sigma)$ and $(G^\prime,\sigma^\prime\, )$ in $\mathsf{PAb}$ we set
 $(G,\sigma)\oplus (G^\prime,\sigma^\prime\, ) := (G\oplus G^\prime, \sigma\oplus \sigma^\prime\, )$, 
 where $G\oplus G^\prime$ denotes the direct sum of Abelian groups and $\sigma\oplus\sigma^\prime$ is the presymplectic
 structure on $G\oplus G^\prime$ defined by $\sigma\oplus\sigma^\prime\big(g\oplus g^\prime, \tilde g\oplus \tilde{g}^\prime\,\big)
 := \sigma(g,\tilde g) + \sigma^\prime(g^\prime,\tilde{g}^\prime\, )$. For two $\mathsf{PAb}$-morphisms
 $\phi_i : (G_i,\sigma_i)\to (G_i^\prime,\sigma^\prime_i\, )$, $i=1,2$, the functor gives the direct sum
 $\phi_1 \oplus \phi_2 : (G_1\oplus G_2 , \sigma_1\oplus\sigma_2)\to 
 (G^\prime_1\oplus G^\prime_2,\sigma^\prime_1\oplus \sigma^\prime_2)$. 
 The identity object is the trivial presymplectic Abelian group.
 We define the covariant functor describing the direct sum of both topological 
 subtheories of $\widehat{\mathfrak{G}}^k(\,\cdot\,)$ by 
 \begin{flalign}
 \mathfrak{Charge}^k(\,\cdot\,) :=  
  \Co{m-k}{\,\cdot\,}{\bbR}^\star \oplus  \Co{k}{\,\cdot\,}{\bbZ}^\star \,:\, \mathsf{Loc}^m \ \longrightarrow \ \mathsf{PAb}~.
  \end{flalign}
There is an obvious natural transformation $\mathrm{top}^\star 
: \mathfrak{Charge}^k(\,\cdot\,)\Rightarrow \widehat{\mathfrak{G}}^k(\,\cdot\,)$
given for any object $M$ in $\mathsf{Loc}^m$ by
 \begin{flalign}
\mathrm{top}^\star \,:\, \mathfrak{Charge}^k(M) \ \longrightarrow \  \widehat{\mathfrak{G}}^k(M)~, \qquad
 \psi\oplus \phi \ \longmapsto \ \Curv^\star\big(\mathrm{q}^\star(\psi)\big)  + \Char^\star(\phi)~.
 \end{flalign}
 This natural transformation is injective by the following argument: Using the isomorphism explained in the paragraph before
 Remark \ref{rem:subtheoryonshell},  we can represent any $\psi \in  \Co{m-k}{M}{\bbR}^\star $
 by a compactly supported de Rham class $[\varphi]\in\Omega^{k}_{0,\,\dd}(M)/\dd\Omega^{k-1}_0(M)$. 
Applying $\iota^\star$ on the equation $\mathrm{top}^\star(\psi\oplus \phi) =0$ implies
$[\delta\varphi] =0$ in $\mathcal{V}^{k-1}(M)/\delta \dd\Omega^{k-1}_0(M)$, i.e.\ $\delta \varphi = \delta\dd\rho$ for
some $\rho\in \Omega^{k-1}_0(M)$, which after applying $\dd$ and using $\dd\varphi = 0$ leads to $\varphi = \dd\rho$, i.e.\
$[\varphi] =0$ and thus $\psi=0$. As $\Char^\star$ is injective, the condition $\mathrm{top}^\star(\psi\oplus \phi) =0$
implies $\psi \oplus \phi =0$ and so $\mathrm{top}^\star$ is injective.
\begin{lem}\label{lem:topradconnection}
The radical $\mathrm{Rad}\big(\,\widehat{\mathfrak{G}}^k(M) \,\big)$
is a subgroup of the image of $\mathfrak{Charge}^k(M)$ under $\mathrm{top}^\star : \mathfrak{Charge}^k(M)\to \widehat{\mathfrak{G}}^k(M)$.
In particular, the kernel of any $\mathsf{PAb}$-morphism with source given by $\widehat{\mathfrak{G}}^k(M)$ is a subgroup
of the image of $\mathfrak{Charge}^k(M)$ under $\mathrm{top}^\star : \mathfrak{Charge}^k(M)\to \widehat{\mathfrak{G}}^k(M)$.
\end{lem}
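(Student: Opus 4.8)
The plan is to feed the explicit description of the radical from Lemma \ref{lem:radical} into the bottom exact row of the diagram (\ref{eqn:subtheoryonshell}), namely $0 \to \Co{k}{M}{\bbZ}^\star \xrightarrow{\Char^\star} \widehat{\mathfrak{G}}^k(M) \xrightarrow{\iota^\star} \mathfrak{G}^k(M) \to 0$, together with the description of the electric subtheory worked out just before Remark \ref{rem:subtheoryonshell}. First I would fix an arbitrary $\mathrm{v}\in\mathrm{Rad}\big(\,\widehat{\mathfrak{G}}^k(M)\,\big)$. By Lemma \ref{lem:radical} there is $\rho\in\Omega^{k-1}_{\mathrm{tc}}(M)$ with $\dd\rho\in\Omega^k_0(M)$ and $\iota^\star(\mathrm{v}) = [\delta\dd\rho]$ in $\mathfrak{G}^k(M) = \mathcal{V}^{k-1}(M)/\delta\dd\Omega^{k-1}_0(M)$. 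The form $\dd\rho$ is compactly supported and closed, hence it defines a class $[\dd\rho]\in\Omega^k_{0,\,\dd}(M)/\dd\Omega^{k-1}_0(M)$; under the Poincar{\'e}-duality isomorphism $\Omega^k_{0,\,\dd}(M)/\dd\Omega^{k-1}_0(M)\simeq\Co{m-k}{M}{\bbR}^\star$ used before Remark \ref{rem:subtheoryonshell} this class corresponds to some $\psi\in\Co{m-k}{M}{\bbR}^\star$.

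Then I would compare $\mathrm{v}$ with $\mathrm{top}^\star(\psi\oplus 0) = \Curv^\star\big(\mathrm{q}^\star(\psi)\big)\in\widehat{\mathfrak{G}}^k(M)$. Applying $\iota^\star$ and using the computation $\iota^\star\circ\Curv^\star\circ\mathrm{q}^\star([\varphi]) = [\delta\varphi]$ recorded before Remark \ref{rem:subtheoryonshell}, with the representative $\varphi = \dd\rho$, gives $\iota^\star\big(\Curv^\star(\mathrm{q}^\star(\psi))\big) = [\delta\dd\rho] = \iota^\star(\mathrm{v})$ in $\mathfrak{G}^k(M)$. Hence $\mathrm{v} - \Curv^\star(\mathrm{q}^\star(\psi))$ lies in the kernel of $\iota^\star$, and by exactness of the bottom row of (\ref{eqn:subtheoryonshell}) at $\widehat{\mathfrak{G}}^k(M)$ there is $\phi\in\Co{k}{M}{\bbZ}^\star$ with $\mathrm{v} - \Curv^\star(\mathrm{q}^\star(\psi)) = \Char^\star(\phi)$, so that $\mathrm{v} = \Curv^\star(\mathrm{q}^\star(\psi)) + \Char^\star(\phi) = \mathrm{top}^\star(\psi\oplus\phi)$ lies in the image of $\mathrm{top}^\star$. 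This proves the first assertion; the ``in particular'' statement is then immediate, since — as noted right before the lemma — the kernel of any $\mathsf{PAb}$-morphism out of $\widehat{\mathfrak{G}}^k(M)$ is contained in $\mathrm{Rad}\big(\,\widehat{\mathfrak{G}}^k(M)\,\big)$.

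The routine ingredients (exactness of the row, naturality, the identification of the electric subtheory) are all already available, so the only delicate point — the place to be careful rather than fast — is the choice of $\psi$: one must resist regarding $\dd\rho$ as trivial in compactly supported de~Rham cohomology, since it need not be, precisely because $\rho$ has only timelike compact and not compact support, and instead select $\psi$ via the isomorphism $\Omega^k_{0,\,\dd}(M)/\dd\Omega^{k-1}_0(M)\simeq\Co{m-k}{M}{\bbR}^\star$. Once $\psi$ is fixed this way, the remainder is a short diagram chase in (\ref{eqn:subtheoryonshell}).
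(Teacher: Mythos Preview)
Your proof is correct and is essentially identical to the paper's own argument: both use Lemma \ref{lem:radical} to write $\iota^\star(\mathrm{v})=[\delta\dd\rho]$ with $\dd\rho$ compactly supported, subtract $\Curv^\star(\mathrm{q}^\star([\dd\rho]))$ to land in $\Ker\iota^\star$, and then invoke exactness of the bottom row of (\ref{eqn:subtheoryonshell}) to produce $\phi\in\Co{k}{M}{\bbZ}^\star$. Your closing caution about $[\dd\rho]$ being potentially nontrivial in compactly supported de~Rham cohomology is exactly the point that makes the argument work.
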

\begin{proof}
The second statement follows from the first one, since as we have argued above kernels of $\mathsf{PAb}$-morphisms are subgroups
of the radical of the source. To show the first statement,
let $\mathrm{v}\in \mathrm{Rad}\big(\,\widehat{\mathfrak{G}}^k(M)\,\big)$ and notice that by Lemma
\ref{lem:radical} there exists $\rho \in \Omega^{k-1}_{\mathrm{tc}}(M)$ with $\dd\rho$ of compact support such
that $\iota^\star(\mathrm{v}) = [\delta \dd\rho]$. As the element
$\mathrm{v} - \Curv^\star\big(\mathrm{q}^\star([\dd\rho])\big) \in \widehat{\mathfrak{G}}^{k}(M)$
lies in the kernel of $\iota^\star$, Remark \ref{rem:subtheoryonshell} implies that there exists
$\phi\in \Co{k}{M}{\bbZ}^\star$ such that $\mathrm{v} = \Curv^\star\big(\mathrm{q}^\star([\dd\rho])\big) + \Char^\star(\phi) = 
\mathrm{top}^\star\big([\dd\rho]\oplus \phi\big)$, and hence $\mathrm{v}$ lies 
in the image of $\mathfrak{Charge}^k(M)$ under $\mathrm{top}^\star : \mathfrak{Charge}^k(M)\to \widehat{\mathfrak{G}}^k(M)$.
\end{proof}
\begin{rem}
Notice that the converse of Lemma \ref{lem:topradconnection} is in general not true, i.e.\
the image of $\mathfrak{Charge}^k(M)$ under $\mathrm{top}^\star : \mathfrak{Charge}^k(M)\to \widehat{\mathfrak{G}}^k(M)$
is not necessarily a subgroup of the radical $\mathrm{Rad}\big(\, \widehat{\mathfrak{G}}^k(M)\, \big)$.
For example, for any object $M$ in $\mathsf{Loc}^m$ which has compact Cauchy surfaces (such as $M=\bbR\times \bbT^{m-1}$ 
equipped with the canonical Lorentzian metric), Lemma \ref{lem:radical} implies that the radical is the kernel of $\iota^\star$,
which by Remark \ref{rem:subtheoryonshell} is equal to the image of $\Char^\star$. If
$\Co{m-k}{M}{\bbR}^\star$ is non-trivial (as in the case $M= \bbR\times\bbT^{m-1}$ for any $1\leq k\leq m$),
then its image under $\Curv^\star\circ \mathrm{q}^\star$ is not contained in the radical.
\end{rem}
\sk

We can now give a characterization of the $\mathsf{Loc}^m$-morphisms which violate the locality axiom.
\begin{theo}\label{theo:injectivityproperty}
Let $f:M\to N$ be any $\mathsf{Loc}^m$-morphism. Then  the $C^\ast\mathsf{Alg}$-morphism
$\widehat{\mathfrak{A}}^k(f): \widehat{\mathfrak{A}}^k(M) \to \widehat{\mathfrak{A}}^k(N) $ is injective 
if and only if the $\mathsf{PAb}$-morphism 
$\mathfrak{Charge}^k(f) : \mathfrak{Charge}^k(M)\to \mathfrak{Charge}^k(N) $ 
is injective.
\end{theo}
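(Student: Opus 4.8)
The plan is to reduce the statement to a purely presymplectic assertion and then run a short diagram chase using the structural results already in hand. First I would invoke the properties of the $\mathrm{CCR}$-functor recalled above: by \cite[Corollary A.7]{Benini:2013ita} (direction ``$\Leftarrow$'') and the contraposition argument of \cite[Theorem 5.2]{Benini:2013ita} (direction ``$\Rightarrow$''), the $C^\ast\mathsf{Alg}$-morphism $\widehat{\mathfrak{A}}^k(f) = \mathfrak{CCR}\big(\,\widehat{\mathfrak{G}}^k(f)\,\big)$ is injective if and only if the $\mathsf{PAb}$-morphism $\widehat{\mathfrak{G}}^k(f) : \widehat{\mathfrak{G}}^k(M)\to\widehat{\mathfrak{G}}^k(N)$ is injective. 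Hence it suffices to prove that $\widehat{\mathfrak{G}}^k(f)$ is injective if and only if $\mathfrak{Charge}^k(f)$ is injective.

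For the direction ``$\Leftarrow$'' I would argue as follows. Assume $\mathfrak{Charge}^k(f)$ is injective and let $\mathrm{v}\in\widehat{\mathfrak{G}}^k(M)$ be in the kernel of $\widehat{\mathfrak{G}}^k(f)$. Since kernels of $\mathsf{PAb}$-morphisms are subgroups of the radical of the source, $\mathrm{v}\in\mathrm{Rad}\big(\,\widehat{\mathfrak{G}}^k(M)\,\big)$, and by Lemma \ref{lem:topradconnection} there is some $c\in\mathfrak{Charge}^k(M)$ with $\mathrm{v} = \mathrm{top}^\star(c)$. Naturality of the transformation $\mathrm{top}^\star$ then gives $0 = \widehat{\mathfrak{G}}^k(f)(\mathrm{v}) = \widehat{\mathfrak{G}}^k(f)\big(\mathrm{top}^\star(c)\big) = \mathrm{top}^\star\big(\mathfrak{Charge}^k(f)(c)\big)$; since $\mathrm{top}^\star$ is injective (as established in the paragraph preceding Lemma \ref{lem:topradconnection}) we get $\mathfrak{Charge}^k(f)(c) = 0$, hence $c = 0$ by hypothesis and therefore $\mathrm{v} = \mathrm{top}^\star(0) = 0$. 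This shows $\widehat{\mathfrak{G}}^k(f)$ is injective.

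For the direction ``$\Rightarrow$'' the argument is even shorter. Assume $\widehat{\mathfrak{G}}^k(f)$ is injective and let $c\in\mathfrak{Charge}^k(M)$ lie in the kernel of $\mathfrak{Charge}^k(f)$. Naturality of $\mathrm{top}^\star$ gives $\widehat{\mathfrak{G}}^k(f)\big(\mathrm{top}^\star(c)\big) = \mathrm{top}^\star\big(\mathfrak{Charge}^k(f)(c)\big) = \mathrm{top}^\star(0) = 0$, so $\mathrm{top}^\star(c) = 0$ by injectivity of $\widehat{\mathfrak{G}}^k(f)$, and finally $c = 0$ by injectivity of $\mathrm{top}^\star$. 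Combining the two directions with the first reduction step completes the proof.

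I do not expect a genuine obstacle here: the substantive content—the computation of the radical (Lemma \ref{lem:radical}), the fact that it is contained in the image of $\mathrm{top}^\star$ (Lemma \ref{lem:topradconnection}), and injectivity of $\mathrm{top}^\star$—is exactly what makes the chase go through, and all of it is already available. The only points requiring a little care are bookkeeping ones: making sure the two naturality squares one uses really commute (which they do, since $\mathrm{top}^\star$ is a natural transformation by construction) and correctly citing both implications of the $\mathrm{CCR}$-injectivity equivalence so that the reduction in the first step is genuinely an equivalence rather than a single implication.
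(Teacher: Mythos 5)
Your proof is correct and takes essentially the same approach as the paper: reduce to injectivity of $\widehat{\mathfrak{G}}^k(f)$ via the $\mathrm{CCR}$-functor's injectivity characterization, then run a diagram chase using naturality and injectivity of $\mathrm{top}^\star$ for the easy direction and Lemma~\ref{lem:topradconnection} for the harder one. The paper phrases both directions in contrapositive form (``not injective $\Leftrightarrow$ not injective'') rather than your direct form, but the content is identical; note also that the second clause of Lemma~\ref{lem:topradconnection} already bundles the radical observation for you, so your separate appeal to ``kernels lie in the radical'' is a slight redundancy rather than an error.
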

\begin{proof}
We can simplify this problem by recalling from above that 
$\widehat{\mathfrak{A}}^k(f)$ is injective if and only if $\widehat{\mathfrak{G}}^k(f)$ is injective.
Furthermore, it is easier to prove the contraposition
``$\widehat{\mathfrak{G}}^k(f)$ not injective $\Leftrightarrow$ $\mathfrak{Charge}^k(f)$ not injective'',
which is equivalent to our theorem.
Our arguments will be based on the fact that $\mathrm{top}^\star : 
\mathfrak{Charge}^k(\,\cdot\,)\Rightarrow \widehat{\mathfrak{G}}^k(\,\cdot\,)$
is an {\it injective} natural transformation, so it is helpful to draw the corresponding commutative
diagram in the category $\mathsf{PAb}$ with exact vertical sequences:
\begin{flalign}\label{eqn:diagramlocalityproof}
\xymatrix{
\widehat{\mathfrak{G}}^k(M) \ \ar[rr]^-{\widehat{\mathfrak{G}}^k(f)}&&
\ \widehat{\mathfrak{G}}^k(N) \ \\
\ar[u]^-{\mathrm{top}^\star} \mathfrak{Charge}^k(M) \
\ar[rr]^-{\mathfrak{Charge}^k(f)} && \ \mathfrak{Charge}^k(N) \ar[u]_-{\mathrm{top}^\star}\\
0\ar[u] && 0 \ar[u]
}
\end{flalign}
Let us prove the direction ``$\Leftarrow$'': Assuming that $\mathfrak{Charge}^k(f)$ is not injective, the diagram 
(\ref{eqn:diagramlocalityproof}) implies that $\mathrm{top}^\star\circ\mathfrak{Charge}^k(f) =
 \widehat{\mathfrak{G}}^k(f)\circ \mathrm{top}^\star$  is not injective, and hence $\widehat{\mathfrak{G}}^k(f)$ is not injective
since $\mathrm{top}^\star$ is injective.
To prove the direction ``$\Rightarrow$'' let us assume that $\widehat{\mathfrak{G}}^k(f)$ is not injective. 
By Lemma \ref{lem:topradconnection} the kernel of $\widehat{\mathfrak{G}}^k(f)$ is a subgroup of the image
of $\mathfrak{Charge}^k(M)$ under $\mathrm{top}^\star : \mathfrak{Charge}^k(M)\to \widehat{\mathfrak{G}}^k(M)$,
hence $\widehat{\mathfrak{G}}^k(f)\circ \mathrm{top}^\star $ is not injective. The commutative diagram
(\ref{eqn:diagramlocalityproof}) then implies that $\mathrm{top}^\star\circ\mathfrak{Charge}^k(f)$ is not injective,
hence $\mathfrak{Charge}^k(f)$ is not injective since $\mathrm{top}^\star$ is injective.
\end{proof}

\begin{ex}\label{ex:localityviolation}
We provide explicit examples of $\mathsf{Loc}^m$-morphisms $f:M\to N$ which violate the locality axiom.
Let us take as $\mathsf{Loc}^m$-object $N=\bbR^m$, the $m$-dimensional oriented and time-oriented 
Minkowski spacetime. Choosing any Cauchy surface $\Sigma_N = \bbR^{m-1}$ in $N$,
we take the subset $\Sigma_M := \big(\bbR^{p}\setminus \{0\} \big) \times \bbR^{m-1-p}\subseteq \Sigma_N$, where
we have removed the origin $0$ of a $p$-dimensional subspace with $1\leq p\leq m-1$.
We take the Cauchy development of $\Sigma_M$ in $N$ (which we denote by $M$) 
and note that by \cite[Lemma A.5.9]{Bar:2007zz} $M$ is a causally compatible, open and globally hyperbolic
subset of $N$. The canonical inclusion provides us with a $\mathsf{Loc}^m$-morphism
$\iota_{N;M} : M\to N$. Using the diffeomorphism $\bbR^p \setminus \{0\} \simeq \bbR\times \mathbb{S}^{p-1}$ with the
$p{-}1$-sphere $\mathbb{S}^{p-1}$ (in our conventions $\mathbb{S}^0 := \{-1,+1\}$), we find that
$M\simeq \bbR^{m-p+1}\times \mathbb{S}^{p-1}$. Using the fact that the singular cohomology groups are homotopy invariant, we obtain
\begin{subequations}\label{eqn:chargegroupsexample}
\begin{flalign}
\mathfrak{Charge}^k(M) &\simeq \Co{m-k}{\mathbb{S}^{p-1}}{\bbR}^\star \oplus \Co{k}{\mathbb{S}^{p-1}}{\bbZ}^\star ~,\\[4pt]
\mathfrak{Charge}^k(N) &\simeq  \Co{m-k}{\mathrm{pt}}{\bbR}^\star \oplus \Co{k}{\mathrm{pt}}{\bbZ}^\star~,
\end{flalign}
\end{subequations}
where $\mathrm{pt}$ denotes a single point. By Theorem \ref{theo:injectivityproperty}, the $C^\ast\mathsf{Alg}$-morphism
$\widehat{\mathfrak{A}}^k(\iota_{N;M})$ is not injective if and only if  $\mathfrak{Charge}^k(\iota_{N;M})$ is not injective.
The following choices of $p$ lead to $\mathsf{Loc}^m$-morphisms $\iota_{N;M}: M\to N$ which violate the locality axiom:
\begin{itemize}
\item $k=1$: Since by assumption $m\geq 2$, the second isomorphism in (\ref{eqn:chargegroupsexample}) implies
$\mathfrak{Charge}^1(N) =0$. Since $1\leq p\leq m-1$, we have $\Co{m-1}{\mathbb{S}^{p-1}}{\bbR}^\star=0$ and hence 
the first isomorphism in (\ref{eqn:chargegroupsexample}) implies $\mathfrak{Charge}^1(M) \simeq \Co{1}{\mathbb{S}^{p-1}}{\bbZ}^\star$.
For $m\geq 3$ we choose $p=2$ and find that $\mathfrak{Charge}^1(M) \simeq \bbZ^\star\simeq \bbT$, hence 
$\mathfrak{Charge}^1(\iota_{N;M})$
is not injective (being a group homomorphism $\bbT\to 0$). The case $m=2$ is special and is discussed in detail below.

\item $2\leq k\leq m-1$: The second isomorphism in (\ref{eqn:chargegroupsexample}) implies $\mathfrak{Charge}^k(N) =0$.  Choosing $p=m-k+1$ (which is admissible since $2\leq p \leq m-1$),
the first isomorphism in (\ref{eqn:chargegroupsexample}) gives $\mathfrak{Charge}^k(M) \simeq \bbR \oplus \delta_{k,m-k}\, \bbT$,
where $\delta_{k,m-k}$ denotes the Kronecker delta. Hence
$\mathfrak{Charge}^k(\iota_{N;M})$ 
is not injective  (being a group homomorphism $ \bbR \oplus \delta_{k,m-k}\bbT\to 0$). Alternatively, if $2\leq k\leq m-2$ we may also choose $p=k+1$ and find via the
first isomorphism in (\ref{eqn:chargegroupsexample}) that $\mathfrak{Charge}^k(M) \simeq \delta_{k,m-k} \, \bbR \oplus \bbT$,
which also implies that $\mathfrak{Charge}^k(\iota_{N;M})$ is not injective.

\item $k=m$: The second isomorphism in (\ref{eqn:chargegroupsexample}) implies
$\mathfrak{Charge}^m(N) \simeq \bbR$. Choosing $p=1$ (which is admissible since $m\geq 2$)
we obtain $\mathfrak{Charge}^m(M) \simeq \bbR^2$, hence $\mathfrak{Charge}^m(\iota_{N;M})$ is not injective (being a group
homomorphism $\bbR^2\to\bbR$).
\end{itemize}
\end{ex}
\begin{cor}
Choose any $m\geq 2$ and $1\leq k\leq m$ such that $(m,k)\neq (2,1)$. Then the quantum field theory functor
$\widehat{\mathfrak{A}}^k(\,\cdot\,): \mathsf{Loc}^m\to C^\ast\mathsf{Alg}$ violates the locality axiom.
\end{cor}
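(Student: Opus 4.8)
The plan is to deduce the corollary directly from Theorem~\ref{theo:injectivityproperty} together with the explicit constructions in Example~\ref{ex:localityviolation}. By Theorem~\ref{theo:injectivityproperty}, the morphism $\widehat{\mathfrak{A}}^k(f)$ fails to be injective precisely when the $\mathsf{PAb}$-morphism $\mathfrak{Charge}^k(f)$ fails to be injective; since $\mathfrak{Charge}^k(\,\cdot\,) = \Co{m-k}{\,\cdot\,}{\bbR}^\star \oplus \Co{k}{\,\cdot\,}{\bbZ}^\star$ is obtained by dualizing the homotopy-invariant functors $\Co{m-k}{\,\cdot\,}{\bbR}$ and $\Co{k}{\,\cdot\,}{\bbZ}$, it suffices to exhibit, for each admissible pair $(m,k)$, one $\mathsf{Loc}^m$-morphism for which the induced map on these cohomology character groups has nontrivial kernel.

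For this I would take $N = \bbR^m$ with its standard orientation and time-orientation, and let $M$ be the Cauchy development of $\big(\bbR^p \setminus \{0\}\big) \times \bbR^{m-1-p}$ inside a Cauchy surface of $N$, so that $\iota_{N;M}\colon M \to N$ is a $\mathsf{Loc}^m$-morphism with $M \simeq \bbR^{m-p+1} \times \mathbb{S}^{p-1}$, exactly as in Example~\ref{ex:localityviolation}. Then I would split into three cases. First, $k = 1$: since $(m,k) \neq (2,1)$ we have $m \geq 3$, so I would pick $p = 2$ and use that $\mathfrak{Charge}^1(N) = 0$ while $\mathfrak{Charge}^1(M) \simeq \Co{1}{\mathbb{S}^1}{\bbZ}^\star \simeq \bbT \neq 0$, forcing $\mathfrak{Charge}^1(\iota_{N;M})$ to have nontrivial kernel. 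Second, $2 \leq k \leq m-1$: here I would pick $p = m - k + 1$, which is admissible, and use $\mathfrak{Charge}^k(N) = 0$ together with $\mathfrak{Charge}^k(M) \simeq \bbR \oplus \delta_{k,m-k}\,\bbT \neq 0$. Third, $k = m$ (which includes the case $(m,k) = (2,2)$): here I would pick $p = 1$, so that $\mathfrak{Charge}^m(N) \simeq \bbR$ and $\mathfrak{Charge}^m(M) \simeq \bbR^2$, and any group homomorphism $\bbR^2 \to \bbR$ has nontrivial kernel.

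To conclude I would check that these three cases exhaust the hypotheses: for $m = 2$ the only admissible $k$ is $k = 2$, handled by the third case, while for $m \geq 3$ every $k \in \{1,\dots,m\}$ lies in one of the three ranges. In each case the chosen $\mathsf{Loc}^m$-morphism violates the locality axiom, so $\widehat{\mathfrak{A}}^k(\,\cdot\,)$ does. The argument is essentially bookkeeping on top of Example~\ref{ex:localityviolation}; the only point worth flagging is that the exclusion $(m,k) = (2,1)$ is genuine, since there the construction degenerates ($M$ becomes a disjoint union of two copies of $\bbR^2$ and $\mathfrak{Charge}^1(M) = \mathfrak{Charge}^1(N) = 0$), consistent with the assertion made earlier in the text that locality does hold in that exceptional case. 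If one preferred a self-contained proof rather than a pointer to the example, the only mild subtlety would be verifying that $\mathfrak{Charge}^k(\iota_{N;M})$ is computed by pulling back cohomology along the homotopy equivalences $M \simeq \bbR^{m-p+1}\times\mathbb{S}^{p-1}$ and $N \simeq \mathrm{pt}$ and then dualizing, which is where I expect the (minor) bulk of the writing to go.
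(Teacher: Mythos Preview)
Your proposal is correct and follows exactly the paper's approach: the paper's proof is a one-line appeal to Example~\ref{ex:localityviolation} and Theorem~\ref{theo:injectivityproperty}, and you have simply unpacked the case analysis from that example in more detail. Nothing further is needed.
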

\begin{proof}
This follows from the explicit examples of $\mathsf{Loc}^m$-morphisms given in Example \ref{ex:localityviolation}
and Theorem \ref{theo:injectivityproperty}.
\end{proof}

The case $m=2$ and $k=1$ is special. As any object $M$ in $\mathsf{Loc}^2$ is a two-dimensional globally hyperbolic
spacetime, there exists a one-dimensional Cauchy surface $\Sigma_M$ such that $M\simeq \bbR\times\Sigma_M$.
By the classification of one-manifolds (without boundary), $\Sigma_M$ is diffeomorphic to the disjoint union
of copies of $\bbR$ and $\bbT$, i.e.\ $\Sigma_M\simeq \coprod_{i=1}^{n_M}\, \bbR \sqcup \coprod_{j=1}^{c_M}\, \bbT$
(the natural numbers $n_M$ and $c_M$ are finite, since $M$ is assumed to be of finite-type).
By homotopy invariance, we have
\begin{flalign}
\mathfrak{Charge}^1(M) \simeq \Co{1}{\Sigma_M}{\bbR}^\star \oplus \Co{1}{\Sigma_M}{\bbZ}^\star\simeq 
\bbR^{c_M} \oplus  \bbT^{c_M}~.
\end{flalign}
As any $\mathsf{Loc}^m$-morphism $f:M\to N$ is in particular an embedding, 
the number of compact components in the Cauchy surfaces $\Sigma_M$ and $\Sigma_N$ cannot decrease,
i.e.\ $c_M \leq c_N$. As a consequence, $\mathfrak{Charge}^1(f)$ is injective and by Theorem \ref{theo:injectivityproperty}
so is $\widehat{\mathfrak{A}}^1(f)$.
\begin{propo}
The quantum field theory functor
$\widehat{\mathfrak{A}}^1(\,\cdot\,): \mathsf{Loc}^2\to C^\ast\mathsf{Alg}$ satisfies the locality axiom.
Thus it is a locally covariant quantum field theory in the sense of \cite{Brunetti:2001dx}.
\end{propo}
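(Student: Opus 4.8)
The plan is to deduce the statement from the general injectivity criterion of Theorem~\ref{theo:injectivityproperty}: for a $\mathsf{Loc}^2$-morphism $f:M\to N$ the $C^\ast\mathsf{Alg}$-morphism $\widehat{\mathfrak{A}}^1(f)$ is injective if and only if the $\mathsf{PAb}$-morphism $\mathfrak{Charge}^1(f):\mathfrak{Charge}^1(M)\to\mathfrak{Charge}^1(N)$ is injective, where $\mathfrak{Charge}^1(\,\cdot\,)=\Co{1}{\,\cdot\,}{\bbR}^\star\oplus\Co{1}{\,\cdot\,}{\bbZ}^\star$ is purely topological. So the whole task reduces to showing that $\mathfrak{Charge}^1(f)$ is injective for \emph{every} $\mathsf{Loc}^2$-morphism $f$; together with the causality and time-slice axioms already established this gives that $\widehat{\mathfrak{A}}^1(\,\cdot\,)$ is a locally covariant quantum field theory in the sense of \cite{Brunetti:2001dx}.

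First I would unravel $\mathfrak{Charge}^1$ on objects. Every $M$ in $\mathsf{Loc}^2$ is a two-dimensional globally hyperbolic spacetime, hence diffeomorphic to $\bbR\times\Sigma_M$ with $\Sigma_M$ a one-dimensional Cauchy surface; by the classification of $1$-manifolds and the finite-type hypothesis, $\Sigma_M\simeq\coprod_{i=1}^{n_M}\bbR\sqcup\coprod_{j=1}^{c_M}\bbT$ with $n_M,c_M<\infty$. Homotopy invariance gives $H_1(M;\bbZ)\simeq\bbZ^{c_M}$, generated by the core circles of the cylinder components $\bbR\times\bbT$ of $M$, and (using the universal coefficient theorem, with $H_0$ free and $H_1$ finitely generated free) $\mathfrak{Charge}^1(M)\simeq\bbR^{c_M}\oplus\bbT^{c_M}$. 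Moreover $\mathfrak{Charge}^1(f)$ is obtained from the homology push-forward $f_\ast:H_1(M;\bbZ)\to H_1(N;\bbZ)$ by applying $\Hom(-,\bbR)$, $\Hom(-,\bbZ)$ and $\Hom(-,\bbT)$ and using naturality of the universal coefficient sequences. Since $\Hom(-,\bbR)$ and $\Hom(-,\bbT)$ are exact while $\Hom(-,\bbZ)$ is exact on short exact sequences of finitely generated free abelian groups exactly when the cokernel is free, a short formal check shows that $\mathfrak{Charge}^1(f)$ is injective if and only if $f_\ast$ is a \emph{split} monomorphism, i.e.\ it carries a basis of $\bbZ^{c_M}$ to part of a basis of $\bbZ^{c_N}$ up to sign.

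The main obstacle, and the only place where the hypotheses on $f$ enter, is establishing this splitting; here two-dimensionality is essential. I would choose $\Sigma_M$ so as to contain the core circle $\gamma_j$ of each cylinder component of $M$; then $f[\Sigma_M]$ is an acausal spacelike $1$-submanifold of $N$ and each $f(\gamma_j)$ is an \emph{embedded} acausal circle in $N$. If some $f(\gamma_j)$ were null-homotopic in $N$, it would bound a disk and its Cauchy development $D_N\big(f(\gamma_j)\big)$ would be contractible; but causal compatibility of $f[M]$ forces the cylinder component of $M$ having $\gamma_j$ as Cauchy surface to lie inside $D_N\big(f(\gamma_j)\big)$, a contradiction. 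Hence $f(\gamma_j)$ is essential, and being an embedded circle it is isotopic to the core of exactly one cylinder component $C$ of $N$ and wraps it once, so $f_\ast[\gamma_j]=\pm$ a basis element of $H_1(N;\bbZ)$. Finally, distinct $f(\gamma_j),f(\gamma_{j'})$ cannot be isotopic to the core of the same $C$: they would be disjoint Cauchy surfaces of $C$, hence (in a globally hyperbolic cylinder) nested, so one would lie in the causal future of the other and therefore in its Cauchy development in $N$, which by causal compatibility meets $f[M]$ in a single cylinder component of $M$ — contradicting that $\gamma_j,\gamma_{j'}$ lie in different components of $M$. This exhibits $f_\ast$ as a split inclusion (in particular $c_M\le c_N$), so $\mathfrak{Charge}^1(f)$, hence $\widehat{\mathfrak{A}}^1(f)$, is injective.

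I would close with a remark explaining why this mechanism is peculiar to $m=2$: in higher dimensions the morphisms of Example~\ref{ex:localityviolation} create nontrivial $1$-cycles by excising a positive-codimension submanifold from a Cauchy surface, whereas excising from a one-dimensional Cauchy surface can only disconnect it, so no essential circles — and hence no obstruction to injectivity — can ever be produced.
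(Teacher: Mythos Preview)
Your overall strategy is exactly the paper's: reduce via Theorem~\ref{theo:injectivityproperty} to injectivity of $\mathfrak{Charge}^1(f)$, compute $\mathfrak{Charge}^1(M)\simeq\bbR^{c_M}\oplus\bbT^{c_M}$ from the classification of one-manifolds, and then argue that the embedding forces each circle component of $\Sigma_M$ to hit a distinct circle component of $\Sigma_N$. The paper is in fact considerably more terse than you at this last step, so your attempt to spell it out is welcome.

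There is, however, a genuine error in your proof that $f(\gamma_j)$ is essential. You assert that if $f(\gamma_j)$ were null-homotopic then $D_N\big(f(\gamma_j)\big)$ would be contractible; but the Cauchy development of an acausal spacelike circle is always globally hyperbolic with that circle as Cauchy surface, hence diffeomorphic to $\bbR\times\bbT$, never contractible --- regardless of whether the circle happens to bound a disk in the ambient $N$. So no contradiction arises from your chain of implications. A clean fix bypasses the null-homotopy detour entirely: fix a splitting $N\simeq\bbR\times\Sigma_N$ with $\partial_t$ timelike and let $\pi:N\to\Sigma_N$ be the projection. Since $f(\gamma_j)$ is spacelike, $\pi|_{f(\gamma_j)}$ is a local diffeomorphism, and acausality makes it injective (two preimages of the same point would lie on a single timelike $\partial_t$-orbit). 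Thus $f(\gamma_j)$ is a graph over a compact open --- hence circle --- component of $\Sigma_N$, so it is a Cauchy surface of the corresponding cylinder component of $N$ and represents $\pm$ a generator of its $H_1$. Your remaining argument (distinct $\gamma_j$'s land in distinct cylinders, since two disjoint Cauchy surfaces of the same cylinder are causally related and causal compatibility forbids this across different components of $f[M]$) then goes through and shows that $f_\ast$ is the required split monomorphism.
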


\section*{Acknowledgements}
We thank Christian B\"ar, Marco Benini, Claudio Dappiaggi, Chris Fewster, Klaus Fredenhagen, 
Florian Hanisch and Igor Khavkine for useful comments on
this work. A.S.\ would like to thank the Department of Mathematics of the University of Potsdam for hospitality during
a visit in May 2014, where parts of this research have been conducted. A.S.\ further thanks
the Erwin Schr\"odinger Institute in Vienna for support and hospitality during the Workshop 
``Algebraic Quantum Field Theory: Its Status and its Future'', where we received useful feedback and suggestions on this work
from the organizers and participants.
The work of C.B.\ is partially supported by the Sonderforschungsbereich ``Raum Zeit Materie'' funded 
by the Deutsche Forschungsgemeinschaft (DFG).
The work of A.S.\ is supported by a DFG Research Fellowship. 
The work of R.J.S.\ is supported in part by the Consolidated Grant ST/J000310/1 from the UK Science and Technology Facilities Council.

%%%%%%%%%%%%%%%%%%%%%%%%%%%%%%%%%%%%%%%%%%%%%%%%
%%%%%%%%%%%%%%%%%%%%%%%%%%%%%%%%%%%%%%%%%%%%%%%%

\appendix

\section{\label{app:frechet}Fr{\'e}chet-Lie group structures}

In this appendix we show how to equip the differential cohomology groups
$\DCo{k}{M}{\bbZ}$, as well as all other Abelian groups in the diagram (\ref{eqn:csdiagrm}),
 with the structure of an Abelian Fr{\'e}chet-Lie group such that 
 the morphisms in this diagram are smooth maps. 
Our notion of functional derivatives in Definition \ref{def:funcder}
then coincides with directional derivatives along tangent vectors corresponding 
to this Abelian Fr{\'e}chet-Lie group structure.
Furthermore, we show that the contravariant functor
$\DCo{k}{\,\cdot\,}{\bbZ} : \mathsf{Loc}^m\to \mathsf{Ab}$ can be promoted to a functor to the category of
Abelian Fr{\'e}chet-Lie groups. 
For the notions of Fr\'echet manifolds and Fr\'echet-Lie groups we refer to \cite{Hamilton:1982}.
\sk

Let $M$ be any smooth manifold that is of finite-type. The Abelian groups in the lower 
horizontal sequence in the diagram (\ref{eqn:csdiagrm}) are finitely generated discrete groups, hence we shall equip them with 
the discrete topology and therewith obtain zero-dimensional Abelian Fr{\'e}chet-Lie groups. All arrows 
in the lower horizontal sequence in (\ref{eqn:csdiagrm}) then become morphisms of Abelian Fr{\'e}chet-Lie groups.
Next, we consider the upper horizontal sequence in (\ref{eqn:csdiagrm}).
We endow the $\bbR$-vector space of differential $p$-forms $\Omega^p(M)$ with the natural $C^\infty$-topology, 
i.e.~the topology of uniform convergence together with all derivatives on any compact set $K \subset M$.
An elegant way to describe the $C^\infty$-topology is by choosing an auxiliary Riemannian metric $g$ on $M$ and a 
countable compact exhaustion $K_0 \subset K_1\subset \cdots \subset K_n \subset K_{n+1} \subset \cdots \subset M$, with $n \in \bbN$. 
We define the family of semi-norms
\begin{flalign}
\Vert \omega\Vert_{l,n} :=\max_{j=0,1,\dots, l} \ \max_{x\in K_n} \, \vert D^j \omega  (x)\vert 
\end{flalign}
for all $l,n\in \bbN$ and $\omega\in \Omega^p(M)$, where $D^j : 
\Omega^p(M) \to \Gamma^\infty\big(M,\bigwedge^p T^\ast M \otimes \bigvee^j T^\ast M\big)$ 
is the symmetrized covariant derivative corresponding to the Riemannian metric $g$ 
and $\vert\,\cdot\,\vert$ is the fibre metric on $ \bigwedge^p T^\ast M \otimes \bigvee^j T^\ast M$ induced by $g$.
The $C^\infty$-topology on $\Omega^p(M)$ is the Fr{\'e}chet topology induced 
by the family of semi-norms $\Vert\,\cdot\,\Vert_{l,n}$ with $l,n\in\bbN$. It is easy to check that
this topology does not depend on the choice of Riemannian metric $g$ and compact exhaustion $K_n$,
as for different choices of $g$ and $K_n$ the corresponding semi-norms can be estimated against each other.
\sk

The subspace of exact forms $\dd \Omega^{k-1}(M) \subseteq \Omega^k(M)$ is a closed subspace in the $C^\infty$-topology
on $\Omega^k(M)$, hence $\dd \Omega^{k-1}(M)$ is a Fr{\'e}chet space in its own right. Forgetting the multiplication by 
scalars, the Abelian group $\dd \Omega^{k-1}(M)$ (with respect to $+$) in the upper right corner of (\ref{eqn:csdiagrm}) 
is an Abelian Fr{\'e}chet-Lie group. Let us now describe the Abelian Fr{\'e}chet-Lie group structure
on $\Omega^{k-1}(M)/\Omega^{k-1}_\bbZ(M)$. For us it will be convenient to provide an 
explicit description by using charts. As model space we shall take the Fr{\'e}chet space
$\Omega^{k-1}(M)/\dd\Omega^{k-2}(M)$. To specify a topology on $\Omega^{k-1}(M)/\Omega^{k-1}_\bbZ(M)$ by 
defining a neighborhood basis around every point $[\eta]\in \Omega^{k-1}(M)/\Omega^{k-1}_\bbZ(M)$, notice that
 $\Omega^{k-1}(M)/\Omega^{k-1}_\bbZ(M)$ is the quotient of $ \Omega^{k-1}(M)/\dd\Omega^{k-2}(M)$ 
 by the finitely generated subgroup $H^{k-1}_\mathrm{free}(M;\bbZ)$.
Let $V_n \subseteq \Omega^{k-1}(M)/\dd\Omega^{k-2}(M)$, with $n\in \bbN$, be a countable
neighborhood basis of $0\in\Omega^{k-1}(M)/\dd\Omega^{k-2}(M) $, which consists of ``small open sets'' in the sense that
$V_n\cap H^{k-1}_\mathrm{free}(M;\bbZ) = \{0\}$ for all $n\in \bbN$. 
Thus the quotient map $q : \Omega^{k-1}(M)/\dd\Omega^{k-2}(M) \to \Omega^{k-1}(M)/\Omega_\bbZ^{k-1}(M)$ 
is injective when restricted to $V_n$. 
We may assume without loss of generality that $V_n$ is symmetric, i.e.\ $-V_n = V_n$.
We now define a neighborhood basis around every point $[\eta]\in \Omega^{k-1}(M)/\Omega^{k-1}_\bbZ(M)$ by setting
\begin{flalign}
U_{[\eta],n} := [\eta] + q\big[V_n\big] \ \subseteq \ \frac{\Omega^{k-1}(M)}{\Omega_\bbZ^{k-1}(M)} ~. 
\end{flalign}
The topology  induced by the basis $\big\{U_{[\eta],n} : 
[\eta] \in \Omega^{k-1}(M)/\Omega_\bbZ^{k-1}(M)~,~n\in\bbN\big\}$ makes the quotient
 $\Omega^{k-1}(M)/\Omega_\bbZ^{k-1}(M)$ into a Fr{\'e}chet manifold:
as charts around $[\eta]$ we may take the maps
\begin{flalign}
\psi_{[\eta],n} \,:\, U_{[\eta],n} \ \longrightarrow \ V_n \ \subseteq \ \frac{\Omega^{k-1}(M)}{\dd\Omega^{k-2}(M)}~, \qquad [\eta] 
+ q([\omega]) \ \longmapsto \ [\omega]~.
\end{flalign}
The change of coordinates is then given by affine transformations
$ V\to [\omega^\prime\, ] + V\,,~[\omega]\mapsto [\omega^\prime\, ] + [\omega]$ on open subsets 
$V\subseteq \Omega^{k-1}(M)/\dd\Omega^{k-2}(M)$, which are smooth maps.
Fixing any point $[\eta^\prime\, ]\in \Omega^{k-1}(M)/\Omega^{k-1}_\bbZ(M)$,
consider the map 
$[\eta^\prime\, ] + (\,\cdot\,) : \Omega^{k-1}(M)/\Omega^{k-1}_\bbZ(M) \to \Omega^{k-1}(M)/\Omega^{k-1}_\bbZ(M) $.
The inverse image of any open neighborhood $U_{[\eta],n}$ is given by
\begin{flalign}
\big([\eta^\prime\, ] + (\,\cdot\,) \big)^{-1}\big(U_{[\eta],n}\big) = [\eta]-[\eta^\prime\, ] + q\big[V_n\big] = U_{[\eta]-[\eta^\prime\, ],n}~,
\end{flalign}
which is open.
Hence $+$ is continuous. 
Analogously, we see that the group inverse is continuous
since the inverse image of any $U_{[\eta],n}$ is $U_{-[\eta],n}$. 
To see that the group operations are also smooth, notice that for any $U_{[\eta],n}$ 
the map $[\eta^{\prime}\, ] + (\,\cdot\,) : U_{[\eta] - [\eta^\prime\, ], n}\to U_{[\eta],n} $
induces the identity map $\id_{V_n} : V_n\to V_n$ in the charts $\psi_{[\eta]-[\eta^\prime\, ],n}$ and $\psi_{[\eta],n}$.
Similarly the inverse $-: U_{-[\eta],n} \to U_{[\eta],n}$ induces minus the identity map
 $-\id_{V_n} : V_n\to V_n$ in the charts $\psi_{-[\eta],n}$ and $\psi_{[\eta],n}$.
Hence $\Omega^{k-1}(M)/\Omega^{k-1}_\bbZ(M)$ is an Abelian Fr{\'e}chet-Lie group.
Since $\Omega^{k-1}(M)/\Omega^{k-1}_\bbZ(M)$ carries the quotient topology induced from the Fr\'echet space
 $\Omega^{k-1}(M)/\dd\Omega^{k-2}(M)$ and the exterior differential $\dd: \Omega^{k-1}(M) \to \dd\Omega^{k-1}(M)$
  is smooth, the same holds for the induced map $\dd: \Omega^{k-1}(M)/\Omega^{k-1}_\bbZ(M) \to \dd\Omega^{k-1}(M)$.
\sk

The Abelian Fr{\'e}chet-Lie group structure on $H^{k-1}(M;\bbR)/H^{k-1}_\mathrm{free}(M;\bbZ)$ is modeled
on the subspace $H^{k-1}(M;\bbR) \simeq \Omega^{k-1}_\dd(M)/\dd\Omega^{k-2}(M) 
\subseteq \Omega^{k-1}(M)/\dd\Omega^{k-2}(M) $ equipped with the subspace topology.
Explicitly, we define charts on 
$H^{k-1}(M;\bbR)/H^{k-1}_\mathrm{free}(M;\bbZ)$
by noticing that
\begin{flalign}
\frac{H^{k-1}(M;\bbR)}{H^{k-1}_\mathrm{free}(M;\bbZ)} \simeq \frac{\Omega^{k-1}_\dd(M)}{\Omega^{k-1}_\bbZ(M)} \ \subseteq \ \frac{\Omega^{k-1}(M)}{\Omega^{k-1}_\bbZ(M)}
\end{flalign}
and setting
\begin{flalign}
\tilde\psi_{[\eta],n} \, : \, \Big(\, U_{[\eta],n}\cap\frac{\Omega^{k-1}_\dd(M)}{\Omega^{k-1}_\bbZ(M)}\, \Big) \ 
& \longrightarrow \ \Big(\, V_n\cap \frac{\Omega_\dd^{k-1}(M)}{\dd\Omega^{k-2}(M)}\, \Big)~, \nonumber\\ \qquad [\eta] + q([\omega]) \ & \longmapsto \ [\omega]
\end{flalign}
for any $[\eta]\in \Omega^{k-1}_\dd(M)/\Omega^{k-1}_\bbZ(M)$ and $n\in\bbN$. 
This defines on  $H^{k-1}(M;\bbR)/H^{k-1}_\mathrm{free}(M;\bbZ)$ an Abelian Fr{\'e}chet-Lie group structure.
The smooth inclusion $\Omega^{k-1}_\dd(M) \hookrightarrow \Omega^{k-1}(M)$ of Fr\'echet spaces descends to 
a smooth inclusion $H^{k-1}(M;\bbR)/H^{k-1}_\mathrm{free}(M;\bbZ)\hookrightarrow \Omega^{k-1}(M)/\Omega^{k-1}_\bbZ(M)$ 
of Abelian Fr\'echet-Lie groups. 
Hence we have shown that all arrows in the upper horizontal sequence in (\ref{eqn:csdiagrm}) are morphisms
of Abelian Fr{\'e}chet-Lie groups.
\sk

From the Abelian Fr{\'e}chet-Lie group structure on the groups in the lower and upper
 horizontal sequence in  (\ref{eqn:csdiagrm}) we can derive an Abelian Fr{\'e}chet-Lie group 
 structure on the groups in the middle horizontal sequence.
Our strategy makes use of the vertical exact sequences in (\ref{eqn:csdiagrm}). As the construction 
is the same for all three vertical sequences, it is enough to discuss the example of the middle vertical sequence. 
As $\Char: \DCo{k}{M}{\bbZ} \to \Co{k}{M}{\bbZ}$ is an Abelian group homomorphism to a discrete Abelian Fr{\'e}chet-Lie group,
 the connected components of $\DCo{k}{M}{\bbZ}$ are precisely the fibers of the characteristic class map $\Char$.
Thus we shall take as model space for $\DCo{k}{M}{\bbZ}$ the same Fr{\'e}chet space $\Omega^{k-1}(M)/\dd\Omega^{k-2}(M)$ 
as for the group $\Omega^{k-1}(M)/\Omega^{k-1}_\bbZ(M)$ describing the
kernel of $\Char$.  
Explicitly, we take the topology on $\DCo{k}{M}{\bbZ}$ which is generated by the basis
\begin{flalign}
\widehat{U}_{h,n} := h + (\iota \circ q)\big[V_n\big] \ \subseteq \ \DCo{k}{M}{\bbZ} \ ,
\end{flalign}
for all $h\in \DCo{k}{M}{\bbZ} $ and $n\in \bbN$. The charts are given by
\begin{flalign}
\widehat{\psi}_{h,n} \,:\, \widehat{U}_{h,n} \ \longrightarrow \ V_n~, \qquad h + (\iota\circ q)([\omega]) \ \longmapsto \ [\omega]~,
\end{flalign}
and the proof that the change of coordinates is smooth is the same as above.
By construction of the topology on $\DCo{k}{M}{\bbZ}$, the group operations are smooth, 
since it suffices to consider them along the connected components and there the proof reduces
to the one above for the smoothness of the group operations on $\Omega^{k-1}(M)/\Omega^{k-1}_\bbZ(M)$. 
Again by construction, the topological trivialization $\iota: \Omega^{k-1}(M)/\Omega^{k-1}_\bbZ(M) \to \DCo{k}{M}{\bbZ}$ 
is a diffeomorphism onto the connected component of ${\bf 0} \in \DCo{k}{M}{\bbZ}$.
The characteristic class $\Char:\DCo{k}{M}{\bbZ} \to H^k(M;\bbZ)$ is smooth, since $H^k(M;\bbZ)$ carries the discrete topology.
Likewise, the inclusions $H^{k-1}(M;\bbR) /H^{k-1}_\mathrm{free}(M;\bbZ) \hookrightarrow H^{k-1}(M;\bbT)$ 
and $\dd\Omega^{k-1}(M) \hookrightarrow \Omega^k_\bbZ(M)$, as well as the projections
$H^{k-1}(M;\bbT) \to H^{k}_\mathrm{tor}(M;\bbZ)$ and $\Omega^k_\bbZ(M)\to H^{k}_\mathrm{free}(M;\bbZ)$, in the left 
and right vertical sequences in (\ref{eqn:csdiagrm}) are smooth maps.
\sk

It remains to show that the arrows in the middle horizontal sequence of (\ref{eqn:csdiagrm}) are smooth:
A basis for the topology on $\Omega^k_{\bbZ}(M)$ is given by sets of the form $\omega + V$, where $\omega \in \Omega^k_{\bbZ}(M)$ and
the sets $V$ are taken from an open neighborhood basis of $0\in \dd\Omega^{k-1}(M)$. 
The inverse image of an open set $\omega + V$ under $\Curv: \DCo{k}{M}{\bbZ}\to \Omega^k_\bbZ(M)$ is the union over all 
$h \in \Curv^{-1}(\omega)$ of the open sets $h + \iota[\dd^{-1}(V)]$, hence it is 
open and $\Curv$ is continuous. Furthermore, the curvature $\Curv$ is smooth with differential 
$D_h\Curv([\omega]) = \dd\omega$, where $h\in \DCo{k}{M}{\bbZ} $ and
 $[\omega] \in \Omega^{k-1}(M)/\dd\Omega^{k-2}(M) = T_h \DCo{k}{M}{\bbZ}$ is a tangent vector.
By a similar argument, the group homomorphism $\kappa: H^{k-1}(M;\bbT)\to \DCo{k}{M}{\bbZ}$ is 
smooth with differential given by the inclusion 
\begin{flalign}
T_{[\phi]}\Big(\, \frac{H^{k-1}(M;\bbR)}{H^{k-1}_\mathrm{free}(M;\bbZ)}\, \Big) = H^{k-1}(M;\bbR) \ \hookrightarrow \ \frac{\Omega^{k-1}(M)}{\dd\Omega^{k-2}(M)} = T_{\kappa([\phi])}\DCo{k}{M}{\bbZ} ~ ,
\end{flalign}
where $[\phi] \in H^{k-1}(M;\bbR) / H^{k-1}_\mathrm{free}(M;\bbZ)$.
\sk

With respect to the Abelian Fr{\'e}chet-Lie group structure developed above, the tangent space at a point $h\in \DCo{k}{M}{\bbZ}$
is given by the model space $\Omega^{k-1}(M)/\dd\Omega^{k-2}(M)$. The functional derivative
given in Definition \ref{def:funcder} is the directional derivative along tangent vectors.
\sk

It remains to show that the contravariant functor $\DCo{k}{\,\cdot\,}{\bbZ} : \mathsf{Loc}^m\to \mathsf{Ab}$ 
can be promoted to a functor with values in the category of Abelian Fr{\'e}chet-Lie groups. For any smooth map 
$f:M\to N$ the pull-back of differential forms $f^\ast: \Omega^{k-1}(N)\to \Omega^{k-1}(M)$ is 
smooth with respect to the Fr{\'e}chet space structure on differential forms. 
The same holds true for the induced map on the quotients $f^\ast: \Omega^{k-1}(N)/\dd\Omega^{k-2}(N) 
\to \Omega^{k-1}(M)/\dd\Omega^{k-2}(M)$.
Since the characteristic class is a natural transformation $\Char: \DCo{k}{\,\cdot\,}{\bbZ} \Rightarrow H^k(\,\cdot\,;\bbZ)$, 
the argument for smoothness of the pull-back directly carries over from the groups in the upper row of the 
diagram (\ref{eqn:csdiagrm}) to the middle row:
As above, by construction of the topology on the differential cohomology groups it suffices to consider 
$\DCo{k}{f}{\bbZ}:\DCo{k}{N}{\bbZ} \to \DCo{k}{M}{\bbZ}$ on the connected components, i.e.~along the 
fibers of the characteristic class.
Thus $\DCo{k}{f}{\bbZ}:\DCo{k}{N}{\bbZ} \to \DCo{k}{M}{\bbZ}$ is a smooth map with differential
\begin{align}
D_h\DCo{k}{f}{\bbZ} \,:\,  T_h\DCo{k}{N}{\bbZ} = \frac{\Omega^{k-1}(N)}{\dd\Omega^{k-2}(N)} 
& \ \longrightarrow \ T_{\DCo{k}{f}{\bbZ}(h)}\DCo{k}{M}{\bbZ} = \frac{\Omega^{k-1}(M)}{\dd\Omega^{k-2}(M)} \ , \nonumber \\
[\omega] 
& \ \longmapsto \ f^\ast\big([\omega]\big) \ . 
\end{align}
It follows that $\DCo{k}{\,\cdot\,}{\bbZ} : \mathsf{Loc}^m\to \mathsf{Ab}$ extends to a contravariant functor 
to the category of Abelian Fr\'echet-Lie groups, and all natural transformations in the definition of a differential 
cohomology theory are natural transformations of functors in this sense.

\subsection{Isomorphism types}

We shall now identify the isomorphism types of the Fr\'echet-Lie groups 
$\Omega^{k-1}(M) / \Omega^{k-1}_\bbZ(M)$ and $\widehat H^k(M;\bbZ)$ by splitting the rows in the diagram \eqref{eqn:csdiagrm}.
The lower row splits since $H^k_\mathrm{free}(M;\bbZ)$ is a free Abelian group and all groups in the lower row carry the discrete topology.  
By construction, all rows in \eqref{eqn:csdiagrm} are central extensions of Abelian Fr\'echet-Lie groups.
In particular, they define principal bundles over the groups in the right column. In the following we denote the $k$-th Betti number
of $M$ by $b_k$ with $k\in\bbN$; then all $b_k<\infty$ by the assumption that $M$ is of finite-type.
\sk

For the upper row, notice that $\dd\Omega^{k-1}(M)$ is contractible, hence the corresponding torus bundle is topologically trivial.
In fact, it is trivial as a central extension, i.e.~the Fr\'echet-Lie group $\Omega^{k-1}(M) / \Omega^{k-1}_\bbZ(M)$
is (non-canonically isomorphic to) the topological direct sum of the $b_{k-1}$-torus
 $H^{k-1}(M;\bbR)/H^{k-1}_\mathrm{free}(M;\bbZ)$ 
 with the Fr\'echet space $\dd\Omega^{k-1}(M)$:
Any choice of forms $\omega^1,\ldots,\omega^{b_{k-1}} \in \Omega_\dd^{k-1}(M)$ whose de Rham 
classes form a $\bbZ$-basis of $H^{k-1}_\mathrm{free}(M;\bbZ)$ provides us with (non-canonical) topological splittings
\begin{subequations}
\begin{flalign}
\Omega^{k-1}_\dd(M) 
&= \mathrm{span}_\bbR \big\{\omega^1,\ldots,\omega^{b_{k-1}}\big\} \oplus \dd\Omega^{k-2}(M) \ , \\[4pt]
\Omega^{k-1}_\bbZ(M) 
&= \mathrm{span}_\bbZ \big\{\omega^1,\ldots,\omega^{b_{k-1}}\big\} \oplus \dd\Omega^{k-2}(M) \ , \label{eq:OmegaZsplit}\\[4pt]
\Omega^{k-1}(M) 
&= \mathrm{span}_\bbR \big\{\omega^1,\ldots,\omega^{b_{k-1}}\big\} \oplus F^{k-1}(M) \ . \label{eq:Omegasplit}
\end{flalign}
\end{subequations}
Here $F^{k-1}(M) \subseteq \Omega^{k-1}(M)$ is a topological complement of the subspace spanned by the 
$k{-}1$-forms $\omega^1,\ldots,\omega^{b_{k-1}}$.
By a Hahn-Banach type argument, we may choose the complement $F^{k-1}(M)$ such that $\dd \Omega^{k-2}(M) \subseteq F^{k-1}(M)$:
Taking de Rham cohomology classes yields a continuous isomorphism 
\begin{flalign}
[\,\cdot\,]:\mathrm{span}_\bbR \big\{\omega^1,\ldots,\omega^{b_{k-1}}\big\} \to H^{k-1}(M;\bbR)~.
\end{flalign}
Thus we obtain a continuous projection $p:\Omega^{k-1}_\dd(M) \to \mathrm{span}_\bbR \big\{\omega^1,\ldots,\omega^{b_{k-1}}\big\}$ with kernel $\dd\Omega^{k-2}(M)$.
Denote by $\mathrm{pr}_j: \mathrm{span}_\bbR\big\{\omega^1,\ldots,\omega^{b_{k-1}} \big\} \to \bbR \omega^j$ the projection to the $j$-th component.
Then the continuous linear functionals $p_j := \mathrm{pr}_j \circ p: \Omega^{k-1}_\dd(M) \to \bbR \omega^j$, 
with $j \in \{1,\ldots,b_{k-1}\}$, have continuous extensions to $\Omega^{k-1}(M)$, and so does their direct sum
 $p= p_1 \oplus \cdots \oplus p_{b_{k-1}}:\Omega^{k-1}_\dd(M) \to \mathrm{span}_\bbR \big\{\omega^1,\ldots,\omega^{b_{k-1}}\big\}$. 
Then put $F^{k-1}(M) := \ker(p)$ to obtain a decomposition of $\Omega^{k-1}(M)$ as claimed.
By construction, the exterior differential induces a continuous isomorphism of Fr\'echet spaces $\dd:F^{k-1}(M) /\dd\Omega^{k-2}(M) \to \dd\Omega^{k-1}(M)$.
This yields the decomposition of Abelian Fr\'echet-Lie groups
\begin{flalign}\label{eq:Omegaquotientsplit}
\frac{\Omega^{k-1}(M)}{\Omega^{k-1}_\bbZ(M)} 
=
\frac{\mathrm{span}_\bbR \big\{\omega^1,\ldots,\omega^{b_{k-1}}\big\}}{\mathrm{span}_\bbZ \big\{\omega^1,\ldots,\omega^{b_{k-1}}\big\}} \oplus \frac{F^{k-1}(M)}{\dd\Omega^{k-2}(M)} \
\xrightarrow{[\, \cdot \,] \oplus \dd} \ 
\frac{H^{k-1}_{\phantom{\mathrm{free}}}(M;\bbR)}{H^{k-1}_\mathrm{free}(M;\bbZ)} \oplus \dd\Omega^{k-1}(M) \ .
\end{flalign}
Thus $\Omega^{k-1}(M)/\Omega^{k-1}_\bbZ(M)$ is the direct sum in the category of Abelian 
Fr\'echet-Lie groups of the $b_{k-1}$-torus $H^{k-1}(M;\bbR)/H^{k-1}_\mathrm{free}(M;\bbZ)$ and the
 additive group of the Fr\'echet space $\dd\Omega^{k-1}(M)$, as claimed.
\sk

For the middle row in the diagram (\ref{eqn:csdiagrm}), since the connected components of $\Omega^k_\bbZ(M)$ are contractible, the corresponding principal 
$H^{k-1}(M;\bbT)$-bundle $\Curv: \DCo{k}{M}{\bbZ} \to \Omega^k_\bbZ(M)$ is topologically trivial.
We can also split the middle exact sequence in the diagram (\ref{eqn:csdiagrm}) as a central extension:
Choose differential forms $\vartheta^{1},\ldots,\vartheta^{b_k} \in \Omega^k_\dd(M)$ whose de Rham 
classes form a $\bbZ$-module basis of $H^k_\mathrm{free}(M;\bbZ)$; this yields a splitting of $\Omega^k_\bbZ(M)$ analogous to the one in \eqref{eq:OmegaZsplit}. 
Thus we may write any form $\mu \in \Omega^k_\bbZ(M)$ as $\mu = \sum_{i=1}^{b_k}\, a_i \,\vartheta^i + \dd \nu$, 
where $a_i \in \bbZ$ and $\nu \in \Omega^{k-1}(M)$. 
Now choose elements $h_{\vartheta^i} \in \widehat H^k(M;\bbZ)$ with curvature $\Curv(h_{\vartheta^i})=\vartheta^i$,
for all $i=1,\dots,b_k$.
By the splitting \eqref{eq:Omegaquotientsplit} we may choose a Fr\'echet-Lie group homomorphism
 $\sigma^\prime : \dd\Omega^{k-1}(M) \to \Omega^{k-1}(M)/\Omega^{k-1}_\bbZ(M)$ such that
  $\dd \circ \sigma^\prime = \mathrm{id}_{\dd\Omega^{k-1}(M)}$.
Now we define a splitting of the middle row in the diagram (\ref{eqn:csdiagrm}) by setting
\begin{flalign}
\nn \sigma \,:\, \Omega^k_\bbZ(M) = \mathrm{span}_\bbZ\big\{\vartheta^1,\ldots,\vartheta^{b_k} \big\} \oplus \dd\Omega^{k-1}(M) 
& \ \longrightarrow \ \widehat H^k(M;\bbZ)~, \\
\sum_{i=1}^{b_k}\, a_i \, \vartheta^i + \dd\nu
& \ \longmapsto \ \sum_{i=1}^{b_k}\, a_i \, h_{\vartheta^i} + \iota\big(\sigma^\prime(\dd\nu)\big)~.
\end{flalign}
By construction, $\sigma$ is a homomorphism of Abelian Fr\'echet-Lie groups, i.e.~it is a smooth group homomorphism.
Moreover, for any form $\mu = \sum_{i=1}^{b_k}\, a_i \, \vartheta^i + \dd\nu \in \Omega^k_\bbZ(M)$
we have $\Curv(\sigma(\mu)) = \sum_{i=1}^{b_k} \, a_i \, \Curv(h_{\vartheta^i}) + \dd\nu = \mu$.
Thus $\sigma$ is a splitting of the middle horizontal sequence of Abelian Fr{\'e}chet-Lie groups
in the diagram (\ref{eqn:csdiagrm}), and we have obtained a (non-canonical) decomposition
\begin{flalign}
\widehat H^k(M;\bbZ) 
\simeq H^{k-1}(M;\bbT) \oplus \Omega^{k}_\bbZ(M) 
\end{flalign}
of Abelian Fr\'echet-Lie groups.

%%%%%%%%%%%%%%%%%%%%%%%%%%%%%%%%%%%%%%%%%%%%%%%%
%%%%%%%%%%%%%%%%%%%%%%%%%%%%%%%%%%%%%%%%%%%%%%%%

\end{document}